\providecommand{\U}[1]{\protect\rule{.1in}{.1in}}
\journalname{Quantum Information Processing}
\begin{document}

\title{The quantum dynamic capacity formula of a quantum channel\thanks{M.M.W. acknowledges support from the MDEIE (Qu\'{e}bec)
PSR-SIIRI international collaboration grant.}}
\author{Mark M. Wilde
\and Min-Hsiu Hsieh}
\institute{Mark M. Wilde is a postdoctoral fellow with the School of Computer Science, McGill University, Montreal, Quebec, Canada H3A 2A7 \email{mark.wilde@mcgill.ca}
\and Min-Hsiu Hsieh was with the
ERATO-SORST Quantum Computation and Information Project, Japan Science and Technology Agency 5-28-3, Hongo, Bunkyo-ku, Tokyo, Japan during the development of this paper. He is now with the Statistical Laboratory, University of Cambridge, Wilberforce Road, Cambridge, UK CB3 0WB \email{minhsiuh@gmail.com}}
\date{\today}
\maketitle

\begin{abstract}
The dynamic capacity theorem characterizes the reliable communication rates of
a quantum channel when combined with the noiseless resources of classical
communication, quantum communication, and entanglement. In prior work, we
proved the converse part of this theorem by making contact with many previous
results in the quantum Shannon theory literature. In this work, we prove the
theorem with an \textquotedblleft ab initio\textquotedblright\ approach, using
only the most basic tools in the quantum information theorist's toolkit: the
Alicki-Fannes' inequality, the chain rule for quantum mutual information,
elementary properties of quantum entropy, and the quantum data processing
inequality. The result is a simplified proof of the theorem that should be
more accessible to those unfamiliar with the quantum Shannon theory
literature. We also demonstrate that the \textquotedblleft quantum dynamic
capacity formula\textquotedblright\ characterizes the Pareto optimal trade-off
surface for the full dynamic capacity region. Additivity of this formula
simplifies the computation of the trade-off surface,
and we prove that its additivity holds
for the quantum Hadamard channels and the quantum erasure channel. We then
determine exact expressions for and plot the dynamic capacity region of the
quantum dephasing channel, an example from the Hadamard class, and the quantum
erasure channel.
\PACS{03.67.Hk \and 03.67.Pp}
\end{abstract}

\section{Introduction}
Quantum Shannon theory is the study of the transmission capabilities of a
noisy resource when a large number of independent and identically distributed
(IID)\ copies of the resource are available.\footnote{The first few chapters
of Yard's thesis provide an introductory and accessible overview of the
subject \cite{Yard05a}.} An important task in this area of study is to
determine how noiseless resources interact with a noisy quantum channel. That
is, we would like to know the reliable communication rates if a sender can use
noiseless resources in addition to a quantum channel to generate other
noiseless resources. In prior work, we have studied one such setting, where a
sender and receiver generate or consume classical communication, quantum
communication, and entanglement along with the consumption of a noisy quantum
channel~\cite{HW08GFP,HW09book,HW09T3}. The result of these efforts was a
characterization of the \textquotedblleft dynamic capacity
region\textquotedblright\ of a noisy quantum channel.\footnote{``Dynamic'' in this context
and throughout this paper
refers to the fact that a noisy channel is a dynamic resource, as opposed to a ``static''
resource such as a shared bipartite state.}

One of the shortcomings of the characterization of the dynamic capacity region
in Refs.~\cite{HW08GFP,HW09book,HW09T3} is that its computation for a general
quantum channel requires regularized formulas
(which are over an infinite number of channel uses). Though, later, Br\'{a}dler~\textit{et~al}%
.~demonstrated that the quantum Hadamard channels~\cite{KMNR07}\ are a natural
class of channels for which the computation of one octant of the region
simplifies~\cite{BHTW10} because the structure of these channels appears to be
\textquotedblleft just right\textquotedblright\ for this to hold.
The proof considers special quadrants of the dynamic capacity region and
employs several \textit{reductio ad absurdum} arguments to characterize one
octant of the full region. The work of Br\'{a}dler~\textit{et~al}.~showed that
we can claim a complete understanding of the abilities of an
entanglement-assisted quantum Hadamard channel for the transmission of
classical and quantum information.

The aim of the present work is two-fold:\ 1)\ to simplify the proof of the converse part of the
dynamic capacity theorem in Ref.~\cite{HW09T3} and 2) to show that there is
one important formula to consider for any task involving
noiseless classical communication, noiseless quantum communication, noiseless entanglement, and 
many uses of a noisy quantum channel. Our previous proof of the converse
in Ref.~\cite{HW09T3} relies extensively on prior
literature in quantum Shannon theory, perhaps making our ideas inaccessible to an
audience unfamiliar with this increasingly \textquotedblleft tangled
web.\textquotedblright\ Here, we apply an \textquotedblleft ab
initio\textquotedblright\ approach to the proof, using only four tools from
quantum information theory: the Alicki-Fannes'
inequality~\cite{0305-4470-37-5-L01}, the chain rule for quantum mutual
information, elementary properties of quantum entropy, and the quantum data
processing inequality~\cite{SN96}. As such, the proof here should be more
accessible to a broader audience and more streamlined because it is
\textquotedblleft disentangled\textquotedblright\ from the complex web that
the quantum Shannon theory literature has become.

We also propose a new formula that characterizes any task involving classical
communication, quantum communication, and entanglement in dynamic quantum
Shannon theory.\footnote{``Dynamic quantum Shannon theory'' refers to the setting in which
a sender and a receiver have access to many uses of a quantum channel connecting them.}
For this reason, we call this formula the \textquotedblleft
quantum dynamic capacity formula.\textquotedblright\ In particular, additivity
of this formula implies a complete understanding of any task in dynamic
quantum Shannon theory involving the three fundamental noiseless resources. We
find a simplified, direct proof that additivity holds for the Hadamard class
of channels and for a quantum erasure channel. The additivity proof for the quantum erasure
channel is different from that of the Hadamard channel---it exploits the
particular structure of the quantum erasure channel.

We structure this work as follows. The next section reviews the minimal tools
from quantum information theory necessary to understand the rest of the paper.
Section~\ref{sec:info-proc-task} outlines the information processing task considered in this paper,
defines what it means for a rate triple to be achievable, and provides a definition
of the dynamic capacity region of a quantum channel.
Section~\ref{sec:capacity-theorem}\ states the dynamic capacity theorem, and
Section~\ref{sec:achievable}\ contains a brief review of the proof of the
achievability part of the theorem. The main protocol for
proving this part is the \textquotedblleft classically-enhanced father
protocol,\textquotedblright\ whose detailed proof we gave in
Ref.~\cite{HW08GFP}. Section~\ref{sec:converse}\ contains the converse proof,
where we proceed with the minimal tools stated above. We then show in
Section~\ref{sec:dynamic-cap-formula}\ how the quantum dynamic capacity
formula characterizes the optimization task for computing the Pareto optimal
trade-off surface for the dynamic capacity region.
Section~\ref{sec:single-letter-hadamard} proves that the quantum dynamic
capacity formula is additive for the Hadamard
class of channels, and in Section~\ref{sec:dephasing}, we directly
compute and plot the region for a qubit dephasing channel, which is a channel
that falls within the Hadamard class.
Section~\ref{sec:single-letter-erasure} then shows that the quantum
dynamic capacity is additive for the quantum erasure channel, and
we compute and plot the region for this channel also.
Finally, we conclude with a brief discussion.

\section{Definitions and notation}

\label{sec:def-not}We first establish some definitions and notation that we
employ throughout the paper and review a few important properties of quantum
entropy. Let $\Phi^{AB}$ denote the maximally entangled state shared between two
parties:%
\[
\left\vert \Phi\right\rangle ^{AB}\equiv\frac{1}{\sqrt{D}}\sum_{i=1}%
^{D}\left\vert i\right\rangle ^{A}\left\vert i\right\rangle ^{B}.
\]
An ebit corresponds to the special case where $D=2$. Let $\overline{\Phi
}^{M_{A}M_{B}}$ denote the maximally correlated state shared between two
parties:%
\[
\overline{\Phi}^{M_{A}M_{B}}\equiv\frac{1}{D}\sum_{i=1}^{D}\left\vert
i\right\rangle \left\langle i\right\vert ^{M_{A}}\otimes\left\vert
i\right\rangle \left\langle i\right\vert ^{M_{B}}.
\]
A common randomness bit corresponds to the special case where $D=2$.

A completely-positive trace-preserving (CPTP)\ map $\mathcal{N}^{A^{\prime
}\rightarrow B}$\ is the most general map we consider that maps from a quantum
system $A^{\prime}$\ to another quantum system $B$ \cite{NC00}. It acts as follows on any
density operator $\rho$:%
\[
\mathcal{N}^{A^{\prime}\rightarrow B}\left(  \rho\right)  =\sum_{k}A_{k}\rho
A_{k}^{\dag},
\]
where the operators $A_{k}$ satisfy the condition $\sum_{k}A_{k}^{\dag}%
A_{k}=I$. A quantum channel admits an isometric extension $U_{\mathcal{N}%
}^{A^{\prime}\rightarrow BE}$, which is a unitary embedding into a larger
Hilbert space. One recovers the original channel by taking a partial trace
over the \textquotedblleft environment\textquotedblright\ system $E$.

We consider a three-dimensional capacity region throughout this work (as in
Ref.~\cite{HW09T3}), whose points $\left(  C,Q,E\right)  $\ correspond to
rates of classical communication, quantum communication, and entanglement
generation/consumption, respectively. For example, the teleportation protocol
consumes two classical bits and an ebit to generate a noiseless qubit \cite{BBCJPW93}. Thus,
we write it as the following rate triple:%
\[
\left(  -2,1,-1\right)  ,
\]
where we indicate consumption of a resource with a negative sign and
generation of a resource with a positive sign. Also, the super-dense coding
protocol consumes a noiseless qubit channel and an ebit to generate two
classical bits \cite{BW92}. It corresponds to the rate triple:%
\[
\left(  2,-1,-1\right)  .
\]
Another protocol that we exploit is entanglement distribution. It uses a
noiseless qubit channel to establish a noiseless ebit and corresponds to%
\[
\left(  0,-1,1\right)  .
\]

The entropy $H\left(  A\right)  _{\rho}$ of a density operator $\rho^{A}$ on
some quantum system $A$ is as follows \cite{NC00}:%
\[
H\left(  A\right)  _{\rho}\equiv-\text{Tr}\left\{  \rho^{A}\log\rho
^{A}\right\}  ,
\]
where the logarithm is base two. The entropy can never exceed the logarithm of
the dimension of $A$. The quantum mutual information $I\left(  A;B\right)
_{\sigma}$\ of a bipartite state $\sigma^{AB}$ is as follows:%
\[
I\left(  A;B\right)  _{\sigma}\equiv H\left(  A\right)  _{\sigma}+H\left(
B\right)  _{\sigma}-H\left(  AB\right)  _{\sigma}.
\]
Observe that the quantum mutual information $I\left(  M_{A};M_{B}\right)
$\ of the state $\overline{\Phi}^{M_{A}M_{B}}$ is equal to $\log D$ bits, and
the quantum mutual information $I\left(  A;B\right)  $\ of the state
$\Phi^{AB}$ is equal to $2\log D$ qubits. If one system is classical, then the
quantum mutual information can never be greater than the logarithm of the
dimension of the classical system. If both systems are quantum, then the
quantum mutual information can never be greater than twice the minimum of the
logarithms of the dimensions of the two quantum systems. The quantum mutual
information vanishes if the bipartite state $\sigma^{AB}$\ is a product state.
The conditional quantum mutual information for three quantum systems $A$, $B$,
and $C$ is as follows:%
\[
I\left(  A;B|C\right)  \equiv H\left(  AC\right)  +H\left(  BC\right)
-H\left(  C\right)  -H\left(  ABC\right)  ,
\]
and is always non-negative due to strong subadditivity~\cite{LR73}. The
coherent information $I\left(  A\rangle B\right)  _{\sigma}$\ of a state
$\sigma^{AB}$ is as follows \cite{SN96}:%
\[
I\left(  A\rangle B\right)  _{\sigma}\equiv H\left(  B\right)  _{\sigma
}-H\left(  AB\right)  _{\sigma}.
\]
Observe that the coherent information $I\left(  A\rangle B\right)  $\ of the
state $\Phi^{AB}$ is equal to $\log D$ qubits. The chain rule for quantum
mutual information gives the following relation for any three quantum
systems $A$, $B$, and $C$:%
\begin{align}
I\left(  AB;C\right)   &  =I\left(  B;C\right)  +I\left(  A;C|B\right)
\label{eq:mut-chain-rule}\\
&  =I\left(  A;C\right)  +I\left(  B;C|A\right)  .\nonumber
\end{align}

A classical-quantum state $\sigma^{XABE}$\ of the following form plays an
important role throughout this paper:%
\[
\sigma^{XABE}\equiv\sum_{x}p_{X}\left(  x\right)  \left\vert x\right\rangle
\left\langle x\right\vert ^{X}\otimes U_{\mathcal{N}}^{A^{\prime}\rightarrow
BE}(\phi_{x}^{AA^{\prime}}),
\]
where the states $\phi_{x}^{AA^{\prime}}$ are pure bipartite states and
$U_{\mathcal{N}}^{A^{\prime}\rightarrow BE}$ is the isometric extension of
some noisy channel $\mathcal{N}^{A^{\prime}\rightarrow B}$. Applying the above
chain rule gives the following relation:%
\begin{equation}
I\left(  AX;B\right)  _{\sigma}=I\left(  X;B\right)  _{\sigma}+I\left(
A;B|X\right)  _{\sigma} . \label{eq:entropy-1}%
\end{equation}
Additionally, one can readily check that the following relation holds for a
state of the above form:%
\begin{equation}
I\left(  A\rangle BX\right)  _{\sigma}=\frac{1}{2}I\left(  A;B|X\right)
_{\sigma}-\frac{1}{2}I\left(  A;E|X\right)  _{\sigma}. \label{eq:entropy-2}%
\end{equation}

The Alicki-Fannes' inequality is a statement of the continuity of coherent
information~\cite{0305-4470-37-5-L01}, and a simple variant of it gives
continuity of quantum mutual information. First, suppose that two bipartite
states $\rho^{AB}$ and $\sigma^{AB}$ are $\epsilon$-close in trace norm:%
\[
\left\Vert \rho^{AB}-\sigma^{AB}\right\Vert _{1}\leq\epsilon\text{.}%
\]
Then the Alicki-Fannes' inequality states that their respective coherent
informations are close:%
\[
\left\vert I(A\rangle B)_{\rho}-I(A\rangle B)_{\sigma}\right\vert
\leq4\epsilon\log\left\vert A\right\vert +2H_{2}(\epsilon),
\]
where $\left\vert A\right\vert $ is the dimension of the system $A$ and
$H_{2}$ is the binary entropy function. A simple \textquotedblleft
tweak\textquotedblright\ of the above inequality shows that the quantum mutual
informations are close~\cite{HW08GFP}:%
\[
\left\vert I(A;B)_{\rho}-I(A;B)_{\sigma}\right\vert \leq5\epsilon
\log\left\vert A\right\vert +3H_{2}(\epsilon).
\]

The quantum data processing inequality states that quantum correlations can
never increase under the application of a noisy map~\cite{SN96,NC00}. There are two
important manifestations of it. Suppose that Alice possesses two quantum
systems $A$ and $A^{\prime}$\ in her lab. She sends $A^{\prime}$ through a
noisy quantum channel to Bob and he receives it in some system $B$. Then the
following inequality applies to the coherent information:%
\[
I\left(  A\rangle A^{\prime}\right)  \geq I\left(  A\rangle B\right)  ,
\]
demonstrating that quantum correlations, as measured by the coherent
information, can only decrease under noisy processing. The following
inequality also applies to the quantum mutual information:%
\[
I\left(  A;A^{\prime}\right)  \geq I\left(  A;B\right)  ,
\]
demonstrating a similar notion for the quantum mutual information. Now suppose
that Alice sends $A$ to Christabelle and she receives it in some system $C$.
Then the following inequality applies as well:%
\[
I\left(  A;A^{\prime}\right)  \geq I\left(  C;A^{\prime}\right)  \geq I\left(
C;B\right)  ,
\]
but a similar inequality does not hold for the coherent information, due to
its asymmetry. We make extensive use of quantum data processing in our proofs.

\section{The information processing task}

\label{sec:info-proc-task}

\begin{figure}
[ptb]
\begin{center}
\includegraphics[
natheight=5.826200in,
natwidth=8.260700in,
height=4.2748in,
width=6.0502in
]%
{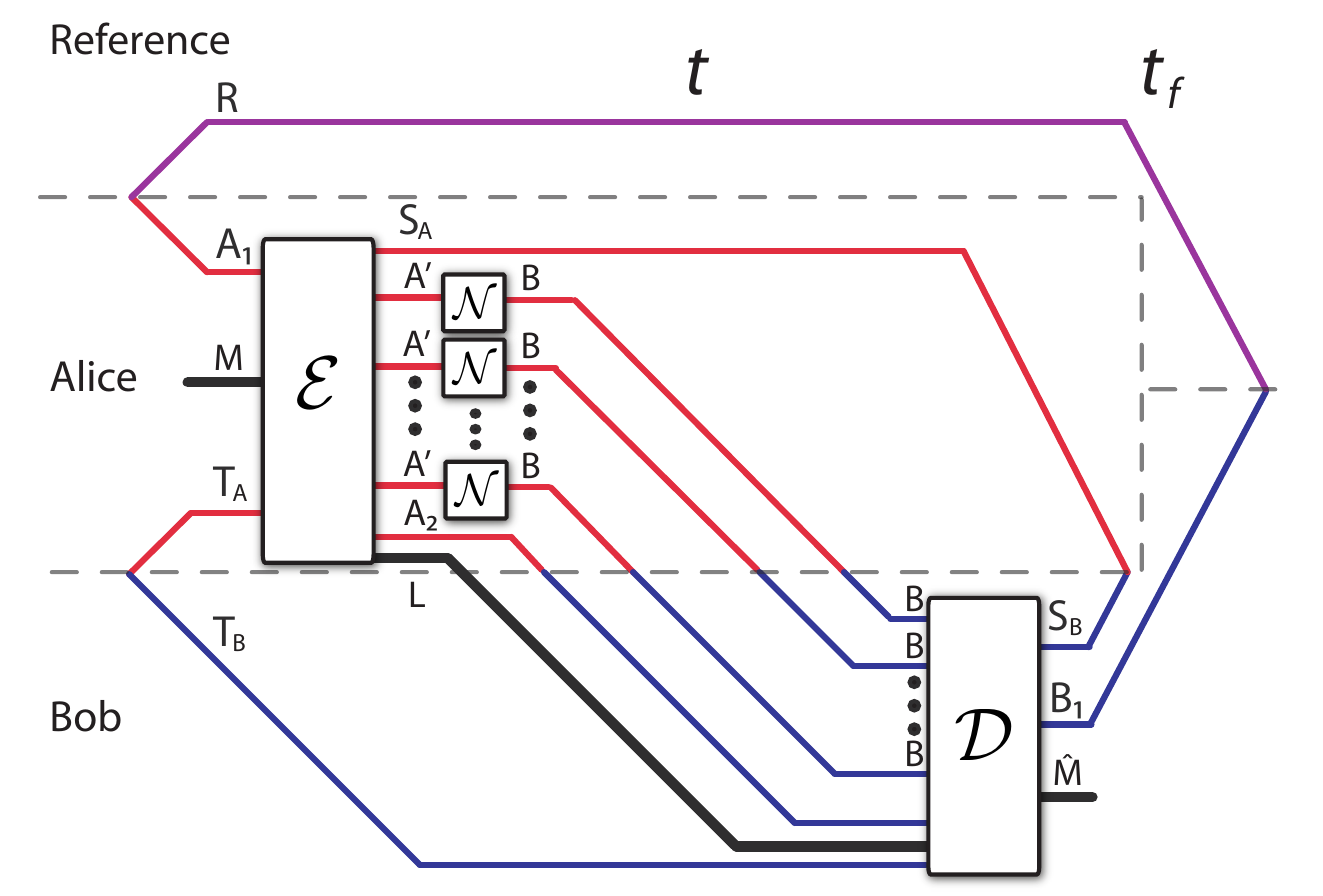}%
\caption{(Color online) The most general protocol for generating classical
communication, quantum communication, and entanglement with the
help of the same respective resources and many uses of a noisy quantum
channel. Alice begins with her classical register $M$, her quantum register
$A_{1}$, and her half of the shared entanglement in register $T_{A}$. She
encodes according to some CPTP map$\ \mathcal{E}$ that outputs a quantum
register $S_{A}$, many registers $A^{\prime n}$, a quantum register $A_{2}$,
and a classical register~$L$. She inputs $A^{\prime n}$ to many uses of the
noisy channel $\mathcal{N}$ and transmits $A_{2}$ over a noiseless quantum
channel and $L$ over a noiseless classical channel. Bob receives the channel
outputs $B^{n}$, the quantum register $A_{2}$, and the classical register $L$
and performs a decoding $\mathcal{D}$ that recovers the quantum information
and classical message. The decoding also generates entanglement with system
$S_{A}$. Many protocols are a special case of the above one. For example, the
protocol is entanglement-assisted communication of classical and quantum
information~\cite{HW08GFP}\ if the registers $L$, $S_{A}$, $S_{B}$, and
$A_{2}$ are null.}%
\label{fig:catalytic-protocol}%
\end{center}
\end{figure}
%EndExpansion

We are interested in the most general protocol that
generates classical communication, quantum communication, and entanglement
by consuming many uses of a noisy quantum channel $\mathcal{N}^{A^{\prime
}\rightarrow B}$ and the same respective resources 
(see Figure~\ref{fig:catalytic-protocol}). We say that such a
protocol is ``catalytic'' because we are allowing it to consume the same resources
that it generates, though we are keeping track of the net rates of consumption or generation.

The protocol begins with Alice possessing two
classical registers (each labeled by $M$ and of dimension $2^{n\bar{C}}$), a
quantum register $A_{1}$ of dimension $2^{n\bar{Q}}$ entangled with a
reference system $R$, and another quantum register $T_{A}$ of dimension
$2^{n\tilde{E}}$ that contains her half of the shared entanglement with Bob:%
\[
\omega^{MMRA_{1}T_{A}T_{B}}\equiv\overline{\Phi}^{MM}\otimes\Phi^{RA_{1}%
}\otimes\Phi^{T_{A}T_{B}}.
\]
She passes one of the classical registers and the registers $A_{1}$ and
$T_{A}$ into a CPTP\ encoding map $\mathcal{E}^{MA_{1}T_{A}\rightarrow
A^{\prime n}S_{A}LA_{2}}$ that outputs a quantum register $S_{A}$ of dimension
$2^{n\bar{E}}$ and a quantum register $A_{2}$ of dimension $2^{n\tilde{Q}}$, a
classical register $L$ of dimension $2^{n\tilde{C}}$, and many quantum systems
$A^{\prime n}$ for input to the channel. The register $S_{A}$\ is for creating
entanglement with Bob. The state after the encoding map $\mathcal{E}$\ is as
follows:%
\[
\omega^{MA^{\prime n}S_{A}LA_{2}RT_{B}}\equiv\mathcal{E}^{MA_{1}%
T_{A}\rightarrow A^{\prime n}S_{A}LA_{2}}(\omega^{MMRA_{1}T_{A}T_{B}}).
\]
She sends the systems $A^{\prime n}$ through many uses $\mathcal{N}^{A^{\prime
n}\rightarrow B^{n}}$ of the noisy channel $\mathcal{N}^{A^{\prime}\rightarrow
B}$, transmits $L$ over a noiseless classical channel, and transmits $A_{2}$
over a noiseless quantum channel, producing the following state:%
\[
\omega^{MB^{n}S_{A}LA_{2}RT_{B}}\equiv\mathcal{N}^{A^{\prime n}\rightarrow
B^{n}}(\omega^{MA^{\prime n}S_{A}LA_{2}RT_{B}}).
\]
The above state is a state of the form in (\ref{eq:converse-state}) with
$A\equiv RT_{B}A_{2}S_{A}$ and $X\equiv ML$. Bob then applies a map
$\mathcal{D}^{B^{n}A_{2}T_{B}L\rightarrow B_{1}S_{B}\hat{M}}$ that outputs a
quantum system $B_{1}$, a quantum system $S_{B}$, and a classical register
$\hat{M}$. Let $\omega^{\prime}$ denote the final state. The following
condition holds for a good protocol:%
\begin{equation}
\left\Vert \overline{\Phi}^{M\hat{M}}\otimes\Phi^{RB_{1}%
}\otimes\Phi^{S_{A}S_{B}}-\left(  \omega^{\prime}\right)  ^{MB_{1}S_{B}\hat
{M}S_{A}R}\right\Vert _{1}\leq\epsilon,\label{eq:+++_good-code}%
\end{equation}
implying that Alice and Bob establish maximal classical correlations in $M$
and $\hat{M}$ and maximal entanglement between $S_{A}$ and $S_{B}$. The above
condition also implies that the coding scheme preserves the entanglement with
the reference system $R$. The net rate triple for the protocol is as follows:
$\left(  \bar{C}-\tilde{C},\bar{Q}-\tilde{Q},\bar{E}-\tilde{E}\right)  $. The
protocol generates a resource if its corresponding rate is positive, and it
consumes a resource if its corresponding rate is negative. Such a protocol defines an 
$(n, C, Q, E, \epsilon)$ code with
\begin{align}C & = \bar{C}-\tilde{C}, \\
 Q & = \bar{Q}-\tilde{Q},\\
 E & =\bar{E}-\tilde{E} .\end{align}

\begin{definition}[Achievability] A rate triple $(C,Q,E)$ is achievable
if there exists an $(n, C, Q, E, \epsilon)$ code with error (as defined in (\ref{eq:+++_good-code})) smaller than
 $\epsilon$ for all $\epsilon > 0$ and sufficiently large $n$.
 \end{definition}

\begin{definition}[Dynamic Capacity Region]
The dynamic capacity region $\mathcal{C}_{\text{CQE}}\left(
\mathcal{N}\right)  $ of a noisy quantum channel $\mathcal{N}^{A^{\prime}\rightarrow
B}$ is a three-dimensional region in the $(C,Q,E)$ space defined by the
closure of the set of all achievable rate triples $(C,Q,E)$.
\end{definition}

%It is implicit in the above definition that a negative rate means that the protocol is consuming
%the corresponding resource rather than generating it.

\section{The dynamic capacity theorem}

\label{sec:capacity-theorem}The dynamic capacity theorem gives bounds on the
reliable communication rates of a noisy quantum channel when combined with the
noiseless resources of classical communication, quantum communication, and
shared entanglement~\cite{HW09T3}. The theorem applies regardless of whether a
protocol consumes the noiseless resources or generates them.

\begin{theorem}
[Dynamic Capacity]\label{thm:main-theorem}The dynamic capacity region
$\mathcal{C}_{\text{\emph{CQE}}}(\mathcal{N})$ of a quantum channel
$\mathcal{N}$ is equal to the following expression:%
\begin{equation}
\mathcal{C}_{\text{\emph{CQE}}}(\mathcal{N})=\overline{\bigcup_{k=1}^{\infty
}\frac{1}{k}\mathcal{C}_{\text{\emph{CQE}}}^{(1)}(\mathcal{N}^{\otimes k})},
\label{eq:multi-letter}%
\end{equation}
where the overbar indicates the closure of a set. The \textquotedblleft
one-shot\textquotedblright\ region $\mathcal{C}_{\text{\emph{CQE}}}%
^{(1)}(\mathcal{N})$ is the union of the \textquotedblleft one-shot,
one-state\textquotedblright\ regions $\mathcal{C}_{\text{\emph{CQE}},\sigma
}^{(1)}(\mathcal{N})$:%
\[
\mathcal{C}_{\text{\emph{CQE}}}^{(1)}(\mathcal{N})\equiv\bigcup_{\sigma
}\mathcal{C}_{\text{\emph{CQE}},\sigma}^{(1)}(\mathcal{N}).
\]
The \textquotedblleft one-shot, one-state\textquotedblright\ region
$\mathcal{C}_{\text{\emph{CQE}},\sigma}^{(1)}(\mathcal{N})$ is the set of all
rates $C$, $Q$, and $E$, such that%
\begin{align}
C+2Q  &  \leq I(AX;B)_{\sigma},\label{eq:CQ-bound}\\
Q+E  &  \leq I(A\rangle BX)_{\sigma},\label{eq:QE-bound}\\
C+Q+E  &  \leq I(X;B)_{\sigma}+I(A\rangle BX)_{\sigma}. \label{eq:CQE-bound}%
\end{align}
The above entropic quantities are with respect to a classical-quantum state
$\sigma^{XAB}$ where%
\begin{equation}
\sigma^{XAB}\equiv\sum_{x}p(x)\left\vert x\right\rangle \left\langle
x\right\vert ^{X}\otimes\mathcal{N}^{A^{\prime}\rightarrow B}(\phi
_{x}^{AA^{\prime}}), \label{eq:main-theorem-state}%
\end{equation}
and the states $\phi_{x}^{AA^{\prime}}$ are pure. It is implicit that one
should consider states on $A^{\prime k}$ instead of $A^{\prime}$ when taking
the regularization in (\ref{eq:multi-letter}).
\end{theorem}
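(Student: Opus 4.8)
The plan is to prove the theorem in two halves: achievability (the region on the right is contained in $\mathcal{C}_{\text{CQE}}(\mathcal{N})$) and the converse (every achievable rate triple lies in the regularized region). For achievability, I would not reprove everything from scratch but instead invoke the classically-enhanced father protocol of Ref.~\cite{HW08GFP}, which directly achieves the corner point of the one-shot, one-state region corresponding to $C+2Q=I(AX;B)_\sigma$, $Q+E=I(A\rangle BX)_\sigma$, i.e. the point where all three inequalities in \eqref{eq:CQ-bound}–\eqref{eq:CQE-bound} are saturated. From that single achievable point, combining with the three unit protocols already listed in Section~\ref{sec:def-not} — teleportation $(-2,1,-1)$, super-dense coding $(2,-1,-1)$, and entanglement distribution $(0,-1,1)$ — via resource composition (a protocol followed by another protocol, absorbing catalytically borrowed resources) traces out the full one-shot, one-state region $\mathcal{C}^{(1)}_{\text{CQE},\sigma}(\mathcal{N})$; taking unions over $\sigma$ and over $\mathcal{N}^{\otimes k}/k$, then the closure, yields the right-hand side of \eqref{eq:multi-letter}. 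This part I expect to be routine given the cited protocol, so I would keep it to the brief review promised for Section~\ref{sec:achievable}.

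The substantial work is the converse, carried out in Section~\ref{sec:converse} with only the four advertised tools. I would start from an arbitrary $(n,C,Q,E,\epsilon)$ code, look at the global state $\omega'$ satisfying \eqref{eq:+++_good-code}, and set $A\equiv RT_B A_2 S_A$, $X\equiv ML$ so that the channel-output state is of the class-quantum form in \eqref{eq:main-theorem-state} (on $A'^{\,n}$). The three bounds are derived separately. For $C+2Q\le I(AX;B)$: write $n(\bar C - \tilde C) + 2n(\bar Q - \tilde Q)$ as a difference of mutual-information-type quantities evaluated on the ideal target state $\overline\Phi^{M\hat M}\otimes\Phi^{RB_1}\otimes\Phi^{S_AS_B}$ — using that $I(M_A;M_B)=\log D$ and $I(A;B)=2\log D$ for maximally correlated/entangled states — then move from the ideal state to $\omega'$ using the Alicki–Fannes "tweak" (costing $5\epsilon\log|\cdot| + 3H_2(\epsilon)$), push the resulting mutual informations back through the decoder and the noiseless classical/quantum channels using the quantum data processing inequality, and finally identify the surviving quantity with $I(AX;B)_\sigma$ up to terms $o(n)$. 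The coherent-information bound $Q+E\le I(A\rangle BX)$ is handled the same way but using the Alicki–Fannes inequality directly for coherent information together with data processing for coherent information $I(A\rangle A'^n)\ge I(A\rangle B^n)$. The third bound $C+Q+E\le I(X;B)+I(A\rangle BX)$ is obtained by combining the ingredients of the first two, crucially invoking the chain rule \eqref{eq:entropy-1} and the identity \eqref{eq:entropy-2} to split a single mutual information into the $I(X;B)$ and $I(A;B|X)$ pieces and to relate $I(A\rangle BX)$ to the conditional mutual informations.

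The main obstacle, and the place where I would spend the most care, is the bookkeeping in the converse: correctly accounting for which registers are borrowed catalytically versus genuinely generated, making sure the "dimension" factors multiplying $\epsilon$ in the Alicki–Fannes bounds scale at most like $n$ times a constant (so that dividing by $n$ and taking $\epsilon\to0$, $n\to\infty$ kills them), and chaining the data-processing steps through the encoder–channel–decoder in the right order so that the final entropic expression is genuinely of the form \eqref{eq:main-theorem-state} rather than some larger multipartite quantity. A secondary subtlety is that the bounds are first proved for $\mathcal{N}^{\otimes n}$ and the state lives on $A'^{\,n}$, so one must be careful that the "one-shot, one-state" region appearing in the regularization \eqref{eq:multi-letter} is the right object — i.e., that no single-letterization is being silently claimed here (that is deferred to Sections~\ref{sec:single-letter-hadamard} and \ref{sec:single-letter-erasure}). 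Once these accounting issues are handled cleanly, each of the three inequalities follows from a short chain of the four elementary tools, which is exactly the streamlining the paper advertises.
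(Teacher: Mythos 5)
Your overall route is the same as the paper's: achievability by translating the unit-resource cone (teleportation, super-dense coding, entanglement distribution) to the classically-enhanced father point, which indeed saturates all three inequalities simultaneously via (\ref{eq:entropy-1}) and (\ref{eq:entropy-2}); and a catalytic converse for an arbitrary $(n,C,Q,E,\epsilon)$ code with $A\equiv RT_{B}A_{2}S_{A}$, $X\equiv ML$, proving the three bounds separately with the Alicki--Fannes inequality, the chain rule, elementary entropy bounds, and data processing. Your accounting concerns (catalytic net rates, dimension factors scaling linearly in $n$, no silent single-letterization) are exactly the points the paper handles.

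There is, however, one genuine gap. You assert that with $A\equiv RT_{B}A_{2}S_{A}$ and $X\equiv ML$ the channel-output state is of the form (\ref{eq:main-theorem-state}). It is not: the conditional states on $AA^{\prime n}$ arising from the protocol are in general \emph{mixed}, so the converse argument as you sketch it only bounds the rates by entropic quantities evaluated on an ensemble of mixed states, i.e.\ a state of the form (\ref{eq:converse-state}). Since the region in the theorem is defined as a union over \emph{pure}-state ensembles, this does not yet give containment in the right-hand side of (\ref{eq:multi-letter}); a priori the mixed-state region could be strictly larger. The missing step (Section~\ref{sec:pure-suff} of the paper) is to spectrally decompose each $\rho_{x}^{AA^{\prime n}}=\sum_{y}p(y|x)\,\psi_{x,y}^{AA^{\prime n}}$, absorb the index $y$ into the classical register, and use quantum data processing to show that $I(X;B)$, $I(AX;B)$, and $I(A\rangle BX)$ can only increase under this refinement, so pure-state ensembles suffice. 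This is a short, standard argument, but without it the converse does not close. A secondary, more cosmetic point: the paper also frames the converse via common-randomness generation (rather than classical communication) and entanglement preservation with a reference (rather than quantum transmission, citing Barnum \emph{et al.}), which is what licenses evaluating the ideal rates on $\overline{\Phi}^{M\hat{M}}\otimes\Phi^{RB_{1}}\otimes\Phi^{S_{A}S_{B}}$ as you propose; it is worth stating these reductions explicitly.
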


The above theorem is a \textquotedblleft multi-letter\textquotedblright%
\ capacity theorem because of the regularization in (\ref{eq:multi-letter}).
Though, we show in Sections~\ref{sec:single-letter-hadamard}\ and
\ref{sec:single-letter-erasure}\ that the regularization is not necessary for
the Hadamard class of channels or the quantum erasure channels, respectively.
We prove the above theorem in two parts:

\begin{enumerate}
\item The direct coding theorem below shows that combining the
\textquotedblleft classically-enhanced father protocol\textquotedblright\ with
teleportation, super-dense coding, and entanglement distribution achieves the
above region.

\item The converse theorem demonstrates that any coding scheme cannot do
better than the regularization in (\ref{eq:multi-letter}), in the sense that a
scheme with vanishing error should have its rates below the above amounts. We
prove the converse theorem directly in \textquotedblleft one fell
swoop,\textquotedblright\ by employing a catalytic, information-theoretic
approach. The converse proof is different from our earlier one~\cite{HW09T3}%
\ because we employ straightforward information-theoretic arguments instead of
making contact with prior quantum Shannon theoretic literature.
\end{enumerate}

\section{Dynamic achievable rate region}

\label{sec:achievable}The unit resource achievable region is what Alice and
Bob can achieve with the protocols entanglement distribution, teleportation,
and super-dense coding~\cite{HW09T3}. It is the cone of the rate triples
corresponding to these protocols:%
\[
\left\{  \alpha\left(  0,-1,1\right)  +\beta\left(  2,-1,-1\right)
+\gamma\left(  -2,1,-1\right)  :\alpha,\beta,\gamma\geq0\right\}  .
\]
We can also write any rate triple $\left(  C,Q,E\right)  $ in the unit
resource capacity region with a matrix equation:%
\begin{equation}%
\begin{bmatrix}
C\\
Q\\
E
\end{bmatrix}
=%
\begin{bmatrix}
0 & 2 & -2\\
-1 & -1 & 1\\
1 & -1 & -1
\end{bmatrix}%
\begin{bmatrix}
\alpha\\
\beta\\
\gamma
\end{bmatrix}
.\label{eq:unit-resource-achievable-region}%
\end{equation}
The inverse of the above matrix is as follows:%
\[%
\begin{bmatrix}
-\frac{1}{2} & -1 & 0\\
0 & -\frac{1}{2} & -\frac{1}{2}\\
-\frac{1}{2} & -\frac{1}{2} & -\frac{1}{2}%
\end{bmatrix}
,
\]
and gives the following set of inequalities for the unit resource achievable
region:%
\begin{align*}
C+2Q &  \leq0,\\
Q+E &  \leq0,\\
C+Q+E &  \leq0,
\end{align*}
by inverting the matrix equation in (\ref{eq:unit-resource-achievable-region})
and applying the constraints $\alpha,\beta,\gamma\geq0$.

Now, let us include the classically-enhanced father protocol~\cite{HW08GFP}.
Ref.~\cite{HW08GFP} proved that we can achieve the following rate triple by
channel coding over a noisy quantum channel $\mathcal{N}^{A^{\prime
}\rightarrow B}$:%
\[
\left(  I\left(  X;B\right)  _{\sigma},\frac{1}{2}I\left(  A;B|X\right)
_{\sigma},-\frac{1}{2}I\left(  A;E|X\right)  _{\sigma}\right)  ,
\]
for any state $\sigma^{XABE}$\ of the form:%
\begin{equation}
\sigma^{XABE}\equiv\sum_{x}p_{X}\left(  x\right)  \left\vert x\right\rangle
\left\langle x\right\vert ^{X}\otimes U_{\mathcal{N}}^{A^{\prime}\rightarrow
BE}(\phi_{x}^{AA^{\prime}}),\label{eq:maximization-state}%
\end{equation}
where $U_{\mathcal{N}}^{A^{\prime}\rightarrow BE}$ is an isometric extension
of the quantum channel $\mathcal{N}^{A^{\prime}\rightarrow B}$. Specifically,
we showed in Ref.~\cite{HW08GFP}\ that one can achieve the above rates with
vanishing error in the limit of large blocklength. Thus the achievable rate
region is the following translation of the unit resource achievable region in
(\ref{eq:unit-resource-achievable-region}):%
\[%
\begin{bmatrix}
C\\
Q\\
E
\end{bmatrix}
=%
\begin{bmatrix}
0 & 2 & -2\\
-1 & -1 & 1\\
1 & -1 & -1
\end{bmatrix}%
\begin{bmatrix}
\alpha\\
\beta\\
\gamma
\end{bmatrix}
+%
\begin{bmatrix}
I\left(  X;B\right)  _{\sigma}\\
\frac{1}{2}I\left(  A;B|X\right)  _{\sigma}\\
-\frac{1}{2}I\left(  A;E|X\right)  _{\sigma}%
\end{bmatrix}
.
\]
We can now determine bounds on an achievable rate region that employs the
above coding strategy. We apply the inverse of the matrix in
(\ref{eq:unit-resource-achievable-region}) to the LHS and RHS. Then using
(\ref{eq:entropy-1}), (\ref{eq:entropy-2}), and the constraints $\alpha
,\beta,\gamma\geq0$, we obtain the inequalities in (\ref{eq:CQ-bound}%
-\ref{eq:CQE-bound}), corresponding exactly to the one-shot, one-state region
in Theorem~\ref{thm:main-theorem}. Taking the union over all possible states
$\sigma$ in (\ref{eq:maximization-state}) and taking the regularization gives
the full dynamic achievable rate region.

\section{Catalytic and information theoretic converse proof}

\label{sec:converse}This section begins one of the main contributions of this
work. We provide a catalytic, information theoretic converse proof of the
dynamic capacity region, showing that (\ref{eq:multi-letter}) gives a
multi-letter characterization of it. The catalytic approach means that we are
considering the most general protocol that \textit{consumes and generates}
classical communication, quantum communication, and entanglement in addition
to the uses of the noisy quantum channel. This approach has the advantage that
we can prove the converse theorem in \textquotedblleft one fell
swoop\textquotedblright\ rather than considering one octant of the $\left(
C,Q,E\right)  $ space at a time as we did in Ref.~\cite{HW09T3}. Additionally,
we do not need to make contact with prior work in quantum Shannon theory. We
employ the Alicki-Fannes' inequality, the chain rule for quantum mutual
information, elementary properties of quantum entropy, and the quantum data
processing inequality to prove the converse.

There are some subtleties in our proof for the converse theorem. We prove that
the bounds in (\ref{eq:CQ-bound}-\ref{eq:CQE-bound}) hold for common
randomness generation instead of classical communication because a capacity
for generating common randomness can only be better than that for generating
classical communication (classical communication can generate common
randomness). We also consider a protocol that preserves entanglement with a
reference system instead of one that generates quantum communication.
Barnum~\textit{et~al}.~showed that this task is equivalent to the transmission
of quantum information~\cite{BKN98}.

We prove that the converse theorem holds for a state of the following form%
\begin{equation}
\sigma^{XAB}\equiv\sum_{x}p(x)\left\vert x\right\rangle \left\langle
x\right\vert ^{X}\otimes\mathcal{N}^{A^{\prime}\rightarrow B}(\rho
_{x}^{AA^{\prime}}),\label{eq:converse-state}%
\end{equation}
where the states $\rho_{x}^{AA^{\prime}}$ are mixed, rather than proving it
for a state of the form in (\ref{eq:maximization-state}). Then we show in
Section~\ref{sec:pure-suff}\ that it is not necessary to consider an ensemble
of mixed states---i.e., we can do just as well with an ensemble of pure
states, giving the statement of Theorem~\ref{thm:main-theorem}.%

We
%prove all three bounds in (\ref{eq:CQ-bound}-\ref{eq:CQE-bound})
%and
begin by proving the first bound in (\ref{eq:CQ-bound}). All system labels
are as given in Section~\ref{sec:info-proc-task}, and we consider the most general
protocol as outlined in that section. Consider the following chain
of inequalities:%
\begin{align*}
n\left(  \bar{C}+2\bar{Q}\right)   &  =I(M;\hat{M})_{\overline{\Phi}}+I\left(
R;B_{1}\right)  _{\Phi}\\
&  \leq I(M;\hat{M})_{\omega^{\prime}}+I\left(  R;B_{1}\right)  _{\omega
^{\prime}}+n\delta^{\prime}\\
&  \leq I\left(  M;B^{n}A_{2}LT_{B}\right)  _{\omega}+I\left(  R;B^{n}%
A_{2}LT_{B}\right)  _{\omega}\\
&  \leq I\left(  M;B^{n}A_{2}LT_{B}\right)  _{\omega}+I\left(  R;B^{n}%
A_{2}LT_{B}M\right)  _{\omega}\\
&  =I\left(  M;B^{n}A_{2}LT_{B}\right)  _{\omega}+I\left(  R;B^{n}A_{2}%
LT_{B}|M\right)  _{\omega}+I\left(  R;M\right)  _{\omega}.
\end{align*}
The first equality holds by evaluating the quantum mutual informations on the
respective states $\overline{\Phi}^{M\hat{M}}$ and $\Phi^{RB_{1}}$. The first
inequality follows from the condition in (\ref{eq:+++_good-code}) and an
application of the Alicki-Fannes' inequality where $\delta^{\prime}$ vanishes
as $\epsilon\rightarrow0$. We suppress this term in the rest of the
inequalities for convenience. The second inequality follows from quantum data
processing, and the third follows from another application of quantum data
processing. The second equality follows by applying the mutual information
chain rule in (\ref{eq:mut-chain-rule}). We continue below:%
\begin{align*}
&  =I\left(  M;B^{n}A_{2}LT_{B}\right)  _{\omega}+I\left(  R;B^{n}A_{2}%
LT_{B}|M\right)  _{\omega}\\
&  =I\left(  M;B^{n}A_{2}LT_{B}\right)  _{\omega}+I\left(  RA_{2}LT_{B}%
;B^{n}|M\right)  _{\omega}+I\left(  R;A_{2}LT_{B}|M\right)  _{\omega}-I\left(
B^{n};A_{2}LT_{B}|M\right)  _{\omega}\\
&  =I\left(  M;B^{n}\right)  _{\omega}+I\left(  M;A_{2}LT_{B}|B^{n}\right)
_{\omega}+I\left(  RA_{2}LT_{B};B^{n}|M\right)  _{\omega}+I\left(
R;A_{2}LT_{B}|M\right)  _{\omega}-I\left(  B^{n};A_{2}LT_{B}|M\right)
_{\omega}\\
&  =I\left(  RA_{2}LMT_{B};B^{n}\right)  _{\omega}+I\left(  M;A_{2}%
LT_{B}|B^{n}\right)  _{\omega}+I\left(  R;A_{2}LT_{B}|M\right)  _{\omega
}-I\left(  B^{n};A_{2}LT_{B}|M\right)  _{\omega}\\
&  \leq I\left(  RA_{2}T_{B}S_{A}LM;B^{n}\right)  _{\omega}+I\left(
M;A_{2}LT_{B}|B^{n}\right)  _{\omega}+I\left(  R;A_{2}LT_{B}|M\right)
_{\omega}-I\left(  B^{n};A_{2}LT_{B}|M\right)  _{\omega}\\
&  =I\left(  AX;B^{n}\right)  _{\omega}+I\left(  M;A_{2}LT_{B}|B^{n}\right)
_{\omega}+I\left(  R;A_{2}LT_{B}|M\right)  _{\omega}-I\left(  B^{n}%
;A_{2}LT_{B}|M\right)  _{\omega}.
\end{align*}
The first equality follows because $I\left(  R;M\right)  _{\omega}=0$ for this
protocol. The second equality follows from applying the chain rule for quantum
mutual information to the term $I\left(  R;B^{n}A_{2}LT_{B}|M\right)
_{\omega}$, and the third is another application of the chain rule to the term
$I\left(  M;B^{n}A_{2}LT_{B}\right)  _{\omega}$. The fourth equality follows
by combining $I\left(  M;B^{n}\right)  _{\omega}$ and $I\left(  RA_{2}%
LT_{B};B^{n}|M\right)  _{\omega}$ with the chain rule. The inequality follows
from an application of quantum data processing. The final equality follows
from the definitions $A\equiv RT_{B}A_{2}S_{A}$ and $X\equiv ML$. We now focus
on the term $I\left(  M;A_{2}LT_{B}|B^{n}\right)  _{\omega}+I\left(
R;A_{2}LT_{B}|M\right)  _{\omega}-I\left(  B^{n};A_{2}LT_{B}|M\right)
_{\omega}$ and show that it is less than $n\left(  \tilde{C}+2\tilde
{Q}\right)  $:%
\begin{align*}
&  I\left(  M;A_{2}LT_{B}|B^{n}\right)  _{\omega}+I\left(  R;A_{2}%
LT_{B}|M\right)  _{\omega}-I\left(  B^{n};A_{2}LT_{B}|M\right)  _{\omega}\\
&  =I\left(  M;A_{2}LT_{B}B^{n}\right)  _{\omega}+I\left(  R;A_{2}%
LT_{B}M\right)  _{\omega}-I\left(  B^{n};A_{2}LT_{B}M\right)  _{\omega
}-I\left(  R;M\right)  _{\omega}\\
&  =I\left(  M;A_{2}LT_{B}B^{n}\right)  _{\omega}+I\left(  R;A_{2}%
LT_{B}M\right)  _{\omega}-I\left(  B^{n};A_{2}LT_{B}M\right)  _{\omega}\\
&  =H\left(  A_{2}LT_{B}B^{n}\right)  _{\omega}+H\left(  R\right)  _{\omega
}-H\left(  RA_{2}LT_{B}|M\right)  _{\omega}-H\left(  B^{n}\right)  _{\omega}\\
&  =H\left(  A_{2}LT_{B}B^{n}\right)  _{\omega}-H\left(  A_{2}LT_{B}%
|MR\right)  _{\omega}-H\left(  B^{n}\right)  _{\omega}\\
&  =H\left(  A_{2}LT_{B}|B^{n}\right)  _{\omega}-H\left(  A_{2}LT_{B}%
|MR\right)  _{\omega}.
\end{align*}
The first equality follows by applying the chain rule for quantum mutual
information. The second equality follows because $I\left(  R;M\right)
_{\omega}=0$ for this protocol. The third equality follows by expanding the
quantum mutual informations. The next two inequalities follow from
straightforward entropic manipulations and that $H\left(  R\right)  _{\omega
}=H\left(  R|M\right)  _{\omega}$ for this protocol. We continue below:%
\begin{align*}
&  =H\left(  A_{2}L|B^{n}\right)  _{\omega}+H\left(  T_{B}|B^{n}A_{2}L\right)
_{\omega}-H\left(  T_{B}|MR\right)  _{\omega}-H\left(  A_{2}L|T_{B}MR\right)
_{\omega}\\
&  =H\left(  A_{2}L|B^{n}\right)  _{\omega}+H\left(  T_{B}|B^{n}A_{2}L\right)
_{\omega}-H\left(  T_{B}\right)  _{\omega}-H\left(  A_{2}L|T_{B}MR\right)
_{\omega}\\
&  =H\left(  A_{2}L|B^{n}\right)  _{\omega}-I\left(  T_{B};B^{n}A_{2}L\right)
_{\omega}-H\left(  A_{2}L|T_{B}MR\right)  \\
&  \leq H\left(  A_{2}L\right)  _{\omega}-H\left(  A_{2}L|T_{B}MR\right)
_{\omega}\\
&  =I\left(  A_{2}L;T_{B}MR\right)  _{\omega}\\
&  =I\left(  L;T_{B}MR\right)  _{\omega}+I\left(  A_{2};T_{B}MR|L\right)
_{\omega}\\
&  \leq n\left(  \tilde{C}+2\tilde{Q}\right)  .
\end{align*}
The first two equalities follow from the chain rule for entropy and the second
exploits that $H\left(  T_{B}|MR\right)  =H\left(  T_{B}\right)  $ for this
protocol. The third equality follows from the definition of quantum mutual
information. The inequality follows from subadditivity of entropy and that
$I\left(  T_{B};B^{n}A_{2}L\right)  _{\omega}\geq0$. The fourth equality
follows from the definition of quantum mutual information and the next
equality follows from the chain rule. The final inequality follows because the
quantum mutual information $I\left(  L;T_{B}MR\right)  _{\omega}$ can never be
larger than the logarithm of the dimension of the classical register $L$ and
because the quantum mutual information $I\left(  A_{2};T_{B}MR|L\right)
_{\omega}$ can never be larger than twice the logarithm of the dimension of
the quantum register $A_{2}$. Thus the following inequality applies%
\[
n\left(  \bar{C}+2\bar{Q}\right)  \leq I\left(  AX;B^{n}\right)  _{\omega
}+n\left(  \tilde{C}+2\tilde{Q}\right)  +n\delta^{\prime},
\]
demonstrating that (\ref{eq:CQ-bound}) holds for the net rates.

We now prove the second bound in (\ref{eq:QE-bound}). Consider the following
chain of inequalities:%
\begin{align*}
n\left(  \bar{Q}+\bar{E}\right)   &  =I\left(  R\rangle B_{1}\right)  _{\Phi
}+I\left(  S_{A}\rangle S_{B}\right)  _{\Phi}\\
&  =I\left(  RS_{A}\rangle B_{1}S_{B}\right)  _{\Phi\otimes\Phi}\\
&  \leq I\left(  RS_{A}\rangle B_{1}S_{B}\right)  _{\omega^{\prime}}%
+n\delta^{\prime}\\
&  \leq I\left(  RS_{A}\rangle B_{1}S_{B}M\right)  _{\omega^{\prime}}\\
&  \leq I\left(  RS_{A}\rangle B^{n}A_{2}T_{B}LM\right)  _{\omega}\\
&  =H\left(  B^{n}A_{2}T_{B}|LM\right)  _{\omega}-H\left(  RS_{A}B^{n}%
A_{2}T_{B}|LM\right)  _{\omega}\\
&  \leq H\left(  B^{n}|LM\right)  _{\omega}+H\left(  A_{2}|LM\right)
_{\omega}+H\left(  T_{B}|LM\right)  _{\omega}-H\left(  RS_{A}B^{n}A_{2}%
T_{B}|LM\right)  _{\omega}\\
&  \leq I\left(  RS_{A}A_{2}T_{B}\rangle B^{n}LM\right)  _{\omega}+n\left(
\tilde{Q}+\tilde{E}\right)  \\
&  =I\left(  A\rangle B^{n}X\right)  _{\omega}+n\left(  \tilde{Q}+\tilde
{E}\right)  .
\end{align*}
The first equality follows by evaluating the coherent informations of the
respective states $\Phi^{RB_{1}}$ and $\Phi^{S_{A}S_{B}}$. The second equality
follows because $\Phi^{RB_{1}}\otimes\Phi^{T_{A}T_{B}}$ is a product state.
The first inequality follows from the condition in (\ref{eq:+++_good-code})
and an application of the Alicki-Fannes' inequality with $\delta^{\prime}$
vanishing when $\epsilon\rightarrow0$. We suppress the term $n\delta^{\prime}$
in the following lines. The next two inequalities follow from quantum data
processing. The third equality follows from the definition of coherent
information. The fourth inequality follows from subadditivity of entropy. The
fifth inequality follows from the definition of coherent information and the
fact that the entropy can never be larger than the logarithm of the dimension
of the corresponding system. The final equality follows from the definitions
$A\equiv RT_{B}A_{2}S_{A}$ and $X\equiv ML$. Thus the following inequality
applies%
\[
n\left(  \bar{Q}+\bar{E}\right)  \leq I\left(  A\rangle B^{n}X\right)
+n\left(  \tilde{Q}+\tilde{E}\right)  ,
\]
demonstrating that (\ref{eq:QE-bound}) holds for the net rates.

We prove the last bound in (\ref{eq:CQE-bound}). Consider the following chain
of inequalities:%
\begin{align*}
n\left(  \bar{C}+\bar{Q}+\bar{E}\right)   &  =I(M;\hat{M})_{\overline{\Phi}%
}+I\left(  RS_{A}\rangle B_{1}S_{B}\right)  _{\Phi\otimes\Phi}\\
&  \leq I(M;\hat{M})_{\omega^{\prime}}+I\left(  RS_{A}\rangle B_{1}%
S_{B}\right)  _{\omega^{\prime}}+n\delta^{\prime}\\
&  \leq I(M;B^{n}A_{2}T_{B}L)_{\omega}+I\left(  RS_{A}\rangle B^{n}A_{2}%
T_{B}LM\right)  _{\omega}\\
&  =I\left(  ML;B^{n}A_{2}T_{B}\right)  _{\omega}+I\left(  M;L\right)
_{\omega}-I\left(  A_{2}B^{n}T_{B};L\right)  _{\omega}\\
&  \ \ \ \ \ \ +H\left(  B^{n}|LM\right)  +H\left(  A_{2}T_{B}|B^{n}LM\right)
_{\omega}-H\left(  RS_{A}A_{2}T_{B}B^{n}|LM\right)  _{\omega}\\
&  =I\left(  ML;B^{n}\right)  _{\omega}+I\left(  ML;A_{2}T_{B}|B^{n}\right)
_{\omega}+I\left(  M;L\right)  _{\omega}-I\left(  A_{2}B^{n}T_{B};L\right)
_{\omega}\\
&  \ \ \ \ \ \ +H\left(  A_{2}T_{B}|B^{n}LM\right)  _{\omega}+I\left(
RS_{A}A_{2}T_{B}\rangle B^{n}LM\right)  _{\omega}.
\end{align*}
The first equality follows from evaluating the mutual information of the state
$\overline{\Phi}^{M\hat{M}}$ and the coherent information of the product state
$\Phi^{RB_{1}}\otimes\Phi^{S_{A}S_{B}}$. The first inequality follows from the
condition in (\ref{eq:+++_good-code}) and an application of the Alicki-Fannes'
inequality with $\delta^{\prime}$ vanishing when $\epsilon\rightarrow0$. We
suppress the term $n\delta^{\prime}$ in the following lines. The second
inequality follows from quantum data processing. The second equality follows
from applying the chain rule for quantum mutual information to $I(M;B^{n}%
A_{2}T_{B}L)_{\omega}$ and by expanding the coherent information $I\left(
RS_{A}\rangle B^{n}A_{2}T_{B}LM\right)  _{\omega}$. The third equality follows
from applying the chain rule for quantum mutual information to $I\left(
ML;B^{n}A_{2}T_{B}\right)  _{\omega}$ and from the definition of coherent
information. We continue below:%
\begin{align*}
&  =I\left(  ML;B^{n}\right)  _{\omega}+I\left(  RS_{A}A_{2}T_{B}\rangle
B^{n}LM\right)  _{\omega}\\
&  \ \ \ \ \ \ +I\left(  ML;A_{2}T_{B}|B^{n}\right)  _{\omega}+I\left(
M;L\right)  _{\omega}-I\left(  A_{2}B^{n}T_{B};L\right)  _{\omega}+H\left(
A_{2}T_{B}|B^{n}LM\right)  _{\omega}\\
&  =I\left(  ML;B^{n}\right)  _{\omega}+I\left(  RS_{A}A_{2}T_{B}\rangle
B^{n}LM\right)  _{\omega}\\
&  \ \ \ \ \ \ +H\left(  A_{2}T_{B}|B^{n}\right)  _{\omega}+I\left(
M;L\right)  _{\omega}-I\left(  A_{2}B^{n}T_{B};L\right)  _{\omega}\\
&  \leq I\left(  ML;B^{n}\right)  _{\omega}+I\left(  RS_{A}A_{2}T_{B}\rangle
B^{n}LM\right)  _{\omega}+n\left(  \tilde{C}+\tilde{Q}+\tilde{E}\right)  \\
&  =I\left(  X;B^{n}\right)  _{\omega}+I\left(  A\rangle B^{n}X\right)
_{\omega}+n\left(  \tilde{C}+\tilde{Q}+\tilde{E}\right)  .
\end{align*}
The first equality follows by rearranging terms. The second equality follows
by canceling terms. The inequality follows from subadditivity of the entropy
$H\left(  A_{2}T_{B}|B^{n}\right)  _{\omega}$, the fact that the entropy
$H\left(  A_{2}T_{B}|B^{n}\right)  _{\omega}$ can never be larger than the
logarithm of the dimension of the systems $A_{2}T_{B}$, that the mutual
information $I\left(  M;L\right)  _{\omega}$ can never be larger than the
logarithm of the dimension of the classical register $L$, and because
$I\left(  A_{2}B^{n}T_{B};L\right)  _{\omega}\geq0$. The last equality follows
from the definitions $A\equiv RT_{B}A_{2}S_{A}$ and $X\equiv ML$. Thus the
following inequality holds%
\[
n\left(  \bar{C}+\bar{Q}+\bar{E}\right)  \leq I\left(  X;B^{n}\right)
_{\omega}+I\left(  A\rangle B^{n}X\right)  _{\omega}+n\left(  \tilde{C}%
+\tilde{Q}+\tilde{E}\right)  +n\delta^{\prime},
\]
demonstrating that the inequality in (\ref{eq:CQE-bound}) applies to the net
rates. This concludes the catalytic proof of the converse theorem.

\subsection{Pure state ensembles are sufficient}

\label{sec:pure-suff}We prove that it is sufficient to consider an ensemble of
pure states as in the statement of Theorem~\ref{thm:main-theorem}\ rather than
an ensemble of mixed states as in (\ref{eq:converse-state}) in the proof of
our converse theorem. Our argument relies on a classic trick exploited in
quantum Shannon theory \cite{DHW05RI}. We first determine a spectral decomposition of the
mixed state ensemble, model the index of the pure states in the decomposition
as a classical variable $Y$, and then place this classical variable $Y$ in a
classical register. It follows that the communication rates can only improve,
and it is sufficient to consider an ensemble of pure states.

Consider that each mixed state in the ensemble in (\ref{eq:converse-state})
admits a spectral decomposition of the following form:%
\[
\rho_{x}^{AA^{\prime}}=\sum_{y}p\left(  y|x\right)  \psi_{x,y}^{AA^{\prime}}.
\]
We can thus represent the ensemble as follows:%
\begin{equation}
\rho^{XAB}\equiv\sum_{x,y}p(x)p\left(  y|x\right)  \left\vert x\right\rangle
\left\langle x\right\vert ^{X}\otimes\mathcal{N}^{A^{\prime}\rightarrow
B}(\psi_{x,y}^{AA^{\prime}}). \label{eq:non-isometric-state}%
\end{equation}
The inequalities in (\ref{eq:CQ-bound}-\ref{eq:CQE-bound}) for the dynamic
capacity region involve the mutual information $I(AX;B)_{\rho}$, the Holevo
information $I(X;B)_{\rho}$, and the coherent information $I(A\rangle
BX)_{\rho}$. As we show below, each of these entropic quantities can only
improve in each case if make the variable $y$ be part of the classical
variable. This improvement then implies that it is only necessary to consider
pure states in the dynamic capacity theorem.

Let $\theta^{XYAB}$ denote an augmented state of the following form:%
\begin{equation}
\theta^{XYAB}\equiv\sum_{x}p(x)p\left(  y|x\right)  \left\vert x\right\rangle
\left\langle x\right\vert ^{X}\otimes\left\vert y\right\rangle \left\langle
y\right\vert ^{Y}\otimes\mathcal{N}^{A^{\prime}\rightarrow B}(\psi
_{x,y}^{AA^{\prime}}). \label{eq:measured-state}%
\end{equation}
This state is actually a state of the form in (\ref{eq:main-theorem-state}) if
we subsume the classical variables $X$ and $Y$ into one classical variable.
The following three inequalities each follow from an application of the
quantum data processing inequality:%
\begin{align}
I(X;B)_{\rho}  &  =I(X;B)_{\theta}\leq I(XY;B)_{\theta},\\
I(AX;B)_{\rho}  &  =I(AX;B)_{\theta}\leq I(AXY;B)_{\theta}\\
I(A\rangle BX)_{\rho}  &  =I(A\rangle BX)_{\theta}\leq I(A\rangle
BXY)_{\theta}.
\end{align}
Each of these inequalities proves the desired result for the respective Holevo
information, mutual information, and coherent information, and it suffices to
consider an ensemble of pure states in Theorem~\ref{thm:main-theorem}.

\section{The quantum dynamic capacity formula}

\label{sec:dynamic-cap-formula}We introduce the quantum dynamic capacity
formula and show how additivity of it implies that the computation of the
Pareto optimal trade-off surface of the capacity region requires just a single channel use,
rather than an infinite number of them (as in regularized formulas). The
Pareto optimal trade-off surface consists of all points in the capacity region
that are Pareto optimal, in the sense that it is not possible to make
improvements in one resource without offsetting another resource (these are
essentially the boundary points of the region in our case). We then show how
several important capacity formulas in the quantum Shannon theory literature
are special cases of the quantum dynamic capacity formula.

\begin{definition}
[Quantum Dynamic Capacity Formula]The quantum dynamic capacity formula of a
quantum channel $\mathcal{N}$ is as follows:%
\begin{equation}
D_{\lambda,\mu}\left(  \mathcal{N}\right)  \equiv\max_{\sigma}I\left(
AX;B\right)  _{\sigma}+\lambda I\left(  A\rangle BX\right)  _{\sigma}%
+\mu\left(  I\left(  X;B\right)  _{\sigma}+I\left(  A\rangle BX\right)
_{\sigma}\right)  , \label{eq:objective}%
\end{equation}
where $\lambda,\mu\geq0$.
\end{definition}

\begin{definition}
The regularized quantum dynamic capacity formula is as follows:%
\[
D_{\lambda,\mu}^{\text{reg}}\left(  \mathcal{N}\right)  \equiv\lim
_{n\rightarrow\infty}\frac{1}{n}D_{\lambda,\mu}\left(  \mathcal{N}^{\otimes
n}\right)  .
\]

\end{definition}

\begin{lemma}
\label{thm:CEQ-single-letter}Suppose the quantum dynamic capacity formula is
additive for any two channels $\mathcal{N}$ and $\mathcal{M}$:%
\[
D_{\lambda,\mu}\left(  \mathcal{N\otimes M}\right)  = D_{\lambda,\mu}\left(
\mathcal{N}\right) +D_{\lambda,\mu}\left(
\mathcal{M}\right) .
\]
Then the regularized quantum dynamic capacity formula for $\mathcal{N}$ is equal to the quantum
dynamic capacity formula for $\mathcal{N}$:%
\[
D_{\lambda,\mu}^{\text{reg}}\left(  \mathcal{N}\right)  =D_{\lambda,\mu
}\left(  \mathcal{N}\right)  .
\]
In this sense, the regularized formula \textquotedblleft
single-letterizes\textquotedblright\ and it is not necessary to take the limit.
\end{lemma}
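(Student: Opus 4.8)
The plan is to prove the equality $D_{\lambda,\mu}^{\text{reg}}(\mathcal{N}) = D_{\lambda,\mu}(\mathcal{N})$ by a standard super-additivity/sub-additivity sandwich argument, where the additivity hypothesis supplies exactly the non-trivial direction. First I would observe that for any fixed $n$, the definition of the regularized formula as a limit means we must control $\frac{1}{n} D_{\lambda,\mu}(\mathcal{N}^{\otimes n})$ for large $n$. The hypothesis, applied inductively with $\mathcal{M} = \mathcal{N}^{\otimes (n-1)}$, immediately gives $D_{\lambda,\mu}(\mathcal{N}^{\otimes n}) = n\, D_{\lambda,\mu}(\mathcal{N})$ for every positive integer $n$; hence $\frac{1}{n} D_{\lambda,\mu}(\mathcal{N}^{\otimes n}) = D_{\lambda,\mu}(\mathcal{N})$ for all $n$, and the limit defining $D_{\lambda,\mu}^{\text{reg}}(\mathcal{N})$ is a limit of a constant sequence, so it equals $D_{\lambda,\mu}(\mathcal{N})$.

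Before invoking the hypothesis I would first verify that the quantum dynamic capacity formula is always \emph{super}-additive, i.e. $D_{\lambda,\mu}(\mathcal{N} \otimes \mathcal{M}) \geq D_{\lambda,\mu}(\mathcal{N}) + D_{\lambda,\mu}(\mathcal{M})$, with no additivity assumption. This follows from a tensor-product (product-state) ansatz: given optimal states $\sigma_{\mathcal{N}}^{X_1 A_1 B_1}$ and $\sigma_{\mathcal{M}}^{X_2 A_2 B_2}$ for the two formulas individually, form the product state $\sigma_{\mathcal{N}} \otimes \sigma_{\mathcal{M}}$ (with $X = X_1 X_2$, $A = A_1 A_2$, $B = B_1 B_2$), which is a valid candidate state for $D_{\lambda,\mu}(\mathcal{N} \otimes \mathcal{M})$ built from a pure-state ensemble over the joint input. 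Each of the three entropic quantities $I(AX;B)$, $I(A\rangle BX)$, and $I(X;B) + I(A\rangle BX)$ is additive on product states (additivity of quantum mutual information and of coherent information across a tensor product), so the objective evaluated on the product state is the sum of the two individual optima; since $D_{\lambda,\mu}$ maximizes over \emph{all} input states, it can only be larger. This makes the additivity hypothesis equivalent to the reverse (sub-additivity) inequality, which is the genuinely channel-dependent fact proved later for Hadamard and erasure channels.

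With super-additivity in hand, I would note that even without the hypothesis the super-additive sequence $a_n \equiv D_{\lambda,\mu}(\mathcal{N}^{\otimes n})$ satisfies $a_{m+n} \geq a_m + a_n$, so by Fekete's lemma $\lim_{n\to\infty} a_n / n = \sup_n a_n / n$ exists (possibly requiring the easy a priori bound $a_n \leq O(n)$, which holds because all the entropic terms are bounded by logarithms of the output/environment dimensions, themselves multiplicative in $n$); hence $D_{\lambda,\mu}^{\text{reg}}(\mathcal{N})$ is well defined and is $\geq D_{\lambda,\mu}(\mathcal{N})$ unconditionally. The additivity hypothesis then forces the sequence $a_n/n$ to be constant, collapsing the supremum to $a_1 = D_{\lambda,\mu}(\mathcal{N})$ and giving the claimed single-letterization.

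I do not anticipate a serious obstacle here: the lemma is essentially a bookkeeping consequence of the definitions once super-additivity is recorded, and the only point requiring a line of care is the induction step deducing $D_{\lambda,\mu}(\mathcal{N}^{\otimes n}) = n D_{\lambda,\mu}(\mathcal{N})$ from the pairwise additivity hypothesis (apply the hypothesis to the pair $\mathcal{N}$ and $\mathcal{N}^{\otimes(n-1)}$, then invoke the inductive hypothesis on the second factor). The real content — proving additivity for specific channel families — is deferred to Sections~\ref{sec:single-letter-hadamard} and \ref{sec:single-letter-erasure} and is not part of this lemma.
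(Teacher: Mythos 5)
Your proposal is correct and its core step—inducting on $n$ by applying the additivity hypothesis to the pair $\mathcal{N}$ and $\mathcal{N}^{\otimes(n-1)}$ to get $D_{\lambda,\mu}(\mathcal{N}^{\otimes n})=nD_{\lambda,\mu}(\mathcal{N})$, so the regularized limit is a constant sequence—is exactly the paper's proof. The additional remarks on super-additivity via product states and Fekete's lemma are correct but not needed for this lemma (the product-state direction is recorded in the paper only within the channel-specific additivity proofs).
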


\begin{proof}
We prove the result using induction on $n$. The base case for $n=1$ is
trivial. Suppose the result holds for $n$: $D_{\lambda,\mu}(\mathcal{N}%
^{\otimes n})=nD_{\lambda,\mu}(\mathcal{N})$. Then the following chain of
equalities proves the inductive step:%
\begin{align*}
D_{\lambda,\mu}(\mathcal{N}^{\otimes n+1})  &  =D_{\lambda,\mu}(\mathcal{N}%
\otimes\mathcal{N}^{\otimes n})\\
&  =D_{\lambda,\mu}(\mathcal{N})+D_{\lambda,\mu}(\mathcal{N}^{\otimes n})\\
&  =D_{\lambda,\mu}(\mathcal{N})+nD_{\lambda,\mu}(\mathcal{N}).
\end{align*}
The first equality follows by expanding the tensor product. The second
critical equality follows from the assumption that the formula is additive.
The final equality follows from the induction hypothesis.
\end{proof}

\begin{theorem}
Single-letterization of the quantum dynamic
capacity formula implies that the computation of the Pareto optimal trade-off
surface of the quantum dynamic capacity region requires an optimization over a
single channel use.
\end{theorem}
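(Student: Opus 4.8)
The plan is to show that the boundary of the dynamic capacity region is traced out by the family of scalar optimizations $D_{\lambda,\mu}(\mathcal{N})$ as $\lambda,\mu$ range over nonnegative reals, and that each such optimization is a convex program. First I would recall from Theorem~\ref{thm:main-theorem} that (under single-letterization) the region $\mathcal{C}_{\text{CQE}}(\mathcal{N})$ is the union over input states $\sigma$ of the polyhedra cut out by the three inequalities \eqref{eq:CQ-bound}--\eqref{eq:CQE-bound}. A point $(C,Q,E)$ is Pareto optimal precisely when it maximizes some weighted sum $w_1 C + w_2 Q + w_3 E$ with $w_i \geq 0$ over the region; by standard support-function / scalarization theory for convex sets, sweeping these weight vectors generates the entire Pareto surface. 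So the key step is to reparametrize: I would show that maximizing $w_1 C + w_2 Q + w_3 E$ over the region is equivalent — after a change of variables expressing $(w_1,w_2,w_3)$ in terms of the multipliers $(1,\lambda,\mu)$ attached to the three bounding inequalities — to computing $D_{\lambda,\mu}(\mathcal{N})$. Concretely, for fixed $\sigma$ the inner maximization over $(C,Q,E)$ in the polyhedron is a linear program whose optimal value, by LP duality, is a nonnegative combination of the right-hand sides $I(AX;B)_\sigma$, $I(A\rangle BX)_\sigma$, and $I(X;B)_\sigma + I(A\rangle BX)_\sigma$; that combination is exactly the objective in \eqref{eq:objective} once we normalize the multiplier on the first constraint to $1$ (the degenerate cases where that multiplier is zero correspond to portions of the surface where $C$ is unconstrained at the optimum and are handled separately). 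Taking the union over $\sigma$ turns the inner LP value into the maximization defining $D_{\lambda,\mu}(\mathcal{N})$.

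The remaining step is tractability. I would argue that $\sigma \mapsto I(AX;B)_\sigma + \lambda I(A\rangle BX)_\sigma + \mu( I(X;B)_\sigma + I(A\rangle BX)_\sigma )$ is a concave function of the input ensemble $\{p(x),\phi_x^{AA'}\}$ — more precisely, that it is concave in the distribution $p(x)$ for fixed conditional states, and that the optimization domain (ensembles with bounded alphabet size, which suffices by a Carathéodory/Fenchel-type bound on the number of ensemble elements needed) is convex and compact. Each of the three constituent quantities can be written, for a fixed channel, in terms of entropies of the relevant reduced states of $\sigma^{XABE}$; using concavity of conditional entropy and the identities \eqref{eq:entropy-1}--\eqref{eq:entropy-2} relating these quantities to $H(B|X)$, $H(E|X)$, $H(B)$, $H(BX)$ etc., one checks that the whole objective is concave in $p(x)$ (the Holevo-type and coherent-information-type terms are each individually concave in this sense). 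Hence $D_{\lambda,\mu}(\mathcal{N})$ is the maximum of a concave function over a convex compact set — a convex optimization program — and single-letterization guarantees that this one finite-dimensional program (rather than a limit of programs over $\mathcal{N}^{\otimes n}$) already yields the answer.

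I expect the main obstacle to be the bookkeeping in the scalarization step: verifying that the map from the three inequality multipliers $(1,\lambda,\mu)$ onto the genuine weight vectors $(w_1,w_2,w_3)$ on $(C,Q,E)$ is surjective onto the relevant cone (so that \emph{every} Pareto-optimal point is captured, not just a subset), and correctly treating the boundary/degenerate cases — e.g.\ faces of the polyhedron where one of the three capacity inequalities is slack, or where the optimal multiplier on the $C+2Q$ constraint would want to be zero. This is the step where the specific structure of the $3\times 3$ matrix in \eqref{eq:unit-resource-achievable-region} and its inverse does the real work, mirroring how that matrix was used in Section~\ref{sec:achievable}. Once that correspondence is pinned down, concavity of the objective is a routine consequence of standard entropy inequalities, and the convex-program conclusion follows immediately.
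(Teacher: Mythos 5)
Your overall skeleton---scalarize with nonnegative weights $(w_C,w_Q,w_E)$, apply duality to the inner linear problem in $(C,Q,E)$, and identify the nonnegative combination of the three right-hand sides with $D_{\lambda,\mu}$ after normalizing the multiplier on the $C+2Q$ constraint---is essentially the paper's Lagrange-dual computation, and that part is sound. The first genuine gap is that your claim ``a point is Pareto optimal precisely when it maximizes some weighted sum with $w_i\ge 0$'' is valid only if the region cut out by (\ref{eq:cond-1})--(\ref{eq:cond-3}) (union over states, for each $n$) is convex, and you invoke this as ``standard theory for convex sets'' without establishing it. The paper spends the first half of its proof on exactly this point: given $\sigma_0^{XAB^n}$ and $\sigma_1^{XAB^n}$ of the form (\ref{eq:maximization-state}), it adjoins a classical flag $U$, absorbs $U$ into $X$ so the mixture is again an admissible state, and verifies $\lambda I(AX;B^n)_{\sigma_0}+(1-\lambda)I(AX;B^n)_{\sigma_1}\le I(AUX;B^n)_{\sigma}$ together with the analogous relations for $I(A\rangle B^nX)$ and $I(X;B^n)$. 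Without this step (or an equivalent argument adapted to the entropic characterization), nonnegative-weight scalarization is not guaranteed to sweep out the entire Pareto surface.

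The second, more serious gap is your tractability argument. You assert concavity of the objective only in $p(x)$ for fixed conditional states, yet conclude that $D_{\lambda,\mu}(\mathcal{N})$ is ``the maximum of a concave function over a convex compact set.'' The maximization in (\ref{eq:objective}) is joint over $p(x)$ \emph{and} the pure states $\phi_x^{AA'}$: the set of pure-state ensembles is not convex under the coordinate-wise mixing your argument would require, and the coherent-information contributions are not concave in the input states (coherent information is neither convex nor concave in general), so joint concavity fails in any naive parametrization. The correct way to ``mix'' ensembles is precisely the flag/alphabet-enlargement trick above, not convex combination of the parameters. The paper does not need your concavity claim: its notion of tractability is that, once $D_{\lambda,\mu}$ is additive and hence single-letterizes (Lemma~\ref{thm:CEQ-single-letter}), the Lagrange dual function factors into a linear part in $(C,Q,E)$ plus $\lambda_1\bigl(D_{\lambda_2/\lambda_1,\lambda_3/\lambda_1}(\mathcal{N})\bigr)$, a single-letter optimization, while the convexity of the vector-optimization problem is supplied by the region's concavity rather than by concavity of the entropic objective in the ensemble parameters. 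To salvage your formulation you would have to replace the raw ensemble domain by a genuinely convex object (e.g., the set of achievable entropic triples obtained via the flag construction together with a Carath\'eodory-type bound), at which point you have reproduced the paper's region-concavity step.
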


\begin{proof}
We employ ideas from Ref.~\cite{BV04} for the proof.
We would like to characterize all the points in the capacity region that are
Pareto optimal. Such a task is standard vector optimization in the theory of
Pareto trade-off analysis (see Section~4.7 of Ref.~\cite{BV04}). We can phrase
the optimization task as the following scalarization of the vector
optimization task:%
\begin{equation}
\max_{C,Q,E,p\left(  x\right)  ,\phi_{x}}w_{C}C+w_{Q}Q+w_{E}E
\label{eq:scalarization}%
\end{equation}
subject to%
\begin{align}
C+2Q  &  \leq I(AX;B^{n})_{\sigma},\label{eq:cond-1}\\
Q+E  &  \leq I(A\rangle B^{n}X)_{\sigma},\label{eq:cond-2}\\
C+Q+E  &  \leq I(X;B^{n})_{\sigma}+I(A\rangle B^{n}X)_{\sigma},
\label{eq:cond-3}%
\end{align}
where the maximization is over all $C$, $Q$, and $E$ and over probability
distributions $p_{X}\left(  x\right)  $ and bipartite states $\phi
_{x}^{AA^{\prime n}}$. The geometric interpretation of the scalarization task is
that we are trying to find a supporting plane of the dynamic capacity region
where the weight vector $\left(  w_{C},w_{Q},w_{E}\right)  $ is the normal
vector of the plane and the value of its inner product with $\left(
C,Q,E\right)  $ characterizes the offset of the plane.

The Lagrangian of the above optimization problem is%
\begin{align*}
\mathcal{L}\left(  C,Q,E,p_{X}\left(  x\right)  ,\phi_{x}^{AA^{\prime n}%
},\lambda_{1},\lambda_{2},\lambda_{3}\right)   &  \equiv w_{C}C+w_{Q}%
Q+w_{E}E+\lambda_{1}\left(  I\left(  AX;B^{n}\right)  _{\sigma}-\left(
C+2Q\right)  \right) \\
&  \ \ \ \ \ \ +\lambda_{2}\left(  I\left(  A\rangle B^{n}X\right)  _{\sigma
}-\left(  Q+E\right)  \right) \\
&  \ \ \ \ \ \ \ +\lambda_{3}\left(  I\left(  X;B^{n}\right)  _{\sigma
}+I\left(  A\rangle B^{n}X\right)  _{\sigma}-\left(  C+Q+E\right)  \right)  ,
\end{align*}
and the Lagrange dual function $g$ \cite{BV04} is%
\[
g\left(  \lambda_{1},\lambda_{2},\lambda_{3}\right)  \equiv\sup
_{C,Q,E,p\left(  x\right)  ,\phi_{x}^{AA^{\prime n}}}\mathcal{L}\left(
C,Q,E,p_{X}\left(  x\right)  ,\phi_{x}^{AA^{\prime n}},\lambda_{1},\lambda
_{2},\lambda_{3}\right)  ,
\]
where $\lambda_{1},\lambda_{2},\lambda_{3}\geq0$. The optimization task simplifies
 if  the Lagrange dual function does. Thus, we
rewrite the Lagrange dual function as follows:%
\begin{align*}
g\left(  \lambda_{1},\lambda_{2},\lambda_{3}\right)   &  =\sup_{C,Q,E,p\left(
x\right)  ,\phi_{x}^{AA^{\prime n}}}w_{C}C+w_{Q}Q+w_{E}E+\lambda_{1}\left(
I\left(  AX;B^{n}\right)  _{\sigma}-\left(  C+2Q\right)  \right) \\
&  \ \ \ \ \ \ \ +\lambda_{2}\left(  I\left(  A\rangle B^{n}X\right)
_{\sigma}-\left(  Q+E\right)  \right) \\
&  \ \ \ \ \ \ \ +\lambda_{3}\left(  I\left(  X;B^{n}\right)  _{\sigma
}+I\left(  A\rangle B^{n}X\right)  _{\sigma}-\left(  C+Q+E\right)  \right) \\
&  =\sup_{C,Q,E,p\left(  x\right)  ,\phi_{x}^{AA^{\prime n}}}\left(
w_{C}-\lambda_{1}-\lambda_{3}\right)  C+\left(  w_{Q}-2\lambda_{1}-\lambda
_{2}-\lambda_{3}\right)  Q+\left(  w_{E}-\lambda_{2}-\lambda_{3}\right)  E\\
&  \ \ \ \ \ \ \ +\lambda_{1}\left(  I\left(  AX;B^{n}\right)  _{\sigma}%
+\frac{\lambda_{2}}{\lambda_{1}}I\left(  A\rangle B^{n}X\right)  _{\sigma
}+\frac{\lambda_{3}}{\lambda_{1}}\left(  I\left(  X;B^{n}\right)  _{\sigma
}+I\left(  A\rangle B^{n}X\right)  _{\sigma}\right)  \right) \\
&  =\sup_{C,Q,E}\left(  w_{C}-\lambda_{1}-\lambda_{3}\right)  C+\left(
w_{Q}-2\lambda_{1}-\lambda_{2}-\lambda_{3}\right)  Q+\left(  w_{E}-\lambda
_{2}-\lambda_{3}\right)  E\\
&  \ \ \ \ \ \ \ +\lambda_{1}\left(  \max_{p\left(  x\right)  ,\phi
_{x}^{AA^{\prime n}}}I\left(  AX;B^{n}\right)  _{\sigma}+\frac{\lambda_{2}%
}{\lambda_{1}}I\left(  A\rangle B^{n}X\right)  _{\sigma}+\frac{\lambda_{3}%
}{\lambda_{1}}\left(  I\left(  X;B^{n}\right)  _{\sigma}+I\left(  A\rangle
B^{n}X\right)  _{\sigma}\right)  \right)  .
\end{align*}
The first equality follows by definition. The second equality follows from
some algebra, and the last follows because the Lagrange dual function factors
into two separate optimization tasks:\ one over $C$, $Q$, and $E$ and another
that is equivalent to the quantum dynamic capacity formula with $\lambda
=\lambda_{2}/\lambda_{1}$ and $\mu=\lambda_{3}/\lambda_{1}$.
Thus, the
computation of the Pareto optimal trade-off surface requires just a single use
of the channel if the quantum dynamic capacity formula in
(\ref{eq:objective}) single-letterizes.
\end{proof}

\subsection{Special cases of the quantum dynamic capacity formula}

We now show how several capacity formulas of a quantum channel, including the
entanglement-assisted classical capacity~\cite{BSST01}, the Lloyd-Shor-Devetak
(LSD) formula for the quantum capacity~\cite{Lloyd96,Shor02,Devetak03}, and
the Holevo-Schumacher-Westmoreland (HSW) formula for the classical
capacity~\cite{Hol98,SW97} are special cases of the quantum dynamic capacity formula.

We first give a geometric interpretation of these special cases before
proceeding to the proofs. Recall that the dynamic capacity region has the
simple interpretation as a translation of the three-faced unit resource
capacity region along the classically-enhanced father trade-off curve (see
Figure~\ref{fig:full-plot-dephasing} for the example of the region of the
dephasing channel). Any particular weight vector $\left(  w_{C},w_{Q}%
,w_{E}\right)  $ in (\ref{eq:scalarization}) gives a set of parallel planes
that slice through the $\left(  C,Q,E\right)  $\ space, and the goal of the
scalar optimization task is to find one of these planes that is a supporting
plane, intersecting a point (or a set of points) on the trade-off surface of
the dynamic capacity region. We consider three special planes:

\begin{enumerate}
\item The first corresponds to the plane containing the vectors of super-dense
coding and teleportation. The normal vector of this plane is $(1,2,0)$, and
suppose that we set the weight vector in (\ref{eq:scalarization}) to be this
vector. Then the optimization program finds the set of points on the trade-off
surface such that a plane with this normal vector is a supporting plane for
the region. The optimization program singles out (\ref{eq:cond-1}), and we can
think of this as being equivalent to setting $\lambda_{2},\lambda_{3}=0$ in
the Lagrange dual function. We show below that the optimization program
becomes equivalent to finding the entanglement-assisted capacity~\cite{BSST01}%
, in the sense that the quantum dynamic capacity formula becomes the
entanglement-assisted capacity formula.

\item The next plane contains the vectors of teleportation and entanglement
distribution. The normal vector of this plane is $\left(  0,1,1\right)  $.
Setting the weight vector in (\ref{eq:scalarization}) to be this vector makes
the optimization program single out (\ref{eq:cond-2}), and we can think of
this as being equivalent to setting $\lambda_{1},\lambda_{3}=0$ in the
Lagrange dual function. We show below that the optimization program becomes
equivalent to finding the quantum capacity~\cite{Lloyd96,Shor02,Devetak03}, in
the sense that the quantum dynamic capacity formula becomes the LSD formula
for the quantum capacity.

\item A third plane contains the vectors of super-dense coding and
entanglement distribution. The normal vector of this plane is $\left(
1,1,1\right)  $. Setting the weight vector in (\ref{eq:scalarization}) to be
this vector makes the optimization program single out (\ref{eq:cond-3}), and
we can think of this as being equivalent to setting $\lambda_{1},\lambda
_{2}=0$ in the Lagrange dual function. We show below that the optimization
becomes equivalent to finding the classical
capacity~\cite{Lloyd96,Shor02,Devetak03}, in the sense that the quantum
dynamic capacity formula becomes the HSW\ formula for the classical capacity.
\end{enumerate}

\begin{corollary}
The quantum dynamic capacity formula is equivalent to the
entanglement-assisted classical capacity formula when $\lambda,\mu=0$, in the
sense that%
\[
\max_{\sigma}I\left(  AX;B\right)  =\max_{\phi^{AA^{\prime}}}I\left(
A;B\right)  .
\]

\end{corollary}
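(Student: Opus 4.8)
The plan is to show that the maximization defining $D_{0,0}(\mathcal{N}) = \max_\sigma I(AX;B)_\sigma$ collapses to the entanglement-assisted capacity formula $\max_{\phi^{AA'}} I(A;B)_\phi$, where the state $\sigma^{XAB}$ is of the form in~(\ref{eq:main-theorem-state}) with pure states $\phi_x^{AA'}$. The ``$\geq$'' direction is immediate: given any pure bipartite state $\phi^{AA'}$, take the trivial ensemble consisting of a single value of $X$ with probability one and $\phi_x^{AA'} = \phi^{AA'}$. Then $X$ is a constant (trivial) system, so $I(AX;B)_\sigma = I(A;B)_\phi$, and hence the left-hand side is at least the right-hand side.

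For the ``$\leq$'' direction, I would use the chain rule~(\ref{eq:entropy-1}), $I(AX;B)_\sigma = I(X;B)_\sigma + I(A;B|X)_\sigma$, and then argue that conditioning on the classical variable $X$ can only be matched (or beaten) by a single well-chosen value of $x$. Concretely, I would enlarge the $A$ system: for each $x$, let $A \to A A'_X$ where we absorb $X$ into the quantum system by replacing the reference with $AX$. The key observation is that for a state of the form~(\ref{eq:main-theorem-state}), the system $AX$ together purifies the channel input in the sense that we can view $\tilde{A} \equiv AX$ as the reference for a single effective pure input state $\tilde\phi^{\tilde A A'} \equiv \sum_x \sqrt{p(x)} |x\rangle^X |x\rangle^{X'} |\phi_x\rangle^{AA'}$ (with a copy of the classical register coherently retained). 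Sending $A'$ through $\mathcal{N}$ gives a pure-input state whose mutual information $I(\tilde A; B)$ equals $I(AX;B)_\sigma$, since the classical register $X$ in $\sigma$ is just the dephased version of the coherent copy and mutual information is invariant under this local isometry-then-dephasing on the $A$ side when one of the registers is already classical. Hence $\max_\sigma I(AX;B)_\sigma \leq \max_{\phi^{\tilde A A'}} I(\tilde A; B)_\phi$, which is exactly the right-hand side (after noting the purification dimension is unrestricted).

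The step I expect to be the main obstacle is making the coherification argument airtight: one must check that replacing the classical register $X$ by a coherent copy does not change the value of $I(AX;B)$, i.e., that $I(AX;B)_\sigma = I(A X'; B)_{\widetilde\sigma}$ where $\widetilde\sigma$ carries a coherent copy $X'$ of $X$. This follows because appending a coherent copy is an isometry on $A$'s side (data-processing invariance gives ``$\leq$'') and tracing it out recovers $\sigma$ (giving ``$\geq$''), but one should state it cleanly; alternatively, since in $\sigma$ the $X$ register is already classical, $H(AX) = H(A|X) + H(X)$ and $H(X) = H(X)_B + I(X;B)$ bookkeeping makes the identity $I(AX;B) = I(A;B|X) + I(X;B)$ transparent, and then a single optimal $x$ achieves $I(A;B|X=x) \geq I(A;B|X)$ while contributing $I(X;B) = 0$ for a point-mass distribution. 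I would present whichever of these two routes is shorter; the point-mass/convexity route avoids coherification entirely and is likely cleanest, relying only on the fact that $I(X;B)_\sigma = 0$ when $X$ is deterministic and that some value of $x$ does at least as well as the average.
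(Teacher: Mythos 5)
Your ``$\geq$'' direction matches the paper's. For ``$\leq$'', the route you say is ``likely cleanest'' and would probably present---the point-mass argument---does not work: the chain rule gives $I(AX;B)_{\sigma}=I(X;B)_{\sigma}+I(A;B|X)_{\sigma}$, and choosing the best single value $x^{\ast}$ only dominates the second term; it throws away the Holevo term $I(X;B)_{\sigma}$, which is generally strictly positive. Concretely, for the noiseless qubit channel and the ensemble consisting of $|0\rangle^{A'}$ and $|1\rangle^{A'}$ each with probability $1/2$ and no entanglement with $A$, every ensemble member has $I(A;B)_{\phi_{x}}=0$ while $I(AX;B)_{\sigma}=1$, so ``some value of $x$ does at least as well as the average'' is false for $I(AX;B)$. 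The corollary holds because the maximization on the right ranges over a \emph{different} input (e.g.\ a purification of the averaged channel input), not over members of the ensemble; this is exactly what the paper's proof captures by writing $I(AX;B)_{\sigma}=H(B)_{\sigma}+H(B|EX)_{\sigma}\leq H(B)_{\sigma}+H(B|E)_{\sigma}$ via strong subadditivity and observing that $H(B)+H(B|E)$ depends only on the average input density operator.

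Your coherification route, by contrast, contains the right idea, but not as an equality: for $\tilde{\phi}^{XX'AA'}=\sum_{x}\sqrt{p(x)}\,|x\rangle^{X}|x\rangle^{X'}|\phi_{x}\rangle^{AA'}$ one generally has $I(XX'A;B)_{\tilde{\sigma}}>I(XA;B)_{\sigma}$ (identity channel: $2H(B)_{\sigma}$ versus $H(B)_{\sigma}+\sum_{x}p(x)H(B)_{\phi_{x}}$), and $\tilde{\sigma}$ is \emph{not} obtained from $\sigma$ by an isometry on the $AX$ side (the environment only partially decoheres the copy), so ``isometry-then-dephasing invariance'' is not a valid justification. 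What is true, and is all you need, is the one-sided statement: tracing out $X'$ from $\tilde{\sigma}$ recovers $\sigma^{XAB}$ exactly, so quantum data processing gives $I(AX;B)_{\sigma}\leq I(AXX';B)_{\tilde{\sigma}}\leq\max_{\phi^{AA'}}I(A;B)$, the last step because $\tilde{\phi}$ is a legitimate pure bipartite input to the channel (the reference dimension being unrestricted, as you note). If you state that inequality chain and drop both the equality claim and the point-mass alternative, your proof is correct and is a genuinely different argument from the paper's: a purification-plus-data-processing argument rather than the paper's strong-subadditivity computation on the isometric extension.
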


\begin{proof}
The inequality $\max_{\sigma}I\left(  AX;B\right)  \geq\max_{\phi^{AA^{\prime
}}}I\left(  A;B\right)  $ follows because the state $\sigma$ is of the form in
(\ref{eq:maximization-state}) and we can always choose $p_{X}\left(  x\right)
=\delta_{x,x_{0}}$ and $\phi_{x_{0}}^{AA^{\prime}}$ to be the state that
maximizes $I\left(  A;B\right)  $.

We now show the other inequality $\max_{\sigma}I\left(  AX;B\right)  \leq
\max_{\phi^{AA^{\prime}}}I\left(  A;B\right)  $. First, consider that the
following chain of equalities holds for any state $\phi^{ABE}$ resulting from
the isometric extension of the channel:%
\begin{align*}
I\left(  A;B\right)   &  =H\left(  B\right)  +H\left(  A\right)  -H\left(
AB\right) \\
&  =H\left(  B\right)  +H\left(  BE\right)  -H\left(  E\right) \\
&  =H\left(  B\right)  +H\left(  B|E\right)  .
\end{align*}
In this way, we see that the mutual information is purely a function of the
channel input density operator Tr$_{A}\left\{  \phi^{AA^{\prime}}\right\}  $.
Then consider any state $\sigma$ of the form in (\ref{eq:maximization-state}).
The following chain of inequalities holds%
\begin{align*}
I\left(  AX;B\right)  _{\sigma}  &  =H\left(  A|X\right)  _{\sigma}+H\left(
B\right)  _{\sigma}-H\left(  E|X\right)  _{\sigma}\\
&  =H\left(  BE|X\right)  _{\sigma}+H\left(  B\right)  _{\sigma}-H\left(
E|X\right)  _{\sigma}\\
&  =H\left(  B|EX\right)  _{\sigma}+H\left(  B\right)  _{\sigma}\\
&  \leq H\left(  B|E\right)  _{\sigma}+H\left(  B\right)  _{\sigma}\\
&  \leq\max_{\phi^{AA^{\prime}}}I\left(  A;B\right)  .
\end{align*}
The first equality follows by expanding the mutual information. The second
equality follows because the state on $ABE$ is pure when conditioned on $X$.
The third equality follows from the entropy chain rule. The first inequality
follows from strong subadditivity, and the last follows because the state
after tracing out systems $X$ and $A$ is a particular state that arises from
the channel and cannot be larger than the maximum.
\end{proof}

\begin{corollary}
The quantum dynamic capacity formula is equivalent to the LSD\ quantum
capacity formula in the limit where $\lambda\rightarrow\infty$ and $\mu$ is
fixed, in the sense that%
\[
\max_{\sigma}I\left(  A\rangle BX\right)  =\max_{\phi^{AA^{\prime}}}I\left(
A\rangle B\right)  .
\]

\end{corollary}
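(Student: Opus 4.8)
The plan is to reduce both inequalities to a single structural identity: for any state $\sigma$ of the form in (\ref{eq:maximization-state}), the classical register $X$ factors out of the conditional coherent information. Writing $\rho_{x}^{AB}\equiv\mathcal{N}^{A^{\prime}\rightarrow B}(\phi_{x}^{AA^{\prime}})$, one observes that both $\sigma^{BX}$ and $\sigma^{ABX}$ are block-diagonal with respect to the orthonormal basis $\{|x\rangle\}$, so the chain rule for von Neumann entropy gives $H(BX)_{\sigma}=H(X)_{\sigma}+\sum_{x}p(x)H(B)_{\rho_{x}}$ and $H(ABX)_{\sigma}=H(X)_{\sigma}+\sum_{x}p(x)H(AB)_{\rho_{x}}$. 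Subtracting and using $I(A\rangle BX)_{\sigma}=H(BX)_{\sigma}-H(ABX)_{\sigma}$ yields the decomposition
\[
I(A\rangle BX)_{\sigma}=\sum_{x}p(x)\,I(A\rangle B)_{\rho_{x}},
\]
an honest convex combination of the coherent informations of the individual channel outputs. Verifying this block-diagonal decomposition of the two entropies is the only step that requires any care, and it is routine.

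From this identity both inequalities are immediate. For $\max_{\sigma}I(A\rangle BX)_{\sigma}\geq\max_{\phi^{AA^{\prime}}}I(A\rangle B)$, I would mimic the argument in the preceding corollary: take $\sigma$ with the degenerate distribution $p_{X}(x)=\delta_{x,x_{0}}$ and $\phi_{x_{0}}^{AA^{\prime}}$ equal to a maximizer of $I(A\rangle B)$; the displayed identity then collapses to $I(A\rangle BX)_{\sigma}=I(A\rangle B)_{\rho_{x_{0}}}$, which equals the maximum. For the reverse inequality $\max_{\sigma}I(A\rangle BX)_{\sigma}\leq\max_{\phi^{AA^{\prime}}}I(A\rangle B)$, the convex combination above is bounded by $\max_{x}I(A\rangle B)_{\rho_{x}}$, and each $\rho_{x}$ arises from a pure input $\phi_{x}^{AA^{\prime}}$, so $\max_{x}I(A\rangle B)_{\rho_{x}}\leq\max_{\phi^{AA^{\prime}}}I(A\rangle B)$; taking the supremum over $\sigma$ finishes the argument.

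I do not anticipate a genuine obstacle here. In contrast to the entanglement-assisted corollary, no appeal to strong subadditivity is needed, precisely because the coherent information is a difference of two entropies and hence affine under classical mixing of the conditioning register; the supremum over the ensemble is therefore always attained at a single pure state, which is exactly the content of the claimed equality. I would also add a one-line remark connecting this to the stated $\lambda\rightarrow\infty$, $\mu$ fixed regime: in that limit the term $\lambda\,I(A\rangle BX)_{\sigma}$ dominates $D_{\lambda,\mu}(\mathcal{N})$, so after normalizing by $\lambda$ the objective reduces to $\max_{\sigma}I(A\rangle BX)_{\sigma}$, which by the above is the LSD formula $\max_{\phi^{AA^{\prime}}}I(A\rangle B)$.
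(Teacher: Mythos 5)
Your proposal is correct and follows essentially the same route as the paper: the key identity $I(A\rangle BX)_{\sigma}=\sum_{x}p(x)\,I(A\rangle B)_{\rho_{x}}$, which the paper states without derivation, combined with choosing a degenerate distribution at a maximizer for one direction and bounding the average by the maximum for the other. Your explicit block-diagonal verification of the identity is just a fleshed-out version of the step the paper takes for granted.
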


\begin{proof}
The inequality $\max_{\sigma}I\left(  A\rangle BX\right)  \geq\max
_{\phi^{AA^{\prime}}}I\left(  A\rangle B\right)  $ follows because the state
$\sigma$ is of the form in (\ref{eq:maximization-state}) and we can always
choose $p_{X}\left(  x\right)  =\delta_{x,x_{0}}$ and $\phi_{x_{0}%
}^{AA^{\prime}}$ to be the state that maximizes $I\left(  A\rangle B\right)  $.

The inequality $\max_{\sigma}I\left(  A\rangle BX\right)  \leq\max
_{\phi^{AA^{\prime}}}I\left(  A\rangle B\right)  $ follows because $I\left(
A\rangle BX\right)  =\sum_{x}p_{X}\left(  x\right)  I\left(  A\rangle
B\right)  _{\phi_{x}}$ and the maximum is always greater than the average.
\end{proof}

\begin{corollary}
The quantum dynamic capacity formula is equivalent to the HSW\ classical
capacity formula in the limit where $\mu\rightarrow\infty$ and $\lambda$ is
fixed, in the sense that%
\[
\max_{\sigma}I\left(  A\rangle BX\right)  _{\sigma}+I\left(  X;B\right)
_{\sigma}=\max_{\left\{  p_{X}\left(  x\right)  ,\psi_{x}\right\}  }I\left(
X;B\right)  .
\]

\end{corollary}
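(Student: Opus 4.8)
The plan is to prove the claimed identity by establishing the two inequalities separately, exactly in the spirit of the two preceding corollaries. For the ``easy'' direction $\max_{\sigma}\left[I(A\rangle BX)_{\sigma}+I(X;B)_{\sigma}\right]\geq\max_{\{p_{X}(x),\psi_{x}\}}I(X;B)$, I would take an ensemble $\{p_{X}(x),\psi_{x}^{A^{\prime}}\}$ of pure channel inputs attaining the HSW maximum and build from it a state $\sigma$ of the form in (\ref{eq:maximization-state}) with a trivial (one-dimensional) reference system $A$, so that $\phi_{x}^{AA^{\prime}}=\psi_{x}^{A^{\prime}}$ is pure. For this state $I(A\rangle BX)_{\sigma}=0$ because $A$ is trivial, while $I(X;B)_{\sigma}$ is precisely the Holevo quantity of the chosen ensemble; hence the maximum over the larger class of states is at least the HSW value.

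The reverse inequality carries the content. First I would simplify the objective: for any $\sigma$ of the form in (\ref{eq:maximization-state}), the state on $ABE$ conditioned on $X=x$ is pure (since $\phi_{x}^{AA^{\prime}}$ is pure and $U_{\mathcal{N}}^{A^{\prime}\rightarrow BE}$ is an isometry), so $H(AB|X)_{\sigma}=H(E|X)_{\sigma}$, which gives $I(A\rangle BX)_{\sigma}=H(B|X)_{\sigma}-H(E|X)_{\sigma}$ and therefore
\[
I(A\rangle BX)_{\sigma}+I(X;B)_{\sigma}=H(B)_{\sigma}-H(E|X)_{\sigma}.
\]
Then I would invoke the same classic trick used in Section~\ref{sec:pure-suff} and in \cite{DHW05RI}: let $\rho_{x}^{A^{\prime}}\equiv\text{Tr}_{A}\{\phi_{x}^{AA^{\prime}}\}$ be the actual channel input, take a spectral (pure-state) decomposition $\rho_{x}^{A^{\prime}}=\sum_{y}p(y|x)\psi_{x,y}^{A^{\prime}}$, and form the augmented classical--quantum state $\theta^{XYB}\equiv\sum_{x,y}p(x)p(y|x)|x\rangle\langle x|^{X}\otimes|y\rangle\langle y|^{Y}\otimes\mathcal{N}^{A^{\prime}\rightarrow B}(\psi_{x,y}^{A^{\prime}})$, for which $\{p(x)p(y|x),\psi_{x,y}^{A^{\prime}}\}$ is a bona fide pure-state ensemble. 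The channel output is unchanged, so $H(B)_{\sigma}=H(B)_{\theta}$; since each $\psi_{x,y}^{A^{\prime}}$ is pure its $B$ and $E$ outputs have equal entropy, so $H(B|XY)_{\theta}$ is the average environment entropy of these pure inputs; and the environment of $\sigma$ conditioned on $X=x$ depends only on the reduced input $\rho_{x}^{A^{\prime}}$ (the system $A$ being a spectator). Concavity of the von Neumann entropy applied to the environment (complementary) channel then yields $H(E|X)_{\sigma}\geq H(B|XY)_{\theta}$, hence
\[
H(B)_{\sigma}-H(E|X)_{\sigma}\leq H(B)_{\theta}-H(B|XY)_{\theta}=I(XY;B)_{\theta}\leq\max_{\{p_{X}(x),\psi_{x}\}}I(X;B).
\]
Maximizing over $\sigma$ on the left completes the argument.

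I expect the main obstacle to be conceptual rather than computational: recognizing that the combination $I(A\rangle BX)_{\sigma}+I(X;B)_{\sigma}$ collapses to $H(B)_{\sigma}-H(E|X)_{\sigma}$, and then seeing that the reference system $A$ can be removed without loss by passing to a pure-state decomposition of the reduced inputs and applying concavity of entropy. Once those two observations are in hand, the remaining steps are routine entropy bookkeeping.
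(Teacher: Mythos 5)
Your proof is correct and follows essentially the same route as the paper's: the easy direction is identical, and for the hard direction both arguments first collapse the objective to $H(B)_{\sigma}-H(E|X)_{\sigma}$ and then bound it by $I(XY;B)$ for a refined classical variable obtained from a pure-state decomposition of the conditional channel inputs. The only cosmetic difference is that the paper realizes that decomposition by a von Neumann measurement on the reference system $A$ and invokes the fact that conditioning does not increase entropy, whereas you use a spectral decomposition of $\rho_{x}^{A^{\prime}}$ together with concavity of the complementary-channel output entropy---two phrasings of the same step.
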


The inequality $\max_{\sigma}I\left(  A\rangle BX\right)  _{\sigma}+I\left(
X;B\right)  _{\sigma}\geq\max_{\left\{  p_{X}\left(  x\right)  ,\psi
_{x}\right\}  }I\left(  X;B\right)  $ follows by choosing $\sigma$ to be the
pure ensemble that maximizes $I\left(  X;B\right)  $ and noting that $I\left(
A\rangle BX\right)  _{\sigma}$ vanishes for a pure ensemble.

We now prove the inequality $\max_{\sigma}I\left(  A\rangle BX\right)
_{\sigma}+I\left(  X;B\right)  _{\sigma}\leq\max_{\left\{  p_{X}\left(
x\right)  ,\psi_{x}\right\}  }I\left(  X;B\right)  $. Consider a state
$\omega^{XYBE}$ obtained by performing a von Neumann measurement on the $A$
system of the state $\sigma^{XABE}$. Then%
\begin{align*}
I\left(  A\rangle BX\right)  _{\sigma}+I\left(  X;B\right)  _{\sigma}  &
=H\left(  B\right)  _{\sigma}-H\left(  E|X\right)  _{\sigma}\\
&  =H\left(  B\right)  _{\omega}-H\left(  E|X\right)  _{\omega}\\
&  \leq H\left(  B\right)  _{\omega}-H\left(  E|XY\right)  _{\omega}\\
&  =H\left(  B\right)  _{\omega}-H\left(  B|XY\right)  _{\omega}\\
&  =I\left(  XY;B\right)  _{\omega}\\
&  \leq\max_{\left\{  p_{X}\left(  x\right)  ,\psi_{x}\right\}  }I\left(
X;B\right)  .
\end{align*}
The first equality follows by expanding the conditional coherent information
and the Holevo information. The second equality follows because the measured
$A$ system is not involved in the entropies. The first inequality follows
because conditioning does not increase entropy. The third equality follows
because the state $\omega$ is pure when conditioned on $X$ and $Y$. The fourth
equality follows by definition, and the last inequality follows for clear reasons.

\section{Single-letter dynamic capacity region for the quantum Hadamard
channels}

\label{sec:single-letter-hadamard}Below we show that the regularization in
(\ref{eq:multi-letter}) is not necessary if the quantum channel is a Hadamard
channel. This result holds because a Hadamard channel has a special structure.
The development of the proof is similar to that in Ref.~\cite{BHTW10}, but
simplified because we obtain the single-letter result more directly.

\begin{theorem}
\label{thm:Hadamard-theorem}The dynamic capacity region $\mathcal{C}%
_{\mathrm{{CQE}}}(\mathcal{N}_{\text{H}})$ of a quantum Hadamard channel
$\mathcal{N}_{\text{H}}$ is equal to its one-shot region $\mathcal{C}%
_{\mathrm{{CQE}}}^{(1)}(\mathcal{N}_{\text{H}})$.
\end{theorem}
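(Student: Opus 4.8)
The goal is to show that the union over $k$ in \eqref{eq:multi-letter} collapses at $k=1$ for $\mathcal{N}=\mathcal{N}_{\mathrm H}$. One inclusion is immediate, so I would prove the reverse: for every $n$ and every state $\sigma$ on $A^{\prime n}$ of the form in \eqref{eq:maximization-state} for $\mathcal{N}_{\mathrm H}^{\otimes n}$, the rescaled polytope $\tfrac1n\,\mathcal{C}^{(1)}_{\mathrm{CQE},\sigma}(\mathcal{N}_{\mathrm H}^{\otimes n})$ is contained in $\mathcal{C}^{(1)}_{\mathrm{CQE}}(\mathcal{N}_{\mathrm H})$. Since that polytope is cut out by the three bounds \eqref{eq:CQ-bound}--\eqref{eq:CQE-bound} evaluated on $\sigma$, it is enough to exhibit a single one-shot state $\sigma^{\star}$ on $A^{\prime}$, with some classical register, such that the three quantities $\tfrac1n I(AX;B^{n})_{\sigma}$, $\tfrac1n I(A\rangle B^{n}X)_{\sigma}$, and $\tfrac1n\bigl(I(X;B^{n})_{\sigma}+I(A\rangle B^{n}X)_{\sigma}\bigr)$ are each dominated by the corresponding single-letter quantity on $\sigma^{\star}$. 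The only structural input about $\mathcal{N}_{\mathrm H}$ I would use is the defining one: its complementary channel is entanglement-breaking, equivalently $\mathcal{N}_{\mathrm H}$ is degradable with a degrading map $\mathcal{D}^{B\rightarrow E}$ that factors as $\mathcal{D}=\mathcal{P}^{Y\rightarrow E}\circ\mathcal{M}^{B\rightarrow Y}$ with $\mathcal{M}$ a measurement onto a classical register $Y$ and $\mathcal{P}$ a preparation; in particular $E$ is reconstructed from the classical $Y$ alone.

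To build $\sigma^{\star}$ I would use the standard device for additivity of degradable channels, specialized as in Ref.~\cite{BHTW10}. From the isometric-extension state $\sigma^{XAB^{n}E^{n}}$, adjoin for each use $i$ the classical register $Y_{i}\equiv\mathcal{M}(B_{i})$ (so $E_{i}$ is a preparation from $Y_{i}$). Let $J$ be uniform on $\{1,\dots,n\}$ and take $\sigma^{\star}$ to be the state of channel use $J$ with the enlarged classical register $X^{\star}\equiv(X,J,B_{1}\cdots B_{J-1},Y_{J+1}\cdots Y_{n})$ and reference system the remaining purifying systems. Putting the ``past'' outputs $B_{<J}$ and the ``future'' degraded registers $Y_{>J}$ into $X^{\star}$ is what makes the relevant entropies telescope; and because $\mathcal{N}_{\mathrm H}$ is Hadamard, $Y_{>J}$ really is classical and may be folded into $X^{\star}$ legitimately --- this step fails for a generic channel.

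The three inequalities would then follow by chain-rule and telescoping arguments. The bound on $I(AX;B^{n})$ comes from writing $I(AX;B^{n})=\sum_{i}I(AXB_{<i};B_{i})$ and bounding each summand by a single-letter mutual information; this is the additivity of the entanglement-assisted-type quantity and uses nothing beyond the chain rule and data processing. The bound on $I(A\rangle B^{n}X)$ uses degradability: one telescopes the coherent information and uses that $E_{<i}$ is obtained from $Y_{<i}$ by a channel, so that conditioning on $X^{\star}$ on use $i$ only improves the bound; averaging over $J$ produces the factor $n$. The third bound is treated the same way, merging the Holevo term $I(X;B^{n})$ with the coherent-information telescoping, and it is here that $Y_{>J}$ being classical is essential. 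Averaging the three over $J$ then gives $\tfrac1n\,\mathcal{C}^{(1)}_{\mathrm{CQE},\sigma}(\mathcal{N}_{\mathrm H}^{\otimes n})\subseteq\mathcal{C}^{(1)}_{\mathrm{CQE},\sigma^{\star}}(\mathcal{N}_{\mathrm H})$, completing the proof.

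The step I expect to be hardest is the simultaneous bookkeeping: arranging one enlarged register $X^{\star}$ so that \emph{all three} bounds \eqref{eq:CQ-bound}--\eqref{eq:CQE-bound} telescope at once, and in particular the third bound $I(X;B^{n})+I(A\rangle B^{n}X)$, which for a general degradable channel need not be additive. This is exactly the place the Hadamard hypothesis is needed: the factorization $\mathcal{D}=\mathcal{P}\circ\mathcal{M}$ is what guarantees the ``future'' systems in $X^{\star}$ can be taken classical. An equivalent packaging that sidesteps the region-level bookkeeping is to prove additivity of the scalar formula $D_{\lambda,\mu}(\mathcal{N}_{\mathrm H})$ from Section~\ref{sec:dynamic-cap-formula}, then invoke Lemma~\ref{thm:CEQ-single-letter} together with the Pareto-surface characterization and the convexity of each $\mathcal{C}^{(1)}_{\mathrm{CQE}}(\mathcal{N}^{\otimes k})$ (the latter following from the concavity argument with the time-sharing variable absorbed into $X$).
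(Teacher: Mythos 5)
Your high-level strategy (one inclusion trivial; use the measure-and-prepare structure of the degrading map to fold classical data into the conditioning register) points in the right direction, but the construction you actually propose has a concrete flaw. The one-shot states in (\ref{eq:main-theorem-state}) require the register $X$ to be \emph{classical}, yet your $X^{\star}=(X,J,B_{1}\cdots B_{J-1},Y_{J+1}\cdots Y_{n})$ contains the raw past outputs $B_{<J}$, which for a Hadamard channel are genuinely quantum (only the degraded data $Y=\mathcal{M}(B)$ is classical); so $\sigma^{\star}$ is not a legitimate state for the single-letter region, and the Holevo term $I(X^{\star};B)$ in the third bound is not even of the right form. Moving $B_{<J}$ into the quantum reference rescues the first bound $C+2Q$, but then it is no longer clear that the third bound (\ref{eq:CQE-bound}) telescopes --- and that is precisely the difficulty the Hadamard hypothesis must resolve. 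The paper resolves it with a different bookkeeping than yours: in the two-channel additivity proof (Lemma~\ref{lem:CEQ-base-case}) the \emph{later} use is conditioned on the measured output $Y$ of the earlier use (classical, placed in the $X$ register) with the earlier environment $E_{1}$ placed in the quantum reference, while the \emph{earlier} use keeps the later channel input $A_{2}^{\prime}$ in its quantum reference; the whole content of the proof is then the pair of monotonicity inequalities $H(B_{2}|B_{1}X)\leq H(B_{2}|YX)$ and $H(E_{2}|YX)\leq H(E_{2}|E_{1}X)$, which you never identify. Asserting that the three quantities ``telescope at once'' for your assignment is exactly the unproven step, and your own closing remark concedes this is where the difficulty lies.

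Note also that your demand for a \emph{single} $\sigma^{\star}$ dominating all three bounds simultaneously is stronger than what is needed and stronger than what the telescoping naturally yields. The paper avoids this by scalarizing: it proves additivity of the quantum dynamic capacity formula $D_{\lambda,\mu}$ in (\ref{eq:objective}) for $\mathcal{N}_{\mathrm H}$ tensored with an \emph{arbitrary} channel (which is what the induction in Lemma~\ref{thm:CEQ-single-letter} actually requires, since the second factor is $\mathcal{N}_{\mathrm H}^{\otimes n}$), and then recovers the region statement through the Pareto/scalarization theorem of Section~\ref{sec:dynamic-cap-formula}; the weighted-sum formulation lets the two channel uses be assigned different effective single-letter states and lets entropies be regrouped freely. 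You mention this packaging only as an aside, but it is where all the work lives: without the base-case additivity argument (the entropy chain built on the factorized degrading map), neither your region-level route nor the scalarized route is complete.
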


The proof of the above theorem follows in two parts:\ 1)\ the below lemma
shows the quantum dynamic capacity formula is additive when one of the
channels is Hadamard and 2)\ the induction argument in
Lemma~\ref{thm:CEQ-single-letter}\ that proves single-letterization.

\begin{lemma}
\label{lem:CEQ-base-case}The following additivity relation holds for a
Hadamard channel $\mathcal{N}_{H}$ and any other channel $\mathcal{N}$:%
\[
D_{\lambda,\mu}(\mathcal{N}_{H}\otimes\mathcal{N})=D_{\lambda,\mu}%
(\mathcal{N}_{H})+D_{\lambda,\mu}(\mathcal{N}).
\]

\end{lemma}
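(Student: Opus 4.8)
The plan is to prove the additivity of the quantum dynamic capacity formula by establishing the non-trivial inequality $D_{\lambda,\mu}(\mathcal{N}_H\otimes\mathcal{N})\leq D_{\lambda,\mu}(\mathcal{N}_H)+D_{\lambda,\mu}(\mathcal{N})$; the reverse inequality is immediate because a product input state (a product ensemble of product pure states) is always available in the maximization defining $D_{\lambda,\mu}(\mathcal{N}_H\otimes\mathcal{N})$, and for such a state all three information quantities $I(AX;B)$, $I(A\rangle BX)$, and $I(X;B)+I(A\rangle BX)$ are additive. So the work is entirely in the upper bound.

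First I would rewrite the objective in a form adapted to the structure of a Hadamard channel. Recall that a Hadamard channel $\mathcal{N}_H$ has a complementary channel that is entanglement-breaking, which means there is an isometry $U_{\mathcal{N}_H}^{A'\to B_1 E_1}$ such that the environment output $E_1$ can be obtained by first measuring and then preparing; equivalently, there is a degrading map of the form $\mathcal{D}^{B_1\to E_1}=\mathcal{P}\circ\mathcal{M}$ where $\mathcal{M}$ is a measurement channel with classical output $Y$ and $\mathcal{P}$ is a preparation. The key consequence I will use is that $\mathcal{N}_H$ is degradable, and moreover the degrading map factors through a classical register $Y$. Then, given an optimal state $\sigma^{XAB_1B_2E_1E_2}$ for the joint channel $\mathcal{N}_H\otimes\mathcal{N}$ (where subscripts $1$ refer to the Hadamard channel outputs and $2$ to those of $\mathcal{N}$), I would introduce the classical variable $Y$ produced by applying $\mathcal{M}$ to $B_1$, obtaining an augmented state $\theta^{XYAB_1B_2E_1E_2}$. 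Because $Y$ is classically correlated with everything and $E_1$ is a function of $Y$ through $\mathcal{P}$, conditioning on $Y$ will let me break the entropies involving system $1$ away from those involving system $2$.

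The core of the argument is then a chain of entropy manipulations. Using the identities from the excerpt — in particular $I(A\rangle BX)_\sigma=\tfrac12 I(A;B|X)_\sigma-\tfrac12 I(A;E|X)_\sigma$ and the expansions of $I(AX;B)$ and $I(X;B)$ in terms of conditional entropies — I would write the objective $I(AX;B_1B_2)+\lambda I(A\rangle B_1B_2X)+\mu(I(X;B_1B_2)+I(A\rangle B_1B_2X))$ entirely in terms of entropies of $B_1B_2$ conditioned on combinations of $X$, $Y$, $E_1$, $E_2$. I would then use: (i) the chain rule to split $H(B_1B_2|\cdots)=H(B_1|\cdots)+H(B_2|B_1\cdots)$; (ii) subadditivity to drop conditioning on $B_1$ in $B_2$-terms where it helps and strong subadditivity to add it back where it helps; and (iii) crucially, the fact that, after appending $Y$, the system $E_1$ is generated from $Y$ alone, so terms of the form $H(B_1|XYE_1E_2\cdots)$ can be handled by noting $E_1$ adds nothing beyond $Y$ and $B_2,E_2$ are produced by a channel acting on a system independent of $B_1$'s input given the classical data. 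The goal is to reach an upper bound of the form $[\text{objective evaluated on a state of } \mathcal{N}_H\text{ with classical variable } XY] + [\text{objective evaluated on a state of } \mathcal{N} \text{ with classical variable } XY]$, each of which is at most $D_{\lambda,\mu}(\mathcal{N}_H)$ and $D_{\lambda,\mu}(\mathcal{N})$ respectively, since a state conditioned on the enlarged classical register is still a valid state of the required form.

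I expect the main obstacle to be bookkeeping the many conditional-entropy terms so that the Hadamard structure (degradability through the classical register $Y$) is invoked at exactly the right place: specifically, showing that the "cross" terms coupling $B_1$ with $E_2$ or $B_2$ with $E_1$ can be decoupled. The Hadamard property is what makes $I(A\rangle B_1 B_2 X)$ behave additively rather than superadditively here — without it, coherent information of the tensor channel can exceed the sum — so the delicate step is verifying that, once $Y$ is adjoined, every term involving $E_1$ reduces to a term involving only the classical $Y$, which then commutes past all the $B_2,E_2$ entropies. I would handle the $\lambda\to\infty$ degenerate case (pure LSD quantum capacity) separately if needed, though it should fall out of the same computation since it reduces to additivity of coherent information for degradable channels, which is classical.
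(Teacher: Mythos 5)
Your proposal follows essentially the same route as the paper: it bounds the non-trivial direction by exploiting the Hadamard channel's degrading map factoring through a classical register $Y$, applying the measurement part to $B_1$, and splitting the entropic objective via the chain rule, subadditivity, and data processing into a single-channel term for $\mathcal{N}_H$ plus a term for $\mathcal{N}$ with an enlarged classical register, each bounded by its own $D_{\lambda,\mu}$. The only small deviation is that in the paper the Hadamard-channel term is kept with classical variable $X$ alone (evaluated on the pre-measurement output $\theta$, with $AA_2^{\prime}$ as the purifying system), since adjoining $Y$---a function of the channel output $B_1$---would not leave a state of the required form for $\mathcal{N}_H$; only the $\mathcal{N}$-term carries the augmented register $XY$, with $AE_1$ as its quantum side system.
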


\begin{proof}
We first note that the inequality $D_{\lambda,\mu}(\mathcal{N}_{H}%
\otimes\mathcal{N})\geq D_{\lambda,\mu}(\mathcal{N}_{H})+D_{\lambda,\mu
}(\mathcal{N})$ holds for any two channels simply by selecting the state
$\sigma$ in the maximization to be a tensor product of the ones that
individually maximize $D_{\lambda,\mu}(\mathcal{N}_{H})$ and $D_{\lambda,\mu
}(\mathcal{N})$.

So we prove that the non-trivial inequality $D_{\lambda,\mu}(\mathcal{N}%
_{H}\otimes\mathcal{N})\leq D_{\lambda,\mu}(\mathcal{N}_{H})+D_{\lambda,\mu
}(\mathcal{N})$ holds when the first channel is a Hadamard channel. Since the
first channel is Hadamard, it is degradable and its degrading map has a
particular structure: there are maps $\mathcal{D}_{1}^{B_{1}\rightarrow Y}$
and $\mathcal{D}_{2}^{Y\rightarrow E_{1}}$ where $Y$ is a classical register
and such that the degrading map is $\mathcal{D}_{2}^{Y\rightarrow E_{1}}%
\circ\mathcal{D}_{1}^{B_{1}\rightarrow Y}$~\cite{BHTW10,KMNR07}. Suppose the
state we are considering to input to the tensor product channel is%
\[
\rho^{XAA_{1}^{\prime}A_{2}^{\prime}}\equiv\sum_{x}p_{X}\left(  x\right)
\left\vert x\right\rangle \left\langle x\right\vert ^{X}\otimes\phi
_{x}^{AA_{1}^{\prime}A_{2}^{\prime}},
\]
and this state is the one that maximizes $D_{\lambda,\mu}(\mathcal{N}%
_{H}\otimes\mathcal{N})$. Suppose that the output of the first channel is%
\[
\theta^{XAB_{1}E_{1}A_{2}^{\prime}}\equiv U_{\mathcal{N}_{H}}^{A_{1}^{\prime
}\rightarrow B_{1}E_{1}}(\rho^{XAA_{1}^{\prime}A_{2}^{\prime}}),
\]
and the output of the second channel is%
\[
\omega^{XAB_{1}E_{1}B_{2}E_{2}}\equiv U_{\mathcal{N}}^{A_{2}^{\prime
}\rightarrow B_{2}E_{2}}(\theta^{XAB_{1}E_{1}A_{2}^{\prime}}).
\]
Finally, we define the following state as the result of applying the first
part of the Hadamard degrading map (a von Neumann measurement) to $\omega$:%
\[
\sigma^{XYAE_{1}B_{2}E_{2}}\equiv\mathcal{D}_{1}^{B_{1}\rightarrow Y}%
(\omega^{XAB_{1}E_{1}B_{2}E_{2}}).
\]
In particular, the state $\sigma$ on systems $AE_{1}B_{2}E_{2}$ is pure when
conditioned on $X$ and $Y$. Then the following chain of inequalities holds:%
\begin{align*}
D_{\lambda,\mu}\left(  \mathcal{N}_{H}\otimes\mathcal{N}\right)   &  =I\left(
AX;B_{1}B_{2}\right)  _{\omega}+\lambda I\left(  A\rangle B_{1}B_{2}X\right)
_{\omega}+\mu\left(  I\left(  X;B_{1}B_{2}\right)  _{\omega}+I\left(  A\rangle
B_{1}B_{2}X\right)  _{\omega}\right) \\
&  =H\left(  B_{1}B_{2}E_{1}E_{2}|X\right)  _{\omega}+\lambda H\left(
B_{1}B_{2}|X\right)  _{\omega}+\left(  \mu+1\right)  H\left(  B_{1}%
B_{2}\right)  _{\omega}-\left(  \lambda+\mu+1\right)  H\left(  E_{1}%
E_{2}|X\right)  _{\omega}\\
&  =H\left(  B_{1}E_{1}|X\right)  _{\omega}+\lambda H\left(  B_{1}|X\right)
_{\omega}+\left(  \mu+1\right)  H\left(  B_{1}\right)  _{\omega}-\left(
\lambda+\mu+1\right)  H\left(  E_{1}|X\right)  _{\omega}+\\
&  \ \ \ \ \ H\left(  B_{2}E_{2}|B_{1}E_{1}X\right)  _{\omega}+\lambda
H\left(  B_{2}|B_{1}X\right)  _{\omega}+\left(  \mu+1\right)  H\left(
B_{2}|B_{1}\right)  _{\omega}-\left(  \lambda+\mu+1\right)  H\left(
E_{2}|E_{1}X\right)  _{\omega}\\
&  \leq H\left(  B_{1}E_{1}|X\right)  _{\theta}+\lambda H\left(
B_{1}|X\right)  _{\theta}+\left(  \mu+1\right)  H\left(  B_{1}\right)
_{\theta}-\left(  \lambda+\mu+1\right)  H\left(  E_{1}|X\right)  _{\theta}+\\
&  \ \ \ \ \ H\left(  B_{2}E_{2}|YX\right)  _{\sigma}+\lambda H\left(
B_{2}|YX\right)  _{\sigma}+\left(  \mu+1\right)  H\left(  B_{2}\right)
_{\sigma}-\left(  \lambda+\mu+1\right)  H\left(  E_{2}|YX\right)  _{\sigma}\\
&  =I\left(  AA_{2}^{\prime}X;B_{1}\right)  _{\theta}+\lambda I\left(
AA_{2}^{\prime}\rangle B_{1}X\right)  _{\theta}+\mu\left(  I\left(
X;B_{1}\right)  _{\theta}+I\left(  AA_{2}^{\prime}\rangle B_{1}X\right)
_{\theta}\right)  +\\
&  \ \ \ \ \ I\left(  AE_{1}YX;B_{2}\right)  _{\sigma}+\lambda I\left(
AE_{1}\rangle B_{2}YX\right)  _{\sigma}+\mu\left(  I\left(  YX;B_{2}\right)
_{\sigma}+I\left(  AE_{1}\rangle B_{2}YX\right)  _{\sigma}\right) \\
&  \leq D_{\lambda,\mu}\left(  \mathcal{N}_{H}\right)  +D_{\lambda,\mu}\left(
\mathcal{N}\right)  .
\end{align*}
The first equality follows by evaluating the quantum dynamic capacity formula
$D_{\lambda,\mu}\left(  \mathcal{N}_{H}\otimes\mathcal{N}\right)  $ on the
state $\rho$. The next two equalities follow by rearranging entropies and
because the state $\omega$ on systems $AB_{1}E_{1}B_{2}E_{2}$ is pure when
conditioned on $X$. The inequality in the middle is the crucial one and
follows from the Hadamard structure of the channel: we exploit monotonicity of
conditional entropy under quantum operations so that $H\left(  B_{2}%
|B_{1}X\right)  _{\omega}\leq H\left(  B_{2}|YX\right)  _{\sigma}$ and
$H\left(  E_{2}|YX\right)  _{\sigma}\leq H\left(  E_{2}|E_{1}X\right)
_{\omega}$. The next equality follows by rearranging entropies and the final
one follows because $\theta$ is a state of the form
(\ref{eq:maximization-state}) for the first channel while $\sigma$ is a state
of the form (\ref{eq:maximization-state}) for the second channel.
\end{proof}

\section{The dynamic capacity region of a dephasing channel}

\label{sec:dephasing}The below theorem shows that the full dynamic capacity
region admits a particularly simple form when the noisy quantum channel is a
qubit dephasing channel $\Delta_{p}$ where%
\begin{align*}
\Delta_{p}\left(  \rho\right)   &  \equiv\left(  1-p\right)  \rho
+p\Delta\left(  \rho\right)  ,\\
\Delta\left(  \rho\right)   &  \equiv\left\langle 0\left\vert \rho\right\vert
0\right\rangle \left\vert 0\right\rangle \left\langle 0\right\vert
+\left\langle 1\left\vert \rho\right\vert 1\right\rangle \left\vert
1\right\rangle \left\langle 1\right\vert .
\end{align*}
A dephasing channel is an example of a quantum Hadamard channel~\cite{BHTW10}%
.\footnote{Br\'{a}dler showed that cloning channels and an Unruh channel are
also in the Hadamard class \cite{B09}.} Figure~\ref{fig:full-plot-dephasing}%
\ plots this region for the case of a dephasing channel with dephasing
parameter $p=0.2$.%
%TCIMACRO{\FRAME{ftbpFU}{4.0413in}{3.0433in}{0pt}{\Qcb{(Color online) A plot of
%the dynamic capacity region for a qubit dephasing channel with dephasing
%parameter $p=0.2$. The plot shows that the classically-enhanced father (CEF)
%trade-off curve lies along the boundary of the dynamic capacity region. The
%rest of the region is simply the combination of the CEF points with the unit
%protocols teleportation (TP), super-dense coding (SD), and entanglement
%distribution (ED).}}{\Qlb{fig:full-plot-dephasing}}{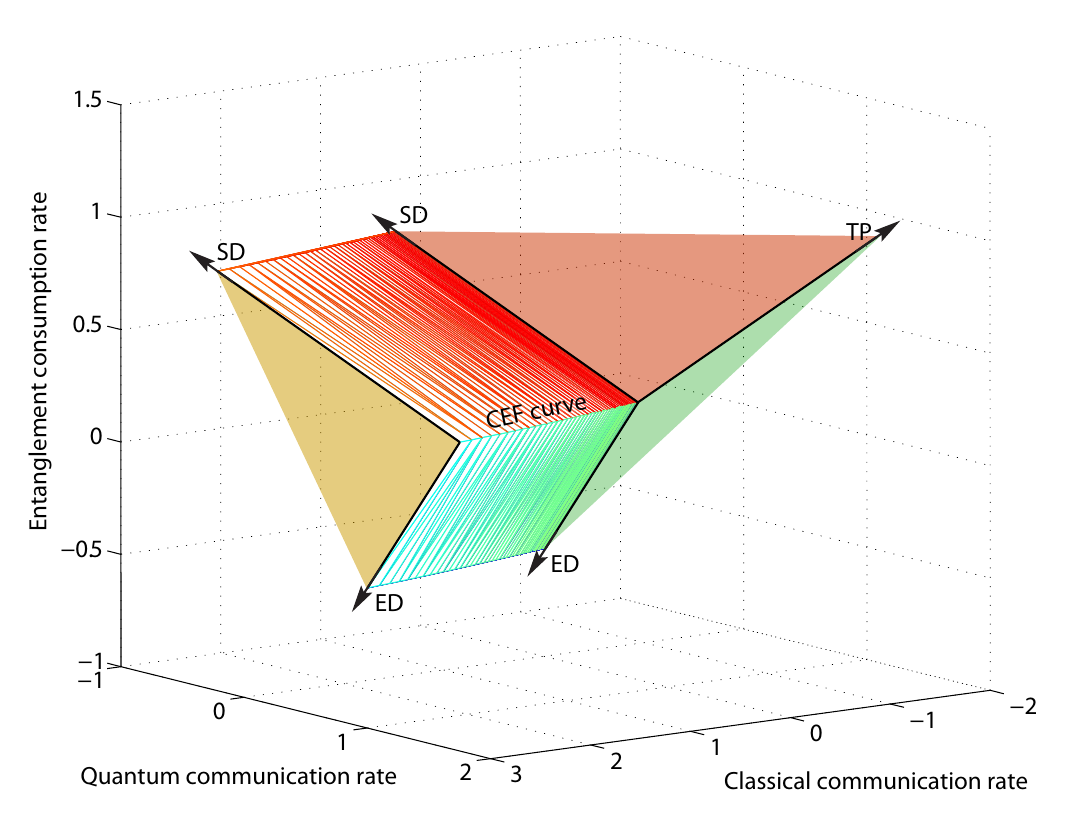}%
%{\special{ language "Scientific Word";  type "GRAPHIC";
%maintain-aspect-ratio TRUE;  display "USEDEF";  valid_file "F";
%width 4.0413in;  height 3.0433in;  depth 0pt;  original-width 6.7464in;
%original-height 5.0669in;  cropleft "0";  croptop "1";  cropright "1";
%cropbottom "0";  filename 'full-triple-dynamic.pdf';file-properties "XNPEU";}%
%}}%
%BeginExpansion
\begin{figure}
[ptb]
\begin{center}
\includegraphics[
natheight=5.066900in,
natwidth=6.746400in,
height=3.0433in,
width=4.0413in
]%
{}%
\caption{(Color online) A plot of the dynamic capacity region for a qubit
dephasing channel with dephasing parameter $p=0.2$. The plot shows that the
classically-enhanced father (CEF) trade-off curve lies along the boundary of
the dynamic capacity region. The rest of the region is simply the combination
of the CEF points with the unit protocols teleportation (TP), super-dense
coding (SD), and entanglement distribution (ED).}%
\label{fig:full-plot-dephasing}%
\end{center}
\end{figure}
%EndExpansion
The proof of the following theorem exploits the same techniques as in
Ref.~\cite{BHTW10}.

\begin{theorem}
\label{thm:dephasing}The dynamic capacity region $\mathcal{C}_{\text{CQE}%
}(\Delta_{p})$ of a dephasing channel with dephasing parameter $p$\ is the set
of all $C$, $Q$, and $E$ such that%
\begin{align}
C+2Q  &  \leq1+H_{2}\left(  \nu\right)  -H_{2}(\gamma\left(  \nu,p\right)
),\\
Q+E  &  \leq H_{2}\left(  \nu\right)  -H_{2}(\gamma\left(  \nu,p\right)  ),\\
C+Q+E  &  \leq1-H_{2}(\gamma\left(  \nu,p\right)  ),
\end{align}
where $\nu\in\left[  0,1/2\right]  $, $H_{2}$ is the binary entropy function,
and%
\[
\gamma\left(  \nu,p\right)  \equiv\frac{1}{2}+\frac{1}{2}\sqrt{1-16\cdot
\frac{p}{2}\left(  1-\frac{p}{2}\right)  \nu(1-\nu)}.
\]

\end{theorem}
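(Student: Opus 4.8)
The plan is to derive the region by specializing Theorem~\ref{thm:Hadamard-theorem} (which applies since $\Delta_p$ is Hadamard, so the one-shot region equals the full region) and then explicitly optimizing the three entropic bounds in (\ref{eq:CQ-bound}--\ref{eq:CQE-bound}) over input ensembles. First I would argue that it suffices to restrict attention to a single pure input state $\phi^{AA'}$ rather than an ensemble: because the dephasing channel commutes with the phase-damping structure, and because mixing states can only be replicated by absorbing the mixing index into $X$, the extremal points of the region should be achieved by a trivial classical register $X$ together with one bipartite pure state — or, more carefully, by noting that all three functionals $I(AX;B)$, $I(A\rangle BX)$, $I(X;B)+I(A\rangle BX)$ are simultaneously optimized by a state diagonal in the dephasing basis, so the classical register plays no role beyond possibly flattening a distribution. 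This mirrors the argument in Ref.~\cite{BHTW10}.

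Next I would parametrize the relevant pure input state. Write $\phi^{AA'}$ so that the reduced state on $A'$ is $\mathrm{diag}(\nu, 1-\nu)$ in the $\{|0\rangle,|1\rangle\}$ basis (this is the only degree of freedom that matters, since the channel is covariant under the phase group and the entropies depend only on the input spectrum). Then I would compute each of the three entropies:
\begin{itemize} — no, use inline prose instead —
I would compute $H(B)$, which equals $H_2(\nu)$ since dephasing does not change the diagonal; $H(A\rangle B) = H(B) - H(AB) = H(B) - H(E)$ using purity of the channel dilation, where $H(E)$ is the entropy of the environment output; and $H(B|E)$, which enters the formula $I(A;B) = H(B) + H(B|E)$ from the corollary. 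The key computation is identifying $H(E)$ (equivalently $H(AB)$) as $H_2(\gamma(\nu,p))$, where $\gamma(\nu,p) = \tfrac12 + \tfrac12\sqrt{1 - 16\cdot\tfrac p2(1-\tfrac p2)\nu(1-\nu)}$. This arises because the isometric extension of $\Delta_p$ is (up to relabeling) a controlled operation, and the joint state of $AB$ (equivalently the environment) has eigenvalues that are the two roots of a quadratic whose discriminant produces exactly that square root — the factor $\tfrac p2(1-\tfrac p2)$ comes from writing $\Delta_p$ with Kraus operators $\sqrt{1-p/2}\,I$ and $\sqrt{p/2}\,Z$, and $\nu(1-\nu)$ from the off-diagonal of the input. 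Plugging these into (\ref{eq:CQ-bound}--\ref{eq:CQE-bound}) with $C+2Q \le I(AX;B) = H(B) + H(B|E) = H_2(\nu) + (1 - H_2(\gamma))$ — wait, I need $1 + H_2(\nu) - H_2(\gamma)$; the "$1$" comes from a qubit of coherent/classical capacity contributed by the $B$ system being a qubit, so $I(AX;B) = H(A|X) + H(B) - H(E|X) = H_2(\nu) + H_2(\nu) - H_2(\gamma)$ does not obviously give it — I would instead track carefully that with $X$ trivial, $I(A;B) = H(B) + H(B|E)$ and $H(B|E) = H(BE) - H(E) = H(A) - H(E)$, and on a qubit channel the right bookkeeping yields the claimed $1 + H_2(\nu) - H_2(\gamma)$ once one notes $H(A) = H_2(\nu)$ is the wrong identification and instead the maximally mixed-ish structure forces a "$1$"; this bookkeeping is exactly where I must be careful.

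The main obstacle, then, is twofold: (i) justifying rigorously that a single pure state with the above two-parameter form (really one parameter $\nu$, after using phase covariance to kill the phase) suffices — i.e., that no ensemble and no entanglement with a larger reference beats it — and (ii) carrying out the environment-entropy computation to confirm the eigenvalues of $\omega^{AE}$ (or $\omega^{AB}$) are $\gamma(\nu,p)$ and $1-\gamma(\nu,p)$. For (i) I would lean on the concavity/covariance argument from Section~\ref{sec:dynamic-cap-formula} together with the fact that $\Delta_p$ is Hadamard (Theorem~\ref{thm:Hadamard-theorem}), reducing everything to a single-letter optimization, and then on a symmetrization argument (average over the phase group $\{I, Z\}$ acting on the input) to reduce to real diagonal inputs. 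For (ii) I would explicitly take $U_{\Delta_p}^{A'\to BE}$ with $E$ a qubit, compute $\phi^{ABE}$ for $|\phi\rangle^{AA'} = \sqrt{\nu}|0\rangle^A|0\rangle^{A'} + \sqrt{1-\nu}|1\rangle^A|1\rangle^{A'}$ (general real input), and diagonalize the $2\times 2$ matrix $\omega^{E}$; its eigenvalues produce $H_2(\gamma(\nu,p))$, and since $\phi^{ABE}$ is pure, $H(E) = H(AB)$, $H(E|X)=H(AB|X)$, etc. Once these entropies are in hand, substituting into (\ref{eq:CQ-bound}--\ref{eq:CQE-bound}) and taking the union over $\nu \in [0,1/2]$ (the range $[1/2,1]$ being redundant by the $\nu \leftrightarrow 1-\nu$ symmetry of $\gamma$) gives exactly the three claimed inequalities, completing the proof.
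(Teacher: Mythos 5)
There is a genuine gap, and it is exactly the point you flagged yourself and then left unresolved: the reduction to a \emph{trivial} classical register $X$ and a single pure state $\phi^{AA'}$ is false, and no ``careful bookkeeping'' can produce the ``$1$'' in the first and third bounds from such a state. With $X$ trivial and $\mathrm{Tr}_A\{\phi^{AA'}\}=\nu|0\rangle\langle 0|+(1-\nu)|1\rangle\langle 1|$, dephasing preserves the diagonal, so $H(B)=H_2(\nu)$, giving $I(AX;B)=2H_2(\nu)-H_2(\gamma(\nu,p))$ and, worse, $I(X;B)+I(A\rangle BX)=I(A\rangle B)=H_2(\nu)-H_2(\gamma(\nu,p))$, which falls short of the claimed boundary values $1+H_2(\nu)-H_2(\gamma)$ and $1-H_2(\gamma)$ by $1-H_2(\nu)$ whenever $\nu<1/2$. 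The ``$1$'' comes from $H(B)=1$, i.e.\ a maximally mixed \emph{average} output, and this is only compatible with conditional entropies $H(A|X)=H(B|X)=H_2(\nu)$, $H(E|X)=H_2(\gamma)$ if the classical register is used nontrivially. The paper's proof uses precisely the ensemble in (\ref{eq:mu-cq-state-CEQ}): a uniform mixture, indexed by $X$, of $\psi_0$ with reduction $\mathrm{diag}(\nu,1-\nu)$ and its bit-flipped partner $\psi_1$ with reduction $\mathrm{diag}(1-\nu,\nu)$. The optimality of this form is shown by a symmetrization over the bit-flip Pauli (which commutes with $\Delta_p$ and leaves $H(B|X)$, $H(E|X)$, $H(A|X)$ unchanged) where the flip index is \emph{retained in} $X$ rather than discarded; concavity of entropy then pushes $H(B)$ up to $1$, and a ``max exceeds average'' step reduces the remaining $x$-dependent terms to a single $\nu$. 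Your proposed symmetrization over the phase group $\{I,Z\}$ only justifies restricting to inputs diagonal in the dephasing basis (the paper's first step, via Lemma~11 of Ref.~\cite{BHTW10}); it does not and cannot eliminate the classical register, and with it eliminated the converse-side weighted optimization of $D_{\lambda,\mu}$ is simply not attained.

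The rest of your plan is sound and matches the paper where it overlaps: invoking Theorem~\ref{thm:Hadamard-theorem} to single-letterize, and the explicit environment computation in your step (ii) is correct---with Kraus operators $\sqrt{1-p/2}\,I$ and $\sqrt{p/2}\,Z$ the eigenvalues of the environment output for a diagonal input $\mathrm{diag}(\nu,1-\nu)$ are indeed $\gamma(\nu,p)$ and $1-\gamma(\nu,p)$, so $H(E|X)=H_2(\gamma(\nu,p))$. But to complete the proof you must replace your step (i) by the bit-flip ensemble argument (or an equivalent one): show that for every $\lambda,\mu\geq 0$ the objective $I(AX;B)+\lambda I(A\rangle BX)+\mu(I(X;B)+I(A\rangle BX))$ evaluated on an arbitrary diagonal ensemble is upper bounded by its value on a state of the form (\ref{eq:mu-cq-state-CEQ}), and then evaluate (\ref{eq:CQ-bound})--(\ref{eq:CQE-bound}) on that state to obtain the three claimed inequalities.
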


\begin{proof}
We first notice that it suffices to consider an ensemble of pure states whose
reductions to $A^{\prime}$ are diagonal in the dephasing basis (as in Lemma~11
of Ref.~\cite{BHTW10}). Next we prove below that it is sufficient to consider
an ensemble of the following form to characterize the boundary points of the
region:%
\begin{equation}
\frac{1}{2}\left\vert 0\right\rangle \left\langle 0\right\vert ^{X}\otimes
\psi_{0}^{AA^{\prime}}+\frac{1}{2}\left\vert 1\right\rangle \left\langle
1\right\vert ^{X}\otimes\psi_{1}^{AA^{\prime}}, \label{eq:mu-cq-state-CEQ}%
\end{equation}
where $\psi_{0}^{AA^{\prime}}$ and $\psi_{1}^{AA^{\prime}}$ are pure states,
defined as follows for $\nu\in\left[  0,1/2\right]  $:%
\begin{align}
\text{Tr}_{A}\left\{  \psi_{0}^{AA^{\prime}}\right\}   &  =\nu\left\vert
0\right\rangle \left\langle 0\right\vert ^{A^{\prime}}+\left(  1-\nu\right)
\left\vert 1\right\rangle \left\langle 1\right\vert ^{A^{\prime}%
},\label{eq:1st-mu-state-CEQ}\\
\text{Tr}_{A}\left\{  \psi_{1}^{AA^{\prime}}\right\}   &  =\left(
1-\nu\right)  \left\vert 0\right\rangle \left\langle 0\right\vert ^{A^{\prime
}}+\nu\left\vert 1\right\rangle \left\langle 1\right\vert ^{A^{\prime}}.
\label{eq:2nd-mu-state-CEQ}%
\end{align}
We now prove the above claim. We assume without loss of generality that the
dephasing basis is the computational basis. Consider a classical-quantum state
with a finite number $N$ of conditional density operators $\phi_{x}%
^{AA^{\prime}}$ whose reduction to $A^{\prime}$ is diagonal:%
\begin{equation}
\rho^{XAA^{\prime}}\equiv\sum_{x=0}^{N-1}p_{X}\left(  x\right)  |x\rangle
\langle x|^{X}\otimes\phi_{x}^{AA^{\prime}}.\nonumber
\end{equation}
We can form a new classical-quantum state with double the number of
conditional density operators by \textquotedblleft
bit-flipping\textquotedblright\ the original conditional density operators:%
\[
\sigma^{XAA^{\prime}}\equiv\frac{1}{2}\sum_{x=0}^{N-1}p_{X}\left(  x\right)
\left(  |x\rangle\langle x|^{X}\otimes\phi_{x}^{AA^{\prime}}+|x+N\rangle
\langle x+N|^{X}\otimes X^{A^{\prime}}\phi_{x}^{AA^{\prime}}X^{A^{\prime}%
}\right)  ,
\]
where $X$ is the $\sigma_{X}$ \textquotedblleft bit-flip\textquotedblright%
\ Pauli operator. Consider the following chain of inequalities that holds for
all $\lambda,\mu\geq0$:%
\begin{align*}
&  I\left(  AX;B\right)  _{\rho}+\lambda I\left(  A\rangle BX\right)  _{\rho
}+\mu\left(  I\left(  X;B\right)  _{\rho}+I\left(  A\rangle BX\right)  _{\rho
}\right) \\
&  =H\left(  A|X\right)  _{\rho}+\left(  \mu+1\right)  H\left(  B\right)
_{\rho}+\lambda H\left(  B|X\right)  _{\rho}-\left(  \lambda+\mu+1\right)
H\left(  E|X\right)  _{\rho}\\
&  \leq\left(  \mu+1\right)  H\left(  B\right)  _{\sigma}+H\left(  A|X\right)
_{\sigma}+\lambda H\left(  B|X\right)  _{\sigma}-\left(  \lambda+\mu+1\right)
H\left(  E|X\right)  _{\sigma}\\
&  =\left(  \mu+1\right)  +H\left(  A|X\right)  _{\sigma}+\lambda H\left(
B|X\right)  _{\sigma}-\left(  \lambda+\mu+1\right)  H\left(  E|X\right)
_{\sigma}\\
&  =\left(  \mu+1\right)  +\sum_{x}p_{X}\left(  x\right)  \left[  H\left(
A\right)  _{\phi_{x}}+\lambda H\left(  B\right)  _{\phi_{x}}-\left(
\lambda+\mu+1\right)  H\left(  E\right)  _{\phi_{x}}\right] \\
&  \leq\left(  \mu+1\right)  +\max_{x}\left[  H\left(  A\right)  _{\phi_{x}%
}+\lambda H\left(  B\right)  _{\phi_{x}}-\left(  \lambda+\mu+1\right)
H\left(  E\right)  _{\phi_{x}}\right] \\
&  =\left(  \mu+1\right)  +H\left(  A\right)  _{\phi_{x}^{\ast}}+\lambda
H\left(  B\right)  _{\phi_{x}^{\ast}}-\left(  \lambda+\mu+1\right)  H\left(
E\right)  _{\phi_{x}^{\ast}}.
\end{align*}
The first equality follows by standard entropic manipulations. The second
equality follows because the conditional entropy $H\left(  B|X\right)  $ is
invariant under a bit-flipping unitary on the input state that commutes with
the channel: $H(B)_{X\rho_{x}^{B}X}=H(B)_{\rho_{x}^{B}}$. Furthermore, a bit
flip on the input state does not change the eigenvalues for the output of the
dephasing channel's complementary channel: $H(E)_{\mathcal{N}^{c}(X\rho
_{x}^{A^{\prime}}X)}=H(E)_{\mathcal{N}^{c}(\rho_{x}^{A^{\prime}})}$. The first
inequality follows because entropy is concave, i.e., the local state
$\sigma^{B}$ is a mixed version of $\rho^{B}$. The third equality follows
because $H(B)_{\sigma^{B}}=H\left(  \sum_{x}\frac{1}{2}p_{X}\left(  x\right)
(\rho_{x}^{B}+X\rho_{x}^{B}X)\right)  =H\left(  \frac{1}{2}\sum_{x}%
p_{X}\left(  x\right)  I\right)  =1$. The fourth equality follows because the
system $X$ is classical. The second inequality follows because the maximum
value of a realization of a random variable is not less than its expectation.
The final equality simply follows by defining $\phi_{x}^{\ast}$ to be the
conditional density operator on systems $A$, $B$, and $E$ that arises from
sending through the channel a state whose reduction to $A^{\prime}$ is of the
form $\nu\left\vert 0\right\rangle \left\langle 0\right\vert ^{A^{\prime}%
}+\left(  1-\nu\right)  \left\vert 1\right\rangle \left\langle 1\right\vert
^{A^{\prime}}$. Thus, an ensemble of the kind in (\ref{eq:mu-cq-state-CEQ}) is
sufficient to attain a point on the boundary of the region.

Evaluating the entropic quantities in Theorem~\ref{thm:main-theorem}\ on a
state of the above form then gives the expression for the region in
Theorem~\ref{thm:dephasing}.
\end{proof}

\section{Single-letter dynamic capacity region for the quantum erasure
channels}

\label{sec:single-letter-erasure}Below we show that the regularization in
(\ref{eq:multi-letter}) is not necessary if the quantum channel is a quantum
erasure channel. The quantum erasure channel also has a special structure, but
the proof proceeds differently from that for a quantum Hadamard channel.

A quantum erasure channel with erasure parameter $\epsilon$\ is the following
map \cite{GBP97}:%
\[
\mathcal{N}_{\epsilon}\left(  \rho\right)  \equiv\left(  1-\epsilon\right)
\rho+\epsilon\left\vert e\right\rangle \left\langle e\right\vert .
\]
Notice that the receiver can perform a measurement $\left\{  \left\vert
0\right\rangle \left\langle 0\right\vert +\left\vert 1\right\rangle
\left\langle 1\right\vert ,\left\vert e\right\rangle \left\langle e\right\vert
\right\}  $ and can learn whether the channel erased the state. The receiver
can do this without disturbing the state in any way. An isometric extension
$U_{\mathcal{N}_{\epsilon}}^{A^{\prime}\rightarrow BE}$\ of it acts as follows
on a purification $\left\vert \psi\right\rangle ^{AA^{\prime}}$\ of the state
$\rho^{A^{\prime}}$:%
\[
U_{\mathcal{N}_{\epsilon}}^{A^{\prime}\rightarrow BE}\left\vert \psi
\right\rangle ^{AA^{\prime}}=\sqrt{1-\epsilon}\left\vert \psi\right\rangle
^{AB}\left\vert e\right\rangle ^{E}+\sqrt{\epsilon}\left\vert \psi
\right\rangle ^{AE}\left\vert e\right\rangle ^{B}.
\]
In the above representation, we see that the erasure channel has the
interpretation that it hands the input to Bob with probability $1-\epsilon$
while giving an erasure flag $\left\vert e\right\rangle $ to Eve, and it hands
the input to Eve with probability $\epsilon$ while giving the erasure flag to Bob.

\begin{theorem}
\label{thm:erasure-theorem}The dynamic capacity region $\mathcal{C}%
_{\mathrm{{CQE}}}(\mathcal{N}_{\epsilon})$ of a quantum erasure channel
$\mathcal{N}_{\epsilon}$ is the set of all $C$, $Q$, and $E$ such that%
\begin{align*}
C+2Q  &  \leq\left(  1-\epsilon\right)  \left(  1+H_{2}\left(  p\right)
\right)  ,\\
Q+E  &  \leq\left(  1-2\epsilon\right)  H_{2}\left(  p\right)  ,\\
C+Q+E  &  \leq1-\epsilon-\epsilon H_{2}\left(  p\right)  ,
\end{align*}
where $p\in\left[  0,1/2\right]  $.
\end{theorem}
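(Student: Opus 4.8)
The plan is to mirror the two–step structure that handled the Hadamard channels (Theorem~\ref{thm:Hadamard-theorem}): first prove that the quantum dynamic capacity formula $D_{\lambda,\mu}$ is additive whenever one of the two channels is an erasure channel, then invoke the induction of Lemma~\ref{thm:CEQ-single-letter} to single-letterize, and finally evaluate the resulting one-shot region. Throughout I would use the elementary erasure identities: for a state $\sigma$ of the form in Theorem~\ref{thm:main-theorem} built from $\mathcal{N}_\epsilon$, one has $H(B)_\sigma = H_2(\epsilon)+(1-\epsilon)H(\bar\rho^{A'})$, $H(B|X)_\sigma = H_2(\epsilon)+(1-\epsilon)H(A'|X)$, and $H(E|X)_\sigma = H_2(\epsilon)+\epsilon H(A'|X)$, which hold because the erasure flag occupies a subspace orthogonal to the transmitted state and because the complementary channel of $\mathcal{N}_\epsilon$ is $\mathcal{N}_{1-\epsilon}$. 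Achievability is then immediate: feeding the two-state ensemble analogous to (\ref{eq:mu-cq-state-CEQ}), with $\mathrm{Tr}_A\{\psi_0^{AA'}\}=p|0\rangle\langle0|+(1-p)|1\rangle\langle1|$ and $\mathrm{Tr}_A\{\psi_1^{AA'}\}$ the bit-flipped version, through $\mathcal{N}_\epsilon$ gives $H(A'|X)=H_2(p)$ and $H(\bar\rho^{A'})=1$, hence $I(AX;B)=(1-\epsilon)(1+H_2(p))$, $I(A\rangle BX)=(1-2\epsilon)H_2(p)$, and $I(X;B)+I(A\rangle BX)=1-\epsilon-\epsilon H_2(p)$; so Theorem~\ref{thm:main-theorem} already yields containment of the claimed region.

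For the converse I must rule out the regularization in (\ref{eq:multi-letter}). I would prove the additivity lemma $D_{\lambda,\mu}(\mathcal{N}_\epsilon\otimes\mathcal{N})=D_{\lambda,\mu}(\mathcal{N}_\epsilon)+D_{\lambda,\mu}(\mathcal{N})$ for an arbitrary channel $\mathcal{N}$ (the inequality ``$\geq$'' being trivial via product inputs). The structural fact I would exploit, in place of degradability, is that $\mathcal{N}_\epsilon$ has the direct-sum form $\rho\mapsto(1-\epsilon)\rho\oplus\epsilon|e\rangle\langle e|$: its output carries a classical erasure flag $F$ that both Bob and the environment hold, and conditioned on $F$ the channel is either the identity or a constant channel. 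Given a pure-state input $\phi_x^{AA_1'A_2'}$ to $\mathcal{N}_\epsilon^{A_1'\to B_1}\otimes\mathcal{N}^{A_2'\to B_2}$, I would expand $D_{\lambda,\mu}(\mathcal{N}_\epsilon\otimes\mathcal{N})$ into the four entropies $H(A|X)$, $(1+\mu)H(B_1B_2)$, $\lambda H(B_1B_2|X)$, $-(1+\lambda+\mu)H(E_1E_2|X)$, and condition the latter three on $F$; because $F$ is independent of the input, this produces $H_2(\epsilon)$ contributions whose coefficients sum to $(1+\mu)+\lambda-(1+\lambda+\mu)=0$ and cancel identically. What remains is a $(1-\epsilon,\epsilon)$-convex combination of entropies of the state $\tau=U_{\mathcal{N}}^{A_2'\to B_2E_2}(\phi_x^{AA_1'A_2'})$ on $XAA_1'B_2E_2$, which is pure on $AA_1'B_2E_2$ once one conditions on $X$. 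I would then peel off a $D_{\lambda,\mu}(\mathcal{N}_\epsilon)$ contribution, built from $H(A_1'|X)$ and $H(\bar\rho^{A_1'})$, and a $D_{\lambda,\mu}(\mathcal{N})$ contribution, obtained by viewing $\tau$ as a channel-$\mathcal{N}$ configuration with reference $AA_1'$, using subadditivity, the Araki--Lieb inequality, and the conditional purity. With additivity established, Lemma~\ref{thm:CEQ-single-letter} gives $D_{\lambda,\mu}^{\mathrm{reg}}(\mathcal{N}_\epsilon)=D_{\lambda,\mu}(\mathcal{N}_\epsilon)$, and the scalarization argument of Section~\ref{sec:dynamic-cap-formula} then forces $\mathcal{C}_{\mathrm{CQE}}(\mathcal{N}_\epsilon)=\mathcal{C}_{\mathrm{CQE}}^{(1)}(\mathcal{N}_\epsilon)$, exactly as in Theorem~\ref{thm:Hadamard-theorem}.

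It then remains to compute the one-shot region explicitly. Using the erasure identities above, a single use of $\mathcal{N}_\epsilon$ reduces the objective in (\ref{eq:objective}) to $D_{\lambda,\mu}(\mathcal{N}_\epsilon)=\max_\sigma\bigl[c_1 H(A'|X)+c_2 H(\bar\rho^{A'})\bigr]$ with $c_1=1-\epsilon+\lambda(1-2\epsilon)-\mu\epsilon$ and $c_2=(1+\mu)(1-\epsilon)\geq0$; since $0\leq H(A'|X)\leq H(\bar\rho^{A'})\leq1$ and both extremes are realized inside the two-state family above (take $p=1/2$ when $c_1\geq0$, and $p=0$ otherwise, with $\bar\rho^{A'}=I/2$), this family attains every supporting hyperplane of the region, just as the $\nu$-ensemble did in the proof of Theorem~\ref{thm:dephasing}. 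Substituting the $p$-ensemble into (\ref{eq:CQ-bound})--(\ref{eq:CQE-bound}) produces the three stated bounds. I expect the main obstacle to be the separation step inside the additivity lemma: the coherent-information term carries $H(E_1E_2|X)$, and a naive split of it into a channel-$\mathcal{N}_\epsilon$ piece plus a channel-$\mathcal{N}$ piece would demand superadditivity of conditional entropy, which is false; the argument must instead use that the environment of $\mathcal{N}_\epsilon$ is essentially just the flag together with a conditional copy of $A_1'$, combined with Araki--Lieb and the conditional purity of $\tau$, so that the two pieces decouple cleanly.
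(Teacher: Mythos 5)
Your achievability computation and the final evaluation of the one-shot region are fine, but the key lemma you build the converse on---additivity $D_{\lambda,\mu}(\mathcal{N}_\epsilon\otimes\mathcal{N})=D_{\lambda,\mu}(\mathcal{N}_\epsilon)+D_{\lambda,\mu}(\mathcal{N})$ for an \emph{arbitrary} channel $\mathcal{N}$---is false, so the ``decoupling'' step you already flag as the main obstacle cannot be repaired in that generality. The obstruction is exactly the Smith--Yard phenomenon cited in this paper \cite{SY08}: for the $50\%$ erasure channel combined with a suitable zero-capacity (Horodecki/PPT) channel, the maximal coherent information is strictly superadditive. Since for fixed $\mu$ and large $\lambda$ the quantum dynamic capacity formula is dominated by the term $\lambda\max_\sigma I(A\rangle BX)_\sigma$, superadditivity of the coherent information forces $D_{\lambda,\mu}(\mathcal{N}_{1/2}\otimes\mathcal{N})>D_{\lambda,\mu}(\mathcal{N}_{1/2})+D_{\lambda,\mu}(\mathcal{N})$ for $\lambda$ large enough. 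This is precisely why the erasure case cannot be handled the way the Hadamard case is (where one factor being Hadamard suffices), and why the paper says the erasure proof ``proceeds differently.''

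What the paper does instead, and what you would need, is weaker and channel-specific: (i) first simplify $D_{\lambda,\mu}(\mathcal{N}_\epsilon)$ to the single-parameter form in $H_2(p)$ (your $c_1,c_2$ computation is essentially this step); (ii) observe via Lemma~\ref{lem:suff-condition} that when $(1-\epsilon)+\lambda(1-2\epsilon)<\mu\epsilon$ the formula degenerates to the classical-capacity case, which is handled separately by the additivity of the Holevo information for erasure channels (Lemmas~\ref{lem:classical-cap-erasure} and \ref{thm:additive-classical-erasure}); and (iii) prove additivity only for $\mathcal{N}_\epsilon\otimes\mathcal{N}_\epsilon$ (Lemma~\ref{lem:erasure-base-case}), using Pauli-twirled augmented ensembles and the erasure flags on \emph{both} outputs, at the end of which a residual term $-\left(1-\epsilon^2+\lambda(1-2\epsilon)-\mu\epsilon^2\right)I\left(A_1';A_2'|XIJ\right)$ appears; it can be discarded only because the parameter restriction from step (ii) makes its coefficient nonnegative. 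Self-additivity of this restricted kind is all the regularization in (\ref{eq:multi-letter}) needs (the induction as in Lemma~\ref{thm:CEQ-single-letter} is then run on copies of the erasure channel itself), whereas your route both asks for more than is needed and asks for something that is provably not true.
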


Figure~\ref{fig:full-dynamic-erasure}\ plots the dynamic capacity region of a
quantum erasure channel with erasure parameter $\epsilon=1/4$. It turns out
that time-sharing is the optimal strategy here, and there is not an
interesting trade-off curve for the quantum erasure channel.%
%TCIMACRO{\FRAME{ftbpFU}{4.0413in}{2.4768in}{0pt}{\Qcb{(Color online) A plot of
%the dynamic capacity region for a qubit erasure channel with erasure parameter
%$\epsilon=1/4$. The plot shows that the classically-enhanced father (CEF)
%trade-off curve lies along the boundary of the dynamic capacity region and it
%is not actually a curve but rather a line because time-sharing is optimal. The
%rest of the region is simply the combination of the CEF points with the unit
%protocols teleportation (TP), super-dense coding (SD), and entanglement
%distribution (ED).}}{\Qlb{fig:full-dynamic-erasure}}%
%{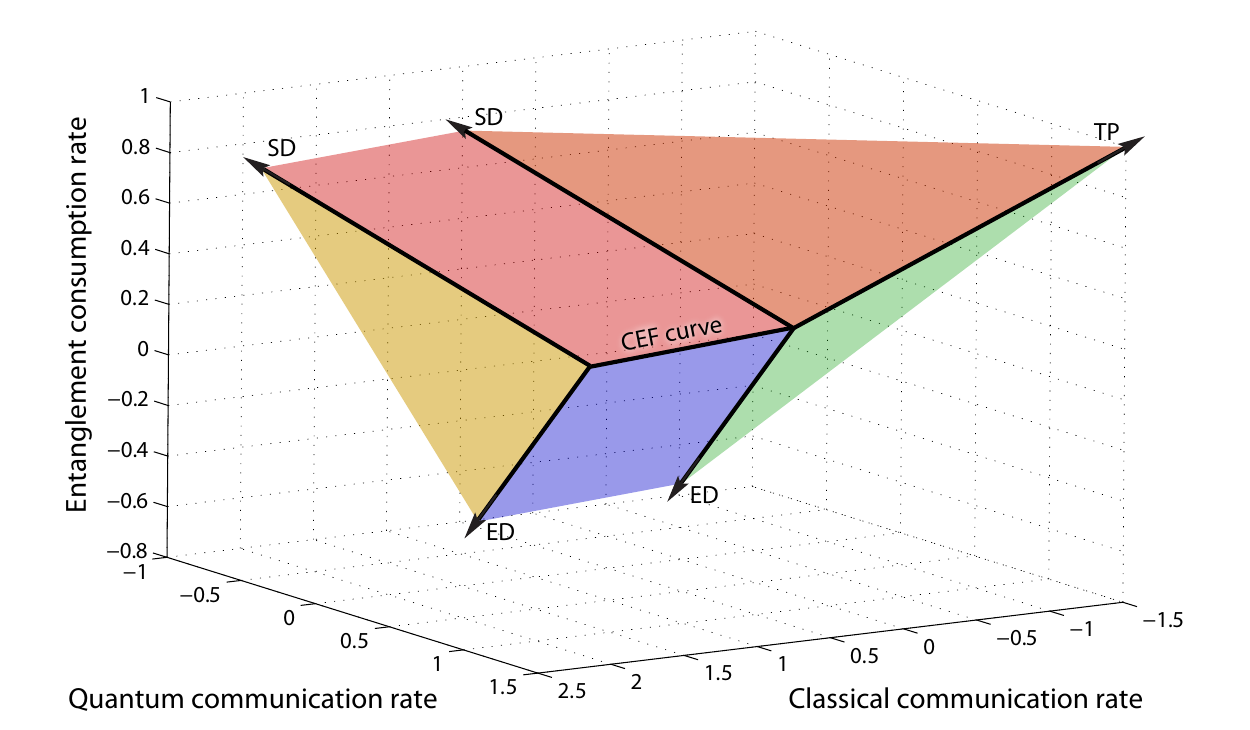}{\special{ language "Scientific Word";
%type "GRAPHIC";  maintain-aspect-ratio TRUE;  display "USEDEF";
%valid_file "F";  width 4.0413in;  height 2.4768in;  depth 0pt;
%original-width 7.4867in;  original-height 4.5671in;  cropleft "0";
%croptop "1";  cropright "1";  cropbottom "0";
%filename 'full-triple-dynamic-erasure.pdf';file-properties "XNPEU";}}}%
%BeginExpansion
\begin{figure}
[ptb]
\begin{center}
\includegraphics[
natheight=4.567100in,
natwidth=7.486700in,
height=2.4768in,
width=4.0413in
]%
{}%
\caption{(Color online) A plot of the dynamic capacity region for a qubit
erasure channel with erasure parameter $\epsilon=1/4$. The plot shows that the
classically-enhanced father (CEF) trade-off curve lies along the boundary of
the dynamic capacity region and it is not actually a curve but rather a line
because time-sharing is optimal. The rest of the region is simply the
combination of the CEF points with the unit protocols teleportation (TP),
super-dense coding (SD), and entanglement distribution (ED).}%
\label{fig:full-dynamic-erasure}%
\end{center}
\end{figure}
%EndExpansion

We in fact proved Theorem~\ref{thm:erasure-theorem}\ in Ref.~\cite{HW09T3}\ by
employing a \textit{reductio ad absurdum} argument reminiscent of the earliest
arguments for proving capacities of quantum erasure
channels~\cite{PhysRevLett.78.3217}. This approach gives the correct answer,
but suffers from two shortcomings:

\begin{enumerate}
\item We do not learn much about how to exploit the structure of the quantum
erasure channel with the \textit{reductio ad absurdum}\ approach. As an
example, Smith and Yard exploited the simple structure of the quantum erasure
channel and discovered far reaching consequences~\cite{SY08}. In particular,
they discovered that the quantum capacity (and for that matter, any future
proposed quantum capacity formula) can never be generally additive, by combining the
erasure channel with another one.

\item The \textit{reductio ad absurdum}\ argument rests on the assumption that
several known capacity formulas are continuous as a function of channels.
Leung and Smith later showed that the known formulas are indeed
continuous~\cite{LS08}, redeeming the original argument in
Ref.~\cite{PhysRevLett.78.3217}.
\end{enumerate}

Here, we prove Theorem~\ref{thm:erasure-theorem}\ above by carefully studying
the structure of the quantum erasure channel and its additivity properties for
the full dynamic capacity region. We prove the theorem in a few steps. First,
we prove that the classical capacity of the quantum erasure channel admits a
single-letter formula.\footnote{The proof already appears in
Ref.~\cite{PhysRevLett.78.3217}, but it again suffers from the aforementioned
shortcomings.} We then simplify the quantum dynamic capacity formula in
(\ref{eq:objective}) for the case of a quantum erasure channel and find that
it is only necessary to consider certain values of the parameters $\mu$ and
$\lambda$ when we optimize. The proof of Lemma~\ref{lem:erasure-base-case}
exploits these conditions to show that the quantum dynamic capacity formula is
additive for the case of two quantum erasure channels. It then follows by a
trivial induction step (the same as in Lemma~\ref{thm:CEQ-single-letter}) that
the full dynamic region single-letterizes and is of the form in
Theorem~\ref{thm:erasure-theorem}.

\begin{lemma}
[Bennett \textit{et al}.~\cite{PhysRevLett.78.3217}]%
\label{lem:classical-cap-erasure}The Holevo information of a quantum
erasure channel is equal to $1-\epsilon$:%
\[
\chi\left(  \mathcal{N}_{\epsilon}\right)  \equiv\max_{\sigma^{XB}}I\left(
X;B\right)  =1-\epsilon
\]

\end{lemma}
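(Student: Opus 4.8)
The plan is to establish the two inequalities $\chi(\mathcal{N}_{\epsilon}) \geq 1-\epsilon$ and $\chi(\mathcal{N}_{\epsilon}) \leq 1-\epsilon$ separately. The achievability direction is immediate: take the ensemble $\{(1/2,|0\rangle\langle 0|),(1/2,|1\rangle\langle 1|)\}$, so that $X$ is a uniform classical bit and, conditioned on $X=x$, the channel output is $(1-\epsilon)|x\rangle\langle x| + \epsilon|e\rangle\langle e|$. A one-line computation of $I(X;B) = H(B) - H(B|X)$ on the resulting state yields $1-\epsilon$, so $\chi(\mathcal{N}_{\epsilon}) \geq 1-\epsilon$.

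For the upper bound I would exploit the observation already recorded above, namely that the receiver can apply the non-disturbing projective measurement $\{|0\rangle\langle 0|+|1\rangle\langle 1|,\ |e\rangle\langle e|\}$ to learn whether an erasure occurred. Let $\sigma^{XB} \equiv \sum_x p_X(x)|x\rangle\langle x|^X \otimes \mathcal{N}_{\epsilon}(\rho_x)$ be an arbitrary state of the form entering the maximization, and let $\sigma^{XBF}$ be obtained from it by performing this measurement and recording the outcome in a classical flag register $F$. Because each $\rho_x$ is supported on $\operatorname{span}\{|0\rangle,|1\rangle\}$ while $|e\rangle$ is orthogonal to this subspace, the measurement disturbs nothing and $F$ is a deterministic coarse-graining of $B$; hence appending $F$ neither decreases $I(X;B)$ (data processing on $B$) nor increases it (the reverse map that discards $F$ is trace preserving and reconstructs $\sigma^{XB}$), so $I(X;B)_{\sigma} = I(X;BF)_{\sigma}$.

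Now apply the chain rule $I(X;BF)_{\sigma} = I(X;F)_{\sigma} + I(X;B|F)_{\sigma}$. Since the erasure event occurs with probability $\epsilon$ independently of the channel input, $F$ is in a product state with $X$, so $I(X;F)_{\sigma}=0$. For the remaining term, condition on the two values of $F$: with probability $\epsilon$ the $B$ register is the fixed state $|e\rangle\langle e|$, contributing nothing, and with probability $1-\epsilon$ the $B$ register holds the un-erased ensemble $\{p_X(x),\rho_x\}$ of qubit states, whose Holevo information cannot exceed $\log 2 = 1$ because the quantum mutual information with a classical system never exceeds the logarithm of the dimension of the other (here two-dimensional) system. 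Therefore $I(X;B|F)_{\sigma} \leq (1-\epsilon)\cdot 1 + \epsilon\cdot 0 = 1-\epsilon$, and combining the above gives $I(X;B)_{\sigma} \leq 1-\epsilon$; taking the maximum over all ensembles finishes the argument.

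The only point requiring care—more a subtlety than an obstacle—is the equality $I(X;B)_{\sigma} = I(X;BF)_{\sigma}$, which rests on the fact that the flag is both obtainable from $B$ by a quantum operation and losslessly discardable, so that the two quantities bound each other from both sides. Everything else is a routine application of the chain rule for quantum mutual information together with the elementary dimension bound on $I(X;B|F)$.
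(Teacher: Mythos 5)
Your proof is correct and follows essentially the same route as the paper: both arguments hinge on the non-disturbing erasure-flag measurement, the independence of the flag from the input $X$, and a conditioning step that reduces $I(X;B)$ to $(1-\epsilon)$ times the Holevo information of the un-erased (qubit) ensemble, bounded by $1$. The only cosmetic difference is that you phrase the reduction via the chain rule $I(X;BF)=I(X;F)+I(X;B|F)$, whereas the paper expands the same quantities directly in terms of conditional entropies given the flag register.
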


\begin{proof}
We begin with an input ensemble of the following form:%
\[
\rho^{XA^{\prime}}\equiv\sum_{x}p_{X}\left(  x\right)  \left\vert
x\right\rangle \left\langle x\right\vert ^{X} \otimes\phi_{x}^{A^{\prime}},
\]
where the states $\phi_{x}^{A^{\prime}}$ are pure (it suffices to consider
pure state ensembles for classical capacity). Feeding the $A^{\prime}$ system
into the quantum erasure channel leads to a classical-quantum state
$\sigma^{XB}=\sum_{x}p_{X}\left(  x\right)  \left\vert x\right\rangle
\left\langle x\right\vert ^{X}\otimes\mathcal{N}_{\epsilon}^{A^{\prime}\to
B}(\phi_{x}^{A^{\prime}})$. Bob can then measure his system $B$ to learn
whether the channel erases the qubit. Let $X_{E}$ denote a classical register
where Bob places the result of the measurement so that we then have the state
$\sigma^{XBX_{E}}$. It holds that any entropy evaluated on system $B$ is equal
to the joint entropy of $B$ and $X_{E}$ because this measurement does not
disturb the state in any way. Consider the following chain of inequalities:
%that holds for any $\rho^{XB}$:%
\begin{align*}
I\left(  X;B\right)  _{\sigma}  &  =H\left(  B\right)  _{\sigma}-H\left(
B|X\right)  _{\sigma}\\
&  =H\left(  BX_{E}\right)  _{\sigma}-H\left(  BX_{E}|X\right)  _{\sigma}\\
&  =H\left(  B|X_{E}\right)  _{\sigma}-H\left(  B|X_{E}X\right)  _{\sigma}\\
&  =\left(  1-\epsilon\right)  H\left(  A^{\prime}\right)  _{\rho}-\left(
1-\epsilon\right)  H\left(  A^{\prime}|X\right)  _{\rho}\\
&  =\left(  1-\epsilon\right)  I\left(  X;A^{\prime}\right)  _{\rho}\\
&  \leq\left(  1-\epsilon\right)  .
\end{align*}
The first equality follows by expanding the mutual information and the second
equality follows from the above fact regarding the joint entropy of $B$ and
$X_{E}$. The third equality follows by expanding and canceling terms. The
fourth equality follows by conditioning on the classical erasure flag register
$X_{E}$ and realizing that the entropy of Bob's system $B$\ is the entropy of
the input state with probability $1-\epsilon$ and otherwise is the entropy of
the erasure state $\left\vert e\right\rangle $ (this latter entropy vanishes
because this state is pure). Thus, the above sequence of steps reduces an
optimization problem on the output of the channel to a simple optimization
over the input ensemble. The Holevo information is then $1-\epsilon$ because
the quantity $I\left(  X;A^{\prime}\right)  _{\rho}$ can never be larger than
unity for the case of a qubit erasure channel (it reaches unity for an
ensemble of two orthogonal pure states chosen with equal probability).
\end{proof}

\begin{lemma}
\label{thm:additive-classical-erasure}The following additivity lemma holds for
a quantum erasure channel $\mathcal{N}_{\epsilon}$:%
\[
\chi\left(  \mathcal{N}_{\epsilon}\otimes\mathcal{N}_{\epsilon}\right)
=\chi\left(  \mathcal{N}_{\epsilon}\right)  +\chi\left(  \mathcal{N}%
_{\epsilon}\right)  =2\left(  1-\epsilon\right)  .
\]

\end{lemma}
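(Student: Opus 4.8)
The plan is to prove the two inequalities separately. The inequality $\chi(\mathcal{N}_\epsilon \otimes \mathcal{N}_\epsilon) \geq 2(1-\epsilon)$ is immediate: taking a product input ensemble $\{p_X(x)p_{X'}(x'),\,\phi_x^{A_1'}\otimes\phi_{x'}^{A_2'}\}$ in which each single-channel factor is an optimal ensemble from Lemma~\ref{lem:classical-cap-erasure} produces a product output state, so $I(XX';B_1B_2)=I(X;B_1)+I(X';B_2)=2(1-\epsilon)$.

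For the nontrivial direction $\chi(\mathcal{N}_\epsilon \otimes \mathcal{N}_\epsilon) \leq 2(1-\epsilon)$, I would mimic the single-letterization argument in the proof of Lemma~\ref{lem:classical-cap-erasure}. Start from an arbitrary input ensemble $\rho^{XA_1'A_2'}=\sum_x p_X(x)|x\rangle\langle x|^X \otimes \phi_x^{A_1'A_2'}$ (the $\phi_x$ may be mixed and entangled across the two inputs) and let $\sigma^{XB_1B_2}$ be the state after sending $A_1'A_2'$ through $\mathcal{N}_\epsilon\otimes\mathcal{N}_\epsilon$. As in the one-shot case, Bob applies the non-disturbing measurement $\{|0\rangle\langle 0|+|1\rangle\langle 1|,\,|e\rangle\langle e|\}$ to each of his output systems $B_1$ and $B_2$ and records the pair of outcomes in a classical register $X_E$ taking the four values ``neither erased,'' ``only the first erased,'' ``only the second erased,'' ``both erased.'' Since this measurement does not disturb the state, every entropy of $B_1B_2$ equals the corresponding entropy of $B_1B_2X_E$, and the same chain of equalities as in Lemma~\ref{lem:classical-cap-erasure} gives $I(X;B_1B_2)_\sigma = I(X;B_1B_2|X_E)_\sigma$.

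The final step is to expand $I(X;B_1B_2|X_E)_\sigma$ over the four erasure patterns, which occur with probabilities $(1-\epsilon)^2$, $(1-\epsilon)\epsilon$, $\epsilon(1-\epsilon)$, and $\epsilon^2$. Conditioned on a given pattern, the non-flag part of Bob's output is the corresponding marginal of $\phi_x$: a two-qubit system when nothing is erased, a single qubit when exactly one input is erased, and a fixed pure erasure state (carrying no information about $X$) when both are erased; the flag portions are fixed pure states in every branch. Hence in each branch the conditional mutual information with $X$ is at most the logarithm of the dimension of the surviving system, namely $2$, $1$, $1$, and $0$ respectively, so
\[
I(X;B_1B_2)_\sigma \leq 2(1-\epsilon)^2 + (1-\epsilon)\epsilon + \epsilon(1-\epsilon) + 0 = 2(1-\epsilon)^2 + 2\epsilon(1-\epsilon) = 2(1-\epsilon).
\]
Taking the supremum over input ensembles proves $\chi(\mathcal{N}_\epsilon\otimes\mathcal{N}_\epsilon)\leq 2(1-\epsilon)$, and combining with the first paragraph yields the lemma. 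There is no genuine obstacle beyond the key observation, already used in Lemma~\ref{lem:classical-cap-erasure}, that the erasure-detecting measurement is non-disturbing: once $X_E$ is available the computation splits into the four branches and only a dimension bound is required in each. The identical device applied to $\mathcal{N}_\epsilon^{\otimes n}$ would give $\chi(\mathcal{N}_\epsilon^{\otimes n}) = n(1-\epsilon)$ by the same arithmetic, but the two-channel case is all that is needed here.
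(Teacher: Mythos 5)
Your proof is correct, and it reaches the bound by a somewhat different finishing argument than the paper. Both proofs hinge on the same key observation: the erasure-detecting measurement is non-disturbing, so one may adjoin the classical flag register(s) at no entropic cost, and the flags are independent of the input, giving $I(X;B_{1}B_{2})=I(X;B_{1}B_{2}|X_{E})$. From there you decompose the conditional mutual information over the four erasure patterns and in each branch apply the elementary bound $I(X;B)\leq H(B)\leq\log\dim$ (valid since $X$ is classical and the conditional entropy of a cq state is non-negative), which immediately yields $2(1-\epsilon)^{2}+2\epsilon(1-\epsilon)=2(1-\epsilon)$. The paper instead expands the flagged entropies into entropies of the input state $\rho^{XA_{1}^{\prime}A_{2}^{\prime}}$, introduces an augmented ensemble $\sigma^{XIJA_{1}^{\prime}A_{2}^{\prime}}$ obtained by Pauli twirling (so that the unconditional entropies attain their maximal values $2$ and $1$), and then discards the remaining non-negative conditional-entropy terms. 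Your route avoids the twirled ensemble entirely and is shorter and more elementary for the Holevo quantity alone; the paper's route pays that extra bookkeeping because the same decomposition-into-input-entropies plus twirling device is reused almost verbatim in the subsequent lemmas for the full quantum dynamic capacity formula $D_{\lambda,\mu}$ of the erasure channel, where a per-branch dimension bound on mutual information would no longer suffice. Your achievability direction (product of optimal single-channel ensembles) matches the paper's remark that the $\geq$ inequality is trivial.
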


\begin{proof}
It suffices to prove the inequality $\chi\left(  \mathcal{N}_{\epsilon}%
\otimes\mathcal{N}_{\epsilon}\right)  \leq\chi\left(  \mathcal{N}_{\epsilon
}\right)  +\chi\left(  \mathcal{N}_{\epsilon}\right)  $ because the other
inequality holds trivially. We define the following ensemble of states for the
tensor product channel $\mathcal{N}_{\epsilon}\otimes\mathcal{N}_{\epsilon}$:%
\begin{equation}
\rho^{XA_{1}^{\prime}A_{2}^{\prime}}\equiv\sum_{x}p_{X}\left(  x\right)
\left\vert x\right\rangle \left\langle x\right\vert ^{X}\otimes\phi_{x}%
^{A_{1}^{\prime}A_{2}^{\prime}}. \label{eq:cq-state-erasure}%
\end{equation}
We can also define the following augmented ensemble based on the above one:%
\begin{equation}
\sigma^{XIJA_{1}^{\prime}A_{2}^{\prime}}\equiv\frac{1}{16}\sum_{x,i,j}%
p_{X}\left(  x\right)  \left\vert x\right\rangle \left\langle x\right\vert
^{X}\otimes\left\vert i\right\rangle \left\langle i\right\vert ^{I}%
\otimes\left\vert j\right\rangle \left\langle j\right\vert ^{J}\otimes
(\sigma_{i}^{A_{1}^{\prime}}\otimes\sigma_{j}^{A_{2}^{\prime}})\phi_{x}%
^{A_{1}^{\prime}A_{2}^{\prime}}(\sigma_{i}^{A_{1}^{\prime}}\otimes\sigma
_{j}^{A_{2}^{\prime}}), \label{eq:cq-state-erasure-mixed}%
\end{equation}
where $\sigma_{0}\equiv I$, $\sigma_{1}\equiv\sigma_{X}$, $\sigma_{2}%
\equiv\sigma_{Y}$, and $\sigma_{3}\equiv\sigma_{Z}$. In particular, note that
we obtain the maximally mixed state when tracing over classical registers $I$
and $J$. Let $\omega^{XB_{1}B_{2}}$ and $\theta^{XIJB_{1}B_{2}}$ denote the
states obtained by sending systems $A_{1}^{\prime}$ and $A_{2}^{\prime}$ of
the respective states $\rho^{XA_{1}^{\prime}A_{2}^{\prime}}$ and
$\sigma^{XIJA_{1}^{\prime}A_{2}^{\prime}}$ through two uses of the quantum
erasure channel. Let $X_{E,1}$ and $X_{E,2}$ denote the classical variables
Bob obtains by determining whether the channel erased his states (they also
denote the registers where he places the results). Consider the following
chain of inequalities that holds for any state $\omega^{XB_{1}B_{2}}$:%
\begin{align*}
I\left(  X;B_{1}B_{2}\right)  _{\omega}  &  =H\left(  B_{1}B_{2}\right)
_{\omega}-H\left(  B_{1}B_{2}|X\right)  _{\omega}\\
&  =H\left(  B_{1}B_{2}|X_{E,1}X_{E\,,2}\right)  _{\omega}-H\left(  B_{1}%
B_{2}|X_{E,1}X_{E\,,2}X\right)  _{\omega}\\
&  =\left(  1-\epsilon\right)  ^{2}H\left(  A_{1}^{\prime}A_{2}^{\prime
}\right)  _{\rho}+\left(  1-\epsilon\right)  \epsilon\left(  H\left(
A_{1}^{\prime}\right)  _{\rho}+H\left(  A_{2}^{\prime}\right)  _{\rho}\right)
\\
&  \ \ \ \ \ \ -\left(  1-\epsilon\right)  ^{2}H\left(  A_{1}^{\prime}%
A_{2}^{\prime}|X\right)  _{\rho}-\left(  1-\epsilon\right)  \epsilon\left(
H\left(  A_{1}^{\prime}|X\right)  _{\rho}+H\left(  A_{2}^{\prime}|X\right)
_{\rho}\right) \\
&  \leq2\left(  1-\epsilon\right)  -\left(  1-\epsilon\right)  ^{2}H\left(
A_{1}^{\prime}A_{2}^{\prime}|XIJ\right)  _{\sigma}-\left(  1-\epsilon\right)
\epsilon\left(  H\left(  A_{1}^{\prime}|XIJ\right)  _{\sigma}+H\left(
A_{2}^{\prime}|XIJ\right)  _{\sigma}\right) \\
&  \leq2\left(  1-\epsilon\right)  .
\end{align*}
The first equality holds by expanding the mutual information. The next
equality holds because an \textquotedblleft erasure
measurement\textquotedblright\ does not change entropy. The third equality
follows by exploiting the properties of the erasure channels. The first
inequality holds because the unconditional entropies of the $A^{\prime}$
systems on the state $\rho$ are always less than those for the state $\sigma$.
The final inequality follows because the entropies in the previous line are
non-negative. The statement of the theorem then follows.
\end{proof}

\begin{lemma}
The quantum dynamic capacity formula in (\ref{eq:objective})\ simplifies as
follows for a quantum erasure channel~$\mathcal{N}_{\epsilon}$:%
\begin{equation}
D_{\lambda,\mu}\left(  \mathcal{N}_{\epsilon}\right)  \equiv\max_{p\in\left[
0,1/2\right]  }\left(  1-\epsilon\right)  \left(  1+H_{2}\left(  p\right)
\right)  +\lambda\left(  1-2\epsilon\right)  H_{2}\left(  p\right)
+\mu\left(  \left(  1-\epsilon\right)  -\epsilon H_{2}\left(  p\right)
\right)  . \label{eq:objective-erasure}%
\end{equation}
Thus, the \textquotedblleft one-letter\textquotedblright\ dynamic capacity
region of a quantum erasure channel is as Theorem~\ref{thm:erasure-theorem} states.
\end{lemma}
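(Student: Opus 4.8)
The plan is to evaluate every entropic term in the quantum dynamic capacity formula (\ref{eq:objective}) through the isometric extension $U_{\mathcal{N}_{\epsilon}}^{A^{\prime}\rightarrow BE}$ and reduce the whole optimization to a one‑parameter maximization. Write $\bar{\rho}^{A^{\prime}}\equiv\sum_{x}p_{X}(x)\,\text{Tr}_{A}\{\phi_{x}^{AA^{\prime}}\}$ for the average channel input and set $H(A^{\prime})\equiv H(\bar{\rho}^{A^{\prime}})$ and $H(A^{\prime}|X)\equiv\sum_{x}p_{X}(x)\,H(\text{Tr}_{A}\{\phi_{x}^{AA^{\prime}}\})$. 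The key structural point is that the erasure flag $|e\rangle$ is orthogonal to the data subspace, so tracing the isometry output down to $B$ or to $E$, with or without conditioning on $X$, decoheres the two branches ``Bob gets the state / Eve gets the flag'' and ``Eve gets the state / Bob gets the flag.'' Consequently $H(B)=(1-\epsilon)H(A^{\prime})+H_{2}(\epsilon)$, $H(B|X)=(1-\epsilon)H(A^{\prime}|X)+H_{2}(\epsilon)$, $H(E|X)=\epsilon\,H(A^{\prime}|X)+H_{2}(\epsilon)$, while $H(A|X)=H(A^{\prime}|X)$ because the $\phi_{x}^{AA^{\prime}}$ are pure. Using that $ABE$ is pure conditioned on $X$ (so $H(AB|X)=H(E|X)$), the objective in (\ref{eq:objective}) equals $H(A|X)+(1+\mu)H(B)+\lambda H(B|X)-(1+\lambda+\mu)H(E|X)$.

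Substituting the four erasure‑channel expressions, the $H_{2}(\epsilon)$ contributions cancel identically (their coefficient is $(1+\mu)+\lambda-(1+\lambda+\mu)=0$) and the objective collapses to
\[
c_{1}\,H(A^{\prime}|X)+c_{2}\,H(A^{\prime}),\qquad c_{1}\equiv(1-\epsilon)+\lambda(1-2\epsilon)-\mu\epsilon,\quad c_{2}\equiv(1+\mu)(1-\epsilon)\geq 0 .
\]
For a qubit input one has $0\leq H(A^{\prime}|X)\leq H(A^{\prime})\leq 1$, and since $c_{2}\geq 0$ the maximum sets $H(A^{\prime})=1$. Any value $h\in[0,1]$ of $H(A^{\prime}|X)$ is then independently realizable together with $H(A^{\prime})=1$: take the equiprobable ensemble of two pure states $\psi_{0}^{AA^{\prime}},\psi_{1}^{AA^{\prime}}$ whose reductions to $A^{\prime}$ have spectra $\{p,1-p\}$ and $\{1-p,p\}$ in the erasure basis with $H_{2}(p)=h$, so that $H(A^{\prime}|X)=h$ while $\bar{\rho}^{A^{\prime}}=I/2$ forces $H(A^{\prime})=1$. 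Writing $H(A^{\prime}|X)=H_{2}(p)$ with $p\in[0,1/2]$, the objective becomes $c_{2}+c_{1}H_{2}(p)$, which upon re‑expanding $c_{1}$ and $c_{2}$ is exactly the bracketed quantity in (\ref{eq:objective-erasure}); maximizing over $p\in[0,1/2]$ gives the stated formula for $D_{\lambda,\mu}(\mathcal{N}_{\epsilon})$.

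For the region claim, I would evaluate the three functionals of the one‑shot, one‑state region directly on these same ensembles with the identical bookkeeping, obtaining $I(AX;B)=(1-\epsilon)(H(A^{\prime}|X)+H(A^{\prime}))$, $I(A\rangle BX)=(1-2\epsilon)H(A^{\prime}|X)$, and $I(X;B)+I(A\rangle BX)=(1-\epsilon)H(A^{\prime})-\epsilon\,H(A^{\prime}|X)$. Each of these is nondecreasing in $H(A^{\prime})$, so replacing any ensemble by one with the same $H(A^{\prime}|X)$ but $H(A^{\prime})=1$ only enlarges the associated polytope; with $H(A^{\prime})=1$ all three depend solely on $H(A^{\prime}|X)=H_{2}(p)$, $p\in[0,1/2]$, and reproduce precisely the three right‑hand sides of Theorem~\ref{thm:erasure-theorem}. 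Hence $\mathcal{C}^{(1)}_{\mathrm{CQE}}(\mathcal{N}_{\epsilon})$ is the union over $p\in[0,1/2]$ of those polytopes, i.e.\ the region of Theorem~\ref{thm:erasure-theorem}. (Equivalently, since the simplified formula shows the $p$‑ensembles attain $D_{\lambda,\mu}$ for every $\lambda,\mu\geq 0$, the Pareto/scalarization correspondence of Section~\ref{sec:dynamic-cap-formula} guarantees these ensembles trace the entire Pareto surface of the convex one‑shot region.)

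The main obstacle is the first step: correctly computing the erasure‑channel entropies. Although $U_{\mathcal{N}_{\epsilon}}^{A^{\prime}\rightarrow BE}$ produces a coherent superposition of the two branches, one must see that the relevant reduced operators split as direct sums over the orthogonal data and flag sectors — this is what makes every entropy an affine function of $H(A^{\prime})$ and $H(A^{\prime}|X)$ and what forces the $H_{2}(\epsilon)$ terms to cancel in the objective. Everything afterward is linear bookkeeping plus an elementary one‑variable maximization; the only other point requiring a word is the realizability claim that $H(A^{\prime})=1$ can be met simultaneously with any prescribed $H(A^{\prime}|X)\in[0,1]$, which the conjugate‑basis two‑state ensemble above handles.
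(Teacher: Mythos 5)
Your proposal is correct and follows essentially the same route as the paper: use the orthogonality of the erasure flag to write every output entropy as an affine function of $H(A^{\prime})$ and $H(A^{\prime}|X)$ (with the $H_{2}(\epsilon)$ contributions canceling because their net coefficient is $(1+\mu)+\lambda-(1+\lambda+\mu)=0$), and thereby reduce the optimization to a single parameter $H_{2}(p)$. The only notable difference is the final maximization step: the paper compares against a Pauli-randomized augmented ensemble and invokes an average-versus-maximum bound (whose coefficient-sign caveat is handled by Lemma~\ref{lem:suff-condition}), whereas you bound the reduced objective directly over the achievable pairs $\left(H(A^{\prime}|X),H(A^{\prime})\right)$ and exhibit the bit-flipped two-state ensemble attaining $\left(H_{2}(p),1\right)$, which works uniformly for all $\lambda,\mu\geq 0$ and also supplies the explicit evaluation of the three information quantities behind the region statement.
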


\begin{proof}
We exploit the following classical-quantum states:%
\begin{align*}
\rho^{XAA^{\prime}}  &  \equiv\sum_{x}p_{X}\left(  x\right)  \left\vert
x\right\rangle \left\langle x\right\vert ^{X}\otimes\phi_{x}^{AA^{\prime}},\\
\sigma^{XIAA^{\prime}}  &  \equiv\sum_{x,i}\frac{1}{4}p_{X}\left(  x\right)
\left\vert x\right\rangle \left\langle x\right\vert ^{X}\otimes\left\vert
i\right\rangle \left\langle i\right\vert ^{I}\otimes(\sigma_{i}^{A^{\prime}%
})\phi_{x}^{AA^{\prime}}(\sigma_{i}^{A^{\prime}}),
\end{align*}
and let $\rho^{XABE}$ and $\sigma^{XIABE}$ be the states obtained by
transmitting the $A^{\prime}$ system through the isometric extension of the
erasure channel. Let $\sigma_{x}^{A^{\prime}}\equiv$Tr$_{A}\left\{  \phi
_{x}^{AA^{\prime}}\right\}  $. Furthermore, let the eigenvalues of the state
$\sigma_{x}^{A^{\prime}}$ with highest entropy on system $A^{\prime}$ be $p$
and $1-p$. Consider that the following chain of inequalities holds for any
state $\rho^{XABE}$:%
\begin{align*}
&  I\left(  AX;B\right)  _{\rho}+\lambda I\left(  A\rangle BX\right)  _{\rho
}+\mu\left(  I\left(  X;B\right)  _{\rho}+I\left(  A\rangle BX\right)  _{\rho
}\right) \\
&  =H\left(  A|X\right)  _{\rho}+\left(  \mu+1\right)  H\left(  B\right)
_{\rho}+\lambda H\left(  B|X\right)  _{\rho}-\left(  \lambda+\mu+1\right)
H\left(  E|X\right)  _{\rho}\\
&  =H\left(  A^{\prime}|X\right)  _{\rho}+\left(  \mu+1\right)  H\left(
B|X_{E}\right)  _{\rho}+\lambda H\left(  B|X_{E}X\right)  _{\rho}-\left(
\lambda+\mu+1\right)  H\left(  E|X_{E}X\right)  _{\rho}\\
&  =H\left(  A^{\prime}|X\right)  _{\rho}+\left(  \mu+1\right)  \left(
1-\epsilon\right)  H\left(  A^{\prime}\right)  _{\rho}+\lambda\left(
1-\epsilon\right)  H\left(  A^{\prime}|X\right)  _{\rho}-\left(  \lambda
+\mu+1\right)  \epsilon H\left(  A^{\prime}|X\right)  _{\rho}\\
&  \leq\left(  \mu+1\right)  \left(  1-\epsilon\right)  +\left(
1+\lambda\left(  1-\epsilon\right)  -\left(  \lambda+\mu+1\right)
\epsilon\right)  H\left(  A^{\prime}|XI\right)  _{\sigma}\\
&  =\left(  \mu+1\right)  \left(  1-\epsilon\right)  +\left(  1-\epsilon
+\lambda\left(  1-2\epsilon\right)  -\mu\epsilon\right)  \sum_{x}p_{X}\left(
x\right)  H\left(  A^{\prime}\right)  _{\sigma_{x}}\\
&  \leq\left(  \mu+1\right)  \left(  1-\epsilon\right)  +\left(
1-\epsilon+\lambda\left(  1-2\epsilon\right)  -\mu\epsilon\right)  H\left(
A^{\prime}\right)  _{\sigma_{x}^{\ast}}\\
&  =\left(  1-\epsilon\right)  \left(  1+H_{2}\left(  p\right)  \right)
+\mu\left(  1-\epsilon-\epsilon H_{2}\left(  p\right)  \right)  +\lambda
\left(  1-2\epsilon\right)  H_{2}\left(  p\right)  .
\end{align*}
The first equality follows by standard entropic manipulations. The second
equality follows by incorporating the classical erasure flag variable. The
third equality follows by exploiting the properties of the quantum erasure
channel. The first inequality follows because the unconditional entropy of the
state $\rho$ is always less than that of the state $\sigma$. The next equality
follows by expanding the conditional entropy. The second inequality follows
because an average is always less than a maximum. The final equality follows
by plugging in the eigenvalues of $\sigma_{x}^{\ast}$. The form of the quantum
dynamic capacity formula then follows because this chain of inequalities holds
for any input ensemble.
\end{proof}

\begin{lemma}
\label{lem:suff-condition}It suffices to consider the set of $\lambda,\mu
\geq0$ for which%
\[
\left(  1-\epsilon\right)  +\lambda\left(  1-2\epsilon\right)  \geq\mu
\epsilon.
\]
Otherwise, we are just maximizing the classical capacity, which we know from
Lemma~\ref{lem:classical-cap-erasure}\ is equal to $1-\epsilon$.
\end{lemma}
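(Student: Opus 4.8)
The plan is to read off the coefficient of $H_{2}(p)$ in the simplified objective (\ref{eq:objective-erasure}) and let its sign dictate where the maximum over $p\in[0,1/2]$ is attained. Collecting terms, (\ref{eq:objective-erasure}) is
\[
(1-\epsilon)(1+\mu)+\big[(1-\epsilon)+\lambda(1-2\epsilon)-\mu\epsilon\big]\,H_{2}(p),
\]
so the quantity $(1-\epsilon)+\lambda(1-2\epsilon)-\mu\epsilon$ is exactly the coefficient multiplying $H_{2}(p)$. If this coefficient is nonnegative, then $\lambda,\mu$ already lie in the claimed region $(1-\epsilon)+\lambda(1-2\epsilon)\geq\mu\epsilon$ and there is nothing to prove.

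The remaining case is that the coefficient is strictly negative, i.e.\ $(1-\epsilon)+\lambda(1-2\epsilon)<\mu\epsilon$. Since $H_{2}$ is increasing on $[0,1/2]$ with $H_{2}(0)=0$, the maximum in (\ref{eq:objective-erasure}) is then attained at $p=0$ and equals $(1-\epsilon)(1+\mu)$. The key observation I would make is that this value is precisely what the quantum dynamic capacity formula (\ref{eq:objective}) returns when evaluated on a Holevo-optimal \emph{pure-state} ensemble with trivial $A$ system: for such an ensemble $I(A\rangle BX)=0$ while $I(AX;B)=I(X;B)=\chi(\mathcal{N}_{\epsilon})=1-\epsilon$ by Lemma~\ref{lem:classical-cap-erasure}, so that $D_{\lambda,\mu}=(1-\epsilon)+\lambda\cdot 0+\mu(1-\epsilon)=(1+\mu)(1-\epsilon)$. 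Hence in this regime $D_{\lambda,\mu}(\mathcal{N}_{\epsilon})$ collapses to the fixed linear expression $(1+\mu)\chi(\mathcal{N}_{\epsilon})$: the $\lambda$-term and the coherent-information contribution are inert, and one is \emph{just maximizing the classical capacity}. Weight vectors $(\lambda,\mu)$ in this complementary regime therefore yield no supporting-hyperplane information about the dynamic capacity region beyond the classical capacity point already known from Lemma~\ref{lem:classical-cap-erasure}, so it suffices to restrict attention to $\lambda,\mu\geq 0$ with $(1-\epsilon)+\lambda(1-2\epsilon)\geq\mu\epsilon$, as claimed.

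I do not expect a genuine obstacle here; the only point requiring care is making precise the sense in which the excluded parameter values are redundant, namely by matching the $p=0$ value of (\ref{eq:objective-erasure}) to the value of (\ref{eq:objective}) on the classical-capacity-optimal ensemble. I would also remark that the argument is insensitive to the sign of $1-2\epsilon$ (in particular it does not matter whether $\epsilon>1/2$), since all that is used is monotonicity of $H_{2}$ on $[0,1/2]$.
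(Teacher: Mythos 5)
Your proposal is correct and follows essentially the same route as the paper: isolate the coefficient $(1-\epsilon)+\lambda(1-2\epsilon)-\mu\epsilon$ of $H_{2}(p)$ in (\ref{eq:objective-erasure}), observe that when it is negative the maximum sits at $p=0$ with value $(1+\mu)(1-\epsilon)$, and conclude that such weight vectors only reproduce the classical capacity. The only cosmetic difference is the final interpretation---the paper writes out the resulting degenerate region as a translation of the unit resource region to $(1-\epsilon,0,0)$, while you match the $p=0$ value to $D_{\lambda,\mu}$ evaluated on the Holevo-optimal pure-state ensemble---but these are the same observation.
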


\begin{proof}
Consider rewriting the expression in (\ref{eq:objective-erasure}) as follows:%
\[
\max_{p\in\left[  0,1/2\right]  }\left(  1-\epsilon\right)  +\mu\left(
1-\epsilon\right)  +\left[  \left(  1-\epsilon\right)  +\lambda\left(
1-2\epsilon\right)  -\mu\epsilon\right]  H_{2}\left(  p\right)  .
\]
Suppose that the expression in square brackets is negative, i.e.,%
\[
\left(  1-\epsilon\right)  +\lambda\left(  1-2\epsilon\right)  <\mu\epsilon.
\]
Then the maximization over $p$ simply chooses $p=0$ so that $H_{2}\left(
p\right)  $ vanishes and the negative term disappears. The resulting
expression for the quantum dynamic capacity formula is%
\[
\left(  1-\epsilon\right)  +\mu\left(  1-\epsilon\right)  ,
\]
which corresponds to the following region%
\begin{align*}
C+2Q  &  \leq1-\epsilon,\\
Q+E  &  \leq0,\\
C+Q+E  &  \leq1-\epsilon.
\end{align*}
The above region is equivalent to a translation of the unit resource capacity
region to the classical capacity rate triple $\left(  1-\epsilon,0,0\right)
$. Thus, it suffices to restrict the parameters $\lambda$ and $\mu$ as above
for the quantum erasure channel.
\end{proof}

\begin{lemma}
\label{lem:erasure-base-case}The following additivity relation holds for two
quantum erasure channels $\mathcal{N}_{\epsilon}$ with the same erasure
parameter $\epsilon$:%
\[
D_{\lambda,\mu}(\mathcal{N}_{\epsilon}\otimes\mathcal{N}_{\epsilon
})=D_{\lambda,\mu}(\mathcal{N}_{\epsilon})+D_{\lambda,\mu}(\mathcal{N}%
_{\epsilon}).
\]

\end{lemma}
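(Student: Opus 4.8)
The plan is to establish the nontrivial inequality $D_{\lambda,\mu}(\mathcal{N}_{\epsilon}\otimes\mathcal{N}_{\epsilon})\le D_{\lambda,\mu}(\mathcal{N}_{\epsilon})+D_{\lambda,\mu}(\mathcal{N}_{\epsilon})$ by the same ``open up all the entropies and read off the erasure structure'' method used in Lemma~\ref{thm:additive-classical-erasure} and in the proof of the single-letter formula (\ref{eq:objective-erasure}); the reverse inequality is immediate from a product input state. By Lemma~\ref{lem:suff-condition} it suffices to treat $\lambda,\mu\ge0$ with $(1-\epsilon)+\lambda(1-2\epsilon)\ge\mu\epsilon$, since in the complementary regime both sides collapse, via Lemma~\ref{thm:additive-classical-erasure}, to $2(1+\mu)(1-\epsilon)$. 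So fix such $\lambda,\mu$, let $\{p_{X}(x),\phi_{x}^{AA_{1}^{\prime}A_{2}^{\prime}}\}$ be a pure-state ensemble attaining $D_{\lambda,\mu}(\mathcal{N}_{\epsilon}\otimes\mathcal{N}_{\epsilon})$, send $A_{1}^{\prime}$ and $A_{2}^{\prime}$ through the two isometric extensions to obtain a state on $XAB_{1}B_{2}E_{1}E_{2}$ that is pure on $AB_{1}B_{2}E_{1}E_{2}$ when conditioned on $X$, and write $B\equiv B_{1}B_{2}$, $E\equiv E_{1}E_{2}$, $\sigma_{x}\equiv\mathrm{Tr}_{A}\{\phi_{x}^{AA_{1}^{\prime}A_{2}^{\prime}}\}$, and $\bar{\sigma}\equiv\sum_{x}p_{X}(x)\sigma_{x}$.

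First I would rewrite the objective, exactly as in the proof of (\ref{eq:objective-erasure}), using purity of $\phi_{x}$ on $ABE$ together with the chain rule, as
\[
D_{\lambda,\mu}(\mathcal{N}_{\epsilon}\otimes\mathcal{N}_{\epsilon})=H(A_{1}^{\prime}A_{2}^{\prime}|X)+(\mu+1)H(B)+\lambda H(B|X)-(\lambda+\mu+1)H(E|X).
\]
Then I would have Bob perform the non-disturbing erasure measurement $\{\left\vert 0\right\rangle \left\langle 0\right\vert +\left\vert 1\right\rangle \left\langle 1\right\vert ,\left\vert e\right\rangle \left\langle e\right\vert \}$ on each of $B_{1}$ and $B_{2}$ and condition every entropy on the resulting erasure pattern, which is classical and independent of $X$. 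By the erasure structure, conditioned on a pattern Bob holds exactly the unerased channel inputs while Eve holds exactly the erased ones, so each of $H(B)$, $H(B|X)$, $H(E|X)$ becomes the common flag term $2H_{2}(\epsilon)$ plus a convex combination, with weights $(1-\epsilon)^{2},\epsilon(1-\epsilon),\epsilon(1-\epsilon),\epsilon^{2}$, of $H(A_{1}^{\prime}A_{2}^{\prime})$, $H(A_{1}^{\prime})$, $H(A_{2}^{\prime})$, $0$ evaluated on the relevant reduced states ($\sigma_{x}$ for the $X$-conditioned quantities, $\bar{\sigma}$ for the unconditioned $H(B)$).

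Next I would collect terms. The three $2H_{2}(\epsilon)$ pieces cancel since their coefficients sum to $(\mu+1)+\lambda-(\lambda+\mu+1)=0$, leaving a fixed linear combination of $\langle H(A_{1}^{\prime}A_{2}^{\prime})_{\sigma_{x}}\rangle$, $H(A_{1}^{\prime}A_{2}^{\prime})_{\bar{\sigma}}$, $\langle H(A_{1}^{\prime})_{\sigma_{x}}+H(A_{2}^{\prime})_{\sigma_{x}}\rangle$, and $H(A_{1}^{\prime})_{\bar{\sigma}}+H(A_{2}^{\prime})_{\bar{\sigma}}$ (angle brackets denote the $p_{X}$-average). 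One checks the coefficient of $\langle H(A_{1}^{\prime}A_{2}^{\prime})_{\sigma_{x}}\rangle$ is $1+\lambda(1-\epsilon)^{2}-(\lambda+\mu+1)\epsilon^{2}$, that of $\langle H(A_{1}^{\prime})_{\sigma_{x}}+H(A_{2}^{\prime})_{\sigma_{x}}\rangle$ is $-(\mu+1)\epsilon(1-\epsilon)$, and the two $\bar{\sigma}$-coefficients are the manifestly nonnegative $(\mu+1)(1-\epsilon)^{2}$ and $(\mu+1)\epsilon(1-\epsilon)$. I would then bound: apply subadditivity $H(A_{1}^{\prime}A_{2}^{\prime})_{\sigma_{x}}\le H(A_{1}^{\prime})_{\sigma_{x}}+H(A_{2}^{\prime})_{\sigma_{x}}$ to the first term (legitimate because its coefficient, which exceeds $(1-\epsilon)+\lambda(1-2\epsilon)-\mu\epsilon$ by $(\mu+1)\epsilon(1-\epsilon)\ge0$, is nonnegative under the restriction), and use the dimension bounds $H(A_{1}^{\prime}A_{2}^{\prime})_{\bar{\sigma}}\le2$, $H(A_{k}^{\prime})_{\bar{\sigma}}\le1$, $H(A_{k}^{\prime})_{\sigma_{x}}\le1$. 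The residual coefficient of $\langle H(A_{1}^{\prime})_{\sigma_{x}}+H(A_{2}^{\prime})_{\sigma_{x}}\rangle$ is then exactly $(1-\epsilon)+\lambda(1-2\epsilon)-\mu\epsilon\ge0$, so maximizing it out at $2$ yields $2\big[\,1+(\lambda+\mu+1)(1-2\epsilon)\,\big]=2\big[\,2(1-\epsilon)+(\lambda+\mu)(1-2\epsilon)\,\big]$, which equals $2D_{\lambda,\mu}(\mathcal{N}_{\epsilon})$ because in this parameter regime the maximizer in (\ref{eq:objective-erasure}) is $p=1/2$.

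I expect the only real work to be bookkeeping: getting the erasure-pattern decomposition of $H(B)$, $H(B|X)$, $H(E|X)$ correct---in particular tracking the erasure-flag contribution and checking it cancels---and verifying that the two coefficients that must be nonnegative, $1+\lambda(1-\epsilon)^{2}-(\lambda+\mu+1)\epsilon^{2}$ and $(1-\epsilon)+\lambda(1-2\epsilon)-\mu\epsilon$, are both controlled by the hypothesis of Lemma~\ref{lem:suff-condition}: the second is that hypothesis verbatim, and the first exceeds it by $(\mu+1)\epsilon(1-\epsilon)\ge0$. Everything else---the $\ge$ direction, the reduction to the restricted parameter range, and the closing identity $1+(\lambda+\mu+1)(1-2\epsilon)=2(1-\epsilon)+(\lambda+\mu)(1-2\epsilon)$---is routine.
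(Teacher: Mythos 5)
Your proposal is correct (I checked the coefficient bookkeeping: the coefficient of the averaged joint entropy is indeed $1+\lambda(1-\epsilon)^{2}-(\lambda+\mu+1)\epsilon^{2}$, which exceeds the Lemma~\ref{lem:suff-condition} bracket by $(\mu+1)\epsilon(1-\epsilon)$, and after subadditivity the residual coefficient is exactly $(1-\epsilon)+\lambda(1-2\epsilon)-\mu\epsilon$; the closing identity and the evaluation $D_{\lambda,\mu}(\mathcal{N}_{\epsilon})=2(1-\epsilon)+(\lambda+\mu)(1-2\epsilon)$ at $p=1/2$ in the restricted regime are also right). The skeleton matches the paper's proof---expand the objective as $H(A_{1}^{\prime}A_{2}^{\prime}|X)+(\mu+1)H(B)+\lambda H(B|X)-(\lambda+\mu+1)H(E|X)$, decompose via the erasure flags, and control the sign of the critical coefficient through Lemma~\ref{lem:suff-condition}---but your finishing move is genuinely different. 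The paper passes to the Pauli-twirled augmented ensemble of the form (\ref{eq:cq-state-erasure-mixed}), rearranges the resulting bound into two copies of the single-letter expression (\ref{eq:objective-erasure}) evaluated on conditional ensembles minus the term $\left(1-\epsilon^{2}+\lambda(1-2\epsilon)-\mu\epsilon^{2}\right)I(A_{1}^{\prime};A_{2}^{\prime}|XIJ)$, and drops that term because its coefficient is nonnegative under the restriction. You instead apply subadditivity directly to the joint conditional entropy and then bound every remaining entropy by its dimension maximum, exploiting that in the allowed regime the single-letter optimizer is $p=1/2$, so the dimension bound saturates exactly at $2D_{\lambda,\mu}(\mathcal{N}_{\epsilon})$. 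This avoids the twirling trick entirely and is arguably more elementary; the paper's formulation, by reducing to single-letter formulas on conditional ensembles plus a conditional mutual information, stays closer to the template used for the Hadamard channels and does not rely on the optimizer being maximally mixed.

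One small caveat: your remark that in the complementary regime $(1-\epsilon)+\lambda(1-2\epsilon)<\mu\epsilon$ both sides collapse to $2(1+\mu)(1-\epsilon)$ ``via Lemma~\ref{thm:additive-classical-erasure}'' is not literally a proof, since that lemma only controls the Holevo term and not the coherent-information pieces appearing with negative coefficients. This is harmless: the paper itself only treats the restricted regime (invoking Lemma~\ref{lem:suff-condition} to say that is all that matters for the region), and in any case your own coefficient bookkeeping settles the complementary regime with no extra work---there one drops every entropy term (each appears with a nonpositive coefficient, possibly after the same subadditivity step), leaving exactly $2(1+\mu)(1-\epsilon)=2D_{\lambda,\mu}(\mathcal{N}_{\epsilon})$.
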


\begin{proof}
We prove the non-trivial inequality $D_{\lambda,\mu}(\mathcal{N}_{\epsilon
}\otimes\mathcal{N}_{\epsilon})\leq D_{\lambda,\mu}(\mathcal{N}_{\epsilon
})+D_{\lambda,\mu}(\mathcal{N}_{\epsilon})$. We define the following states:%
\begin{align*}
\rho^{XAA_{1}^{\prime}A_{2}^{\prime}}  &  \equiv\sum_{x}p_{X}\left(  x\right)
\left\vert x\right\rangle \left\langle x\right\vert ^{X}\otimes\phi
_{x}^{AA_{1}^{\prime}A_{2}^{\prime}},\\
\omega^{XAB_{1}E_{1}B_{2}E_{2}}  &  \equiv U_{\mathcal{N}_{\epsilon}}%
^{A_{1}^{\prime}\rightarrow B_{1}E_{1}}\otimes U_{\mathcal{N}_{\epsilon}%
}^{A_{2}^{\prime}\rightarrow B_{2}E_{2}}(\rho^{XAA_{1}^{\prime}A_{2}^{\prime}%
}),
\end{align*}
and we suppose that $\rho^{XAA_{1}A_{2}}$ is the state that maximizes
$D_{\lambda,\mu}(\mathcal{N}_{\epsilon}\otimes\mathcal{N}_{\epsilon})$.
Consider the following equality:%
\begin{align*}
&  I\left(  AX;B_{1}B_{2}\right)  _{\omega}+\lambda I\left(  A\rangle
B_{1}B_{2}X\right)  _{\omega}+\mu\left(  I\left(  X;B_{1}B_{2}\right)
_{\omega}+I\left(  A\rangle B_{1}B_{2}X\right)  _{\omega}\right) \\
&  =H\left(  A|X\right)  _{\omega}+H\left(  B_{1}B_{2}\right)  _{\omega
}-H\left(  E_{1}E_{2}|X\right)  _{\omega}+\lambda\left(  H\left(  B_{1}%
B_{2}|X\right)  _{\omega}-H\left(  E_{1}E_{2}|X\right)  _{\omega}\right) \\
&  \ \ \ \ \ \ \ \ +\mu\left(  H\left(  B_{1}B_{2}\right)  _{\omega}-H\left(
E_{1}E_{2}|X\right)  _{\omega}\right)  .
\end{align*}
It follows simply by rewriting entropies. We continue below:%
\begin{align*}
&  =H\left(  A_{1}^{\prime}A_{2}^{\prime}|X\right)  _{\rho}+\left(
1-\epsilon\right)  ^{2}H\left(  A_{1}^{\prime}A_{2}^{\prime}\right)  _{\rho
}+\epsilon\left(  1-\epsilon\right)  \left(  H\left(  A_{1}^{\prime}\right)
_{\rho}+H\left(  A_{2}^{\prime}\right)  _{\rho}\right) \\
&  \ \ \ \ \ \ -\left[  \epsilon^{2}H\left(  A_{1}^{\prime}A_{2}^{\prime
}|X\right)  _{\rho}+\epsilon\left(  1-\epsilon\right)  \left(  H\left(
A_{1}^{\prime}|X\right)  _{\rho}+H\left(  A_{2}^{\prime}|X\right)  _{\rho
}\right)  \right] \\
&  \ \ \ \ \ \ +\lambda\left[  \left(  1-\epsilon\right)  ^{2}H\left(
A_{1}^{\prime}A_{2}^{\prime}|X\right)  _{\rho}+\epsilon\left(  1-\epsilon
\right)  \left(  H\left(  A_{1}^{\prime}|X\right)  _{\rho}+H\left(
A_{2}^{\prime}|X\right)  _{\rho}\right)  \right] \\
&  \ \ \ \ \ \ -\lambda\left[  \epsilon^{2}H\left(  A_{1}^{\prime}%
A_{2}^{\prime}|X\right)  _{\rho}+\epsilon\left(  1-\epsilon\right)  \left(
H\left(  A_{1}^{\prime}|X\right)  _{\rho}+H\left(  A_{2}^{\prime}|X\right)
_{\rho}\right)  \right] \\
&  \ \ \ \ \ \ +\mu\left[  \left(  1-\epsilon\right)  ^{2}H\left(
A_{1}^{\prime}A_{2}^{\prime}\right)  _{\rho}+\epsilon\left(  1-\epsilon
\right)  \left(  H\left(  A_{1}^{\prime}\right)  _{\rho}+H\left(
A_{2}^{\prime}\right)  _{\rho}\right)  \right] \\
&  \ \ \ \ \ \ -\mu\left[  \epsilon^{2}H\left(  A_{1}^{\prime}A_{2}^{\prime
}|X\right)  _{\rho}+\epsilon\left(  1-\epsilon\right)  \left(  H\left(
A_{1}^{\prime}|X\right)  _{\rho}+H\left(  A_{2}^{\prime}|X\right)  _{\rho
}\right)  \right]  .
\end{align*}
The above equality follows by exploiting the properties of the quantum erasure
channel. Continuing, the above quantity is less than the following one:%
\begin{align*}
&  \leq2\left(  1-\epsilon\right)  +\left(  1-\epsilon^{2}\right)  H\left(
A_{1}^{\prime}A_{2}^{\prime}|XIJ\right)  _{\sigma}+\epsilon\left(
1-\epsilon\right)  \left(  H\left(  A_{1}^{\prime}|XIJ\right)  _{\sigma
}+H\left(  A_{2}^{\prime}|XIJ\right)  _{\sigma}\right) \\
&  \ \ \ \ \ \ +\lambda\left(  1-2\epsilon\right)  H\left(  A_{1}^{\prime
}A_{2}^{\prime}|XIJ\right)  _{\sigma}\\
&  \ \ \ \ \ \ +\mu\left[  2\left(  1-\epsilon\right)  -\epsilon^{2}H\left(
A_{1}^{\prime}A_{2}^{\prime}|XIJ\right)  _{\sigma}-\epsilon\left(
1-\epsilon\right)  \left(  H\left(  A_{1}^{\prime}|XIJ\right)  _{\sigma
}-H\left(  A_{2}^{\prime}|XIJ\right)  _{\sigma}\right)  \right] \\
&  =2\left(  1-\epsilon\right)  +\left(  1-\epsilon\right)  \left(  H\left(
A_{1}^{\prime}|XIJ\right)  _{\sigma}+H\left(  A_{2}^{\prime}|XIJ\right)
_{\sigma}\right)  +\lambda\left(  1-2\epsilon\right)  \left(  H\left(
A_{1}^{\prime}|XIJ\right)  _{\sigma}+H\left(  A_{2}^{\prime}|XIJ\right)
_{\sigma}\right) \\
&  \ \ \ \ \ \ +\mu\left[  2\left(  1-\epsilon\right)  -\epsilon\left(
H\left(  A_{1}^{\prime}|XIJ\right)  _{\sigma}-H\left(  A_{2}^{\prime
}|XIJ\right)  _{\sigma}\right)  \right] \\
&  \ \ \ \ \ \ -\left[  \left(  1-\epsilon^{2}+\lambda\left(  1-2\epsilon
\right)  -\mu\epsilon^{2}\right)  I\left(  A_{1}^{\prime};A_{2}^{\prime
}|XIJ\right)  _{\sigma}\right] \\
&  \leq D_{\lambda,\mu}\left(  \mathcal{N}_{\epsilon}\right)  +D_{\lambda,\mu
}\left(  \mathcal{N}_{\epsilon}\right)  -\left[  \left(  1-\epsilon
^{2}+\lambda\left(  1-2\epsilon\right)  -\mu\epsilon^{2}\right)  I\left(
A_{1}^{\prime};A_{2}^{\prime}|XIJ\right)  _{\rho}\right] \\
&  \leq D_{\lambda,\mu}\left(  \mathcal{N}_{\epsilon}\right)  +D_{\lambda,\mu
}\left(  \mathcal{N}_{\epsilon}\right)  .
\end{align*}
The first inequality follows from similar proofs we have seen for a state
$\sigma$\ of the form in (\ref{eq:cq-state-erasure-mixed}). The first equality
follows by rearranging terms. The second inequality follows from the form of
$D_{\lambda,\mu}$ in (\ref{eq:objective-erasure}). The final inequality
follows because Lemma~\ref{lem:suff-condition} states that it is sufficient to
consider $\left(  1-\epsilon\right)  +\lambda\left(  1-2\epsilon\right)
\geq\mu\epsilon$. Note that this condition implies that%
\[
1-\epsilon^{2}+\lambda\left(  1-2\epsilon\right)  \geq\mu\epsilon^{2},
\]
and hence that the quantity in square brackets in the line above the last one
is positive.
\end{proof}

\section{Conclusion}

We found a purely information theoretic approach to proving the converse part
of the dynamic capacity region. This technique should be simpler to understand
for those unfamiliar with the quantum Shannon theory literature. We also
phrased the optimization task for the full dynamic capacity region in terms of
the quantum dynamic capacity formula in (\ref{eq:objective}) and proved its
additivity (and hence single-letterization of the dynamic capacity region) for
the quantum Hadamard channels and quantum erasure channels. We note some open
problems below.

There might be room for improvement in our formulas that characterize the
dynamic capacity region when the channel is not of the Hadamard class or a
quantum erasure channel. Though, our characterization has the simple
interpretation as the regularization of what one can achieve with the
classically-enhanced father protocol~\cite{HW08GFP}\ combined with
teleportation, super-dense coding, and entanglement distribution. It is
difficult to imagine a simpler characterization than this one, despite its
multi-letter nature for the general case.

We would like to find other channels besides the Hadamard or erasure channels
for which the region single-letterizes. We conjecture that additivity of the
quantum dynamic capacity formula in (\ref{eq:objective}) holds for channels
that have hybrid Hadamard-erasure behavior such as the phase erasure channel
in Ref.~\cite{PhysRevLett.78.3217}. It would also be interesting to find
channels that are not hybrid Hadamard-erasure for which additivity of
(\ref{eq:objective}) holds.

There is also one interesting speculation to muse over that Professor David
Avis suggested to us. Do each of the inequalities in
Theorem~\ref{thm:main-theorem}\ correspond to some fundamental physical law?
This might shed further connections between information theory and physics
that have not been elucidated yet.

\section*{Acknowledgements}

We acknowledge the anonymous referee of Ref.~\cite{HW09T3}, who encouraged us
to find simpler proofs of the triple trade-off capacity regions, and the
anonymous referee of Ref.~\cite{HW08GFP}\ who questioned what the optimization
task was for entanglement-assisted communication of classical and quantum
information. MMW\ also acknowledges all of the useful discussions with Kamil
Br\'{a}dler, Patrick Hayden, and Dave Touchette during the development of
Ref.~\cite{BHTW10} and useful discussions with David Avis and Patrick Hayden
concerning the optimization task. We acknowledge Patrick Hayden for suggesting
the catalytic approach for proving the converse theorem. MMW acknowledges
support from the MDEIE (Qu\'{e}bec) PSR-SIIRI international collaboration grant.

%\bibliography{Ref}
%\bibliographystyle{plain}

\end{document}